\newtheorem{theorem}{Theorem}
\newtheorem{lemma}{Lemma}
\newtheorem{proposition}{Proposition}
\theoremstyle{definition}
\newtheorem{definition}{Definition}
\newcommand{\dist}[0]{\mathrm{dist}}
\newcommand{\rrangle}{\rangle\!\rangle}
\newcommand{\llangle}{\langle\!\langle}
\newcommand{\Ket}[1]{| #1 \rrangle}
\newcommand{\Bra}[1]{\llangle #1 |}
\begin{document}

\preprint{APS/123-QED}

\title{\textbf{Classical simulation of noisy random circuits from exponential decay of correlation} 
}%

\author{Su-un Lee}%
\email{suun@uchicago.edu}
\affiliation{Pritzker School of Molecular Engineering, The University of Chicago, Chicago, IL 60637, USA}

\author{Soumik Ghosh}
\email{soumikghosh@uchicago.edu}
\affiliation{Department of Computer Science, University of Chicago, Chicago, Illinois 60637, USA}

\author{Changhun Oh}
\affiliation{Department of Physics, Korea Advanced Institute of Science and Technology, Daejeon 34141, Republic of Korea}

\author{Kyungjoo Noh}
\affiliation{AWS Center for Quantum Computing, Pasadena, CA 91125, USA}

\author{Bill Fefferman}
 \email{wjf@uchicago.edu}
\affiliation{Department of Computer Science, University of Chicago, Chicago, Illinois 60637, USA}

\author{Liang Jiang}
 \email{liang.jiang@uchicago.edu}
\affiliation{Pritzker School of Molecular Engineering, The University of Chicago, Chicago, IL 60637, USA}

\date{\today}

\begin{abstract}
    We study the classical simulability of noisy random quantum circuits under general noise models. While various classical algorithms for simulating noisy random circuits have been proposed, many of them rely on the anticoncentration property, which can fail when the circuit depth is small or under realistic noise models. We propose a new approach based on the exponential decay of conditional mutual information (CMI), a measure of tripartite correlations. We prove that exponential CMI decay enables a classical algorithm to sample from noisy random circuits---in polynomial time for one dimension and quasi-polynomial time for higher dimensions---even when anticoncentration breaks down. To this end, we show that exponential CMI decay makes the circuit depth effectively shallow, and it enables efficient classical simulation for sampling. We further provide extensive numerical evidence that exponential CMI decay is a universal feature of noisy random circuits across a wide range of noise models. Our results establish CMI decay, rather than anticoncentration, as the fundamental criterion for classical simulability, and delineate the boundary of quantum advantage in noisy devices.
\end{abstract}

\maketitle

\section{Introduction}

Demonstrating quantum advantage using near-term quantum devices remains a central goal in quantum information science. These devices, often characterized by a limited number of qubits and nonnegligible noise, do not yet support fault-tolerant quantum computation~\cite{Preskill2018quantumcomputingin}. Nevertheless, researchers aim to identify specific computational tasks where these noisy intermediate-scale quantum (NISQ) devices can outperform classical computation. Among the prominent proposals is random circuit sampling (RCS), a task conceived to be intractable for classical computation under plausible complexity-theoretic assumptions~\cite{Aaronson2011computational,Bouland2019OnComplexityVerificationQuantum2019}, and have demonstrated by landmark experiments~\cite{Arute2019quantumsupremacy, MorvanPhaseTransitionsRandom2024, Zhu2022quantumcomputationaladvantage, DeCrossComputationalPowerRandom2025}.

However, the classical hardness of RCS in the presence of noise remains unclear. Recent advances in classical algorithms show that noisy RCS can be efficiently simulated classically if the output distribution remains sufficiently ``flat,'' i.e., it exhibits anticoncentration~\cite{AharonovPolynomialTimeClassical2023}. Moreover, it has been shown that noisy RCS under certain entropy-increasing (unital) noise channels exhibits anticoncentration above logarithmic depth~\cite{deshpande2022tightboundsnoisy, dalzell2022randomquantumanticoncentrate}. Yet the classical simulability of noisy RCS at shallow depths or under general noise models is still open. For shallow-depth circuits, it is conjectured that noiseless RCS becomes classically hard once the depth exceeds a constant threshold $d^*=\Theta(1)$~\cite{nappEfficientClassicalSimulation2022, mcginley2025measurementinduced, watts2025quantumadvantage}. Since such circuits can be realized with relatively high fidelity, one might expect shallow-depth noisy RCS to retain quantum advantage against moderate noise levels~\cite{cheng2023efficientsampling}. In addition, non-unital noise---ubiquitous in quantum devices due to $T_1$ decay and readout bias (e.g.,~\cite{Arute2019quantumsupremacy, MorvanPhaseTransitionsRandom2024, Zhu2022quantumcomputationaladvantage, DeCrossComputationalPowerRandom2025})---reduces entropy and drastically alters the output distribution. Indeed, Ref.~\cite{FeffermanEffectNonunitalNoise2024} shows that non-unital noise can break anticoncentration, thereby invalidating existing simulation frameworks~\cite{AharonovPolynomialTimeClassical2023, schuster2024polynomialtimeclassicalalgorithmnoisy, aharonov1996polynomialsimulationsdecoheredquantum}.

Meanwhile, exponential decay of correlation is a central phenomenon in many-body physics with both fundamental and practical implications~\cite{Brand_o_2014, Bluhm2022exponentialdecayof}. A widely accepted measure of such correlation is the conditional mutual information (CMI), which captures tripartite dependencies between subsystems. Physical intuition and prior results suggest that CMI between distant regions is unstable under local perturbations and decays exponentially in many physically relevant settings~\cite{KatoQuantumApproximateMarkovChains2019,brown2012quantummarkovnetworkscommuting,Jouneghani2014investigationcommutinghamiltonian,kuwahara2024clusteringconditionalmutualinformation, chen2025quantumgibbsstateslocally, kato2025clusteringconditionalmutualinformation}. Motivated by these observations, CMI decay is often taken to be a key assumption in proposals for topological phases of matter~\cite{ShiEntanglementBootstrapApproach2021, yang2025topologicalmixedstatesaxiomatic, ShiDomainWallTopologicalEntanglement2021, SangStabilityMixedStateQuantum2025, negari2025spacetimemarkovlengthdiagnostic}. From a computational standpoint, Ref.~\cite{nappEfficientClassicalSimulation2022} proved that assuming exponential CMI decay, there exists an efficient classical algorithm to sample from the output distribution of a 2D shallow-depth quantum circuit. However, the runtime of that algorithm scales exponentially with the circuit depth and thus does not suffice to simulate deep random circuits.

In this work, we incorporate these physical insights into RCS, and show that exponential decay of CMI implies classical simulability for noisy RCS, regardless of the specific noise model and circuit depth. Our approach relies on two key steps: (i) we rigorously establish that exponential CMI decay implies that the output distributions can be approximated by those of shallow circuits of depth $O(\log^D n)$ for $D$-dimensional random circuits; and (ii) we prove that such shallow circuits with decaying CMI can be efficiently simulated classically---polynomial time in 1D and quasi-polynomial time in higher dimensions. Importantly, our algorithm applies to any noise channel that yields exponential CMI decay, even in regimes where anticoncentration fails.

Although a general proof of CMI decay for arbitrary noise models remains open, we provide extensive numerical evidence supporting its universality in noisy random circuits. Specifically, we observe exponential CMI decay in 1D Haar-random circuits and 2D Clifford-random circuits across a wide range of noise models, including both unital and non-unital channels. These findings indicate that exponential CMI decay is an intrinsic property of noisy random circuits and provide strong evidence for the classical simulability of RCS in noisy devices.

\section{Problem setup}

\begin{figure}
    \centering
    \includegraphics[width=\linewidth]{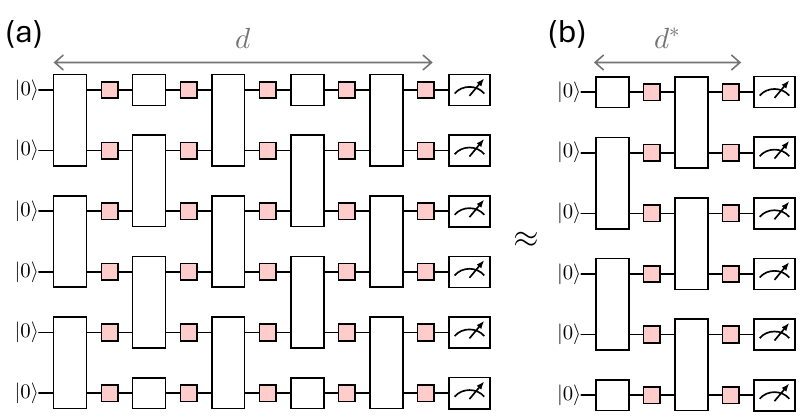}
    \caption{(a) A depth-$d$ noisy random circuit consists of alternating layers of unitary gates (white boxes) and noise channels (red boxes) (b) Assuming exponential decay of CMI, the output distribution can be approximated by a depth-$d^* = O(\log^D (n/\varepsilon))$ circuit.}
    \label{fig:circuits}
\end{figure}

We consider a noisy random circuit on $n$ qubits arranged on a $D$-dimensional lattice $\Lambda$, initialized in the product state $\ket{0^n}$. The circuit consists of $d$ alternating layers of two-qubit unitary gates and single-qubit noise channels. Each unitary layer $U_i$ applies two-qubit gates to disjoint neighboring pairs in $\Lambda$, drawn independently from a distribution forming at least a unitary 2-design (e.g., the Haar measure on $U(4)$). After each unitary layer, every qubit undergoes a single-qubit noise channel $\mathcal{N}$, leading to the overall evolution described by
\begin{equation}
\mathcal{C} = \mathcal{N}^{\otimes n} \circ \mathcal{U}_d \circ \dots \circ \mathcal{N}^{\otimes n} \circ \mathcal{U}_1,
\label{eq:noisy_random_circuit}
\end{equation}
where $\mathcal{U}_i(\rho) = U_i \rho U_i^\dagger$. The final state $\mathcal{C}(\ketbra{0^n}{0^n})$ is measured in the computational basis [Fig.~\ref{fig:circuits}(a)]. We assume that the noise channel $\mathcal{N}$ is fixed (independent of system size), corresponding to a constant noise rate, and is not unitary, i.e., $\mathcal{N}(\cdot) \ne U(\cdot)U^\dagger$ for any unitary $U$. This includes non-unital noise models such as amplitude damping and reset channels, which capture relevant physical mechanisms in current quantum devices.

The goal of our classical algorithm is to approximately sample from the output distribution $P:\{0,1\}^n \to [0,1]$ of the state $\mathcal{C}(\ketbra{0^n}{0^n})$ in small total variation distance. The correlational structure of $P$, characterized by CMI, plays a central role in enabling our algorithm. For subsets $X, Y, Z \subset \Lambda$, the CMI is defined as
\begin{equation}
I(X:Z|Y) = H(XY) + H(YZ) - H(XYZ) - H(Y),
\end{equation}
where $H(X) = -\sum_{x \in \{0,1\}^{|X|}} P_X(x)\log_2 P_X(x)$ denotes the Shannon entropy of the marginal distribution $P_X$. We now define the \emph{approximate Markov condition}, adapting the formulation from Ref.~\cite{brandaoFiniteCorrelationLength2019} to our setting:

\begin{definition}
    Let $P$ be the output distribution of a quantum circuit on $\Lambda$. For a function $\eta : \mathbb{R}_{\ge 0} \rightarrow \mathbb{R}_{\ge 0}$, we say that $P$ satisfies the \emph{$\eta(\ell)$-approximate Markov condition} if, for any tripartition $X \sqcup Y \sqcup Z \subset \Lambda$ with $\dist(X, Z) \ge \ell$,
    \begin{equation}
        I(X:Z|Y) \le \eta(\ell).
    \end{equation}
    If $P$ arises as the output distribution of a random quantum circuit $\mathcal{C}$, we say that it satisfies the \emph{average $\eta(\ell)$-approximate Markov condition} if, for all such tripartitions,
    \begin{equation}
        \mathbb{E}_{\mathcal{C}} \, I(X:Z|Y) \le \eta(\ell),
    \end{equation}
    where the expectation is taken over the choice of $\mathcal{C}$.
\end{definition}

Here, $\dist(X,Z)$ is defined as $\min_{x\in X, z \in Z}\dist(x,z)$ where $\dist(x,z)$ is the shortest path between $x, z$ in $\Lambda$. This condition implies that correlations between $X$ and $Z$, conditioned on $Y$, decay with separation. Specifically, Pinsker's inequality~\cite{coverElementsInformationTheory2006} yields
\begin{equation}
    I(X:Z|Y) \ge \frac{1}{2\ln2} \| P_{XYZ} - P_X P_{Y|X} P_{Z|Y} \|_1^2,
    \label{eq:pinsker}
\end{equation}
indicating that small CMI implies approximate conditional independence. Here, $\|\cdot\|_1$ denotes the $\ell^1$ norm (twice the total variation distance), and $P_{Y|X}(y|x) = P_{XY}(x,y)/P_X(x)$ is the conditional distribution of $Y$ given $X$, and similarly for $P_{Z|Y}$.

\begin{figure}
    \centering
    \includegraphics[width=\linewidth]{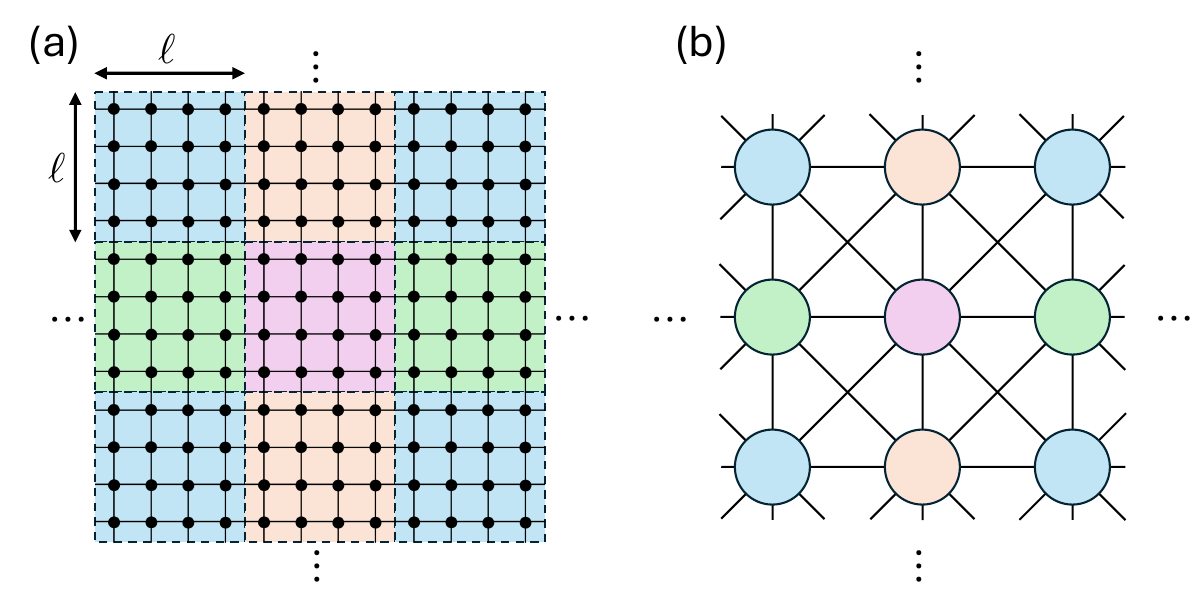}
    \caption{Example of coarse-graining a 2D grid. (a) A 2D grid is partitioned into disjoint squares of side length $\ell$. (b) The coarse-grained graph $G = (V, E)$, where each vertex corresponds to a hypercube of side length $\ell$, and edges are drawn between two vertices if the distance between them is less than $\ell$.}
    \label{fig:2Dpatches}
\end{figure}

To analyze distributions satisfying the $\eta(\ell)$-approximate Markov condition, we coarse-grain the lattice $\Lambda$ into a graph $G = (V, E)$ as follows. We partition $\Lambda$ into disjoint hypercubes of side length $\ell$, each containing $\ell^D$ qubits, and denote the set of hypercubes by $V$. For each pair $X, Y \in V$, we draw an edge $\{X, Y\} \in E$ if $\dist(X, Y) < \ell$. This coarse-grained structure defines a graph $G$ whose connectivity reflects short-range correlations in the original circuit [Fig.~\ref{fig:2Dpatches}].

Any probability distribution $P$ over $\Lambda$ can be decomposed, under an ordering $V = {X_1, X_2, \dots, X_{|V|}}$, into a chain of conditional distributions:
\begin{equation}
    P = \prod_{i=1}^{|V|} P_{X_i| X_{<i}},
\end{equation}
where $X_{<i} = \bigcup_{j=1}^{i-1}X_j$. Sampling from $P$ in this form requires conditioning on the full history $X_{<i}$, which is generally intractable. However, the coarse-grained graph $G = (V, E)$ provides simplifications: (i) by construction, any two non-neighboring partitions in $V$ are at least $\ell$ apart, and (ii) each partition $X \in V$ has a constant number of neighbors: denoting $N(X)=\{Y\in V: \{X, Y\}\in E\}$, we have $|N(X)| = 3^D - 1$. These properties allow the global conditional to be approximated by a local conditional, yielding the following:

\begin{proposition}
    Let $P$ be a probability distribution that satisfies the $\eta(\ell)$-approximate Markov condition. Then
    \begin{equation}
        \left\| P - \prod_{i=1}^{|V|} P_{X_i|N'(X_i)} \right\|_1 \le O(n/\ell^D) \sqrt{\eta(\ell)},
    \end{equation}
    where $N'(X_i) = N(X_i) \cap X_{<i}$.
    \label{prop:patches}
\end{proposition}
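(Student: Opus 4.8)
The plan is to approximate each conditional $P_{X_i|X_{<i}}$ in the chain-rule decomposition of $P$ by its local surrogate $P_{X_i|N'(X_i)}$, to control each replacement error with the approximate Markov condition, and then to glue the $|V|$ per-block estimates together through a hybrid (telescoping) argument. So the proof splits into a local step (bounding a single conditional replacement) and a global step (accumulating the errors).

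For the local step, I would fix a block $X_i$ and form the tripartition $X=X_i$, $Y=N'(X_i)$, and $Z=X_{<i}\setminus N'(X_i)$, so that $Y\sqcup Z=X_{<i}$. By the coarse-graining rule, every block in $Z$ is a \emph{non}-neighbor of $X_i$ and hence sits at distance at least $\ell$ from it, so $\dist(X_i,Z)\ge\ell$ and the approximate Markov condition gives $I(X_i:Z|Y)\le\eta(\ell)$. I would then invoke the standard identity expressing conditional mutual information as an averaged relative entropy,
\[
  I(X_i:Z|Y)=\mathbb{E}_{(y,z)\sim P_{YZ}}\,D\!\left(P_{X_i|yz}\,\|\,P_{X_i|y}\right),
\]
and apply Pinsker's inequality followed by Jensen's inequality (convexity of $t\mapsto t^2$). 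Since $YZ=X_{<i}$ identifies $P_{X_i|yz}$ with $P_{X_i|X_{<i}}$ and $P_{X_i|y}$ with $P_{X_i|N'(X_i)}$, this yields
\[
  \mathbb{E}_{x_{<i}\sim P_{X_{<i}}}\left\|P_{X_i|X_{<i}}-P_{X_i|N'(X_i)}\right\|_1\le\sqrt{2\ln 2\,\eta(\ell)}.
\]

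For the global step I would introduce the hybrid distributions
\[
  Q^{(k)}=\Big(\prod_{i=1}^{k}P_{X_i|X_{<i}}\Big)\Big(\prod_{i=k+1}^{|V|}P_{X_i|N'(X_i)}\Big),
\]
which interpolate between $Q^{(0)}=\prod_i P_{X_i|N'(X_i)}$ and $Q^{(|V|)}=P$, so that the triangle inequality reduces the target norm to $\sum_{k=1}^{|V|}\|Q^{(k)}-Q^{(k-1)}\|_1$. Consecutive hybrids differ only in their $k$-th factor; the common prefix $\prod_{i<k}P_{X_i|X_{<i}}$ equals the exact marginal $P_{X_{<k}}$, while the common suffix $\prod_{i>k}P_{X_i|N'(X_i)}$ is a normalized conditional that sums to one. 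Summing out the suffix collapses each telescoping term exactly to $\mathbb{E}_{x_{<k}\sim P_{X_{<k}}}\|P_{X_k|X_{<k}}-P_{X_k|N'(X_k)}\|_1$, which the local step bounds. Adding the $|V|=n/\ell^D$ identical estimates gives $\|P-\prod_i P_{X_i|N'(X_i)}\|_1\le (n/\ell^D)\sqrt{2\ln 2\,\eta(\ell)}=O(n/\ell^D)\sqrt{\eta(\ell)}$.

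The delicate point is the \emph{choice of hybrid ordering}. The local estimate is an average over the \emph{true} marginal $P_{X_{<k}}$, so the interpolation must retain the exact conditionals on the prefix (thereby reproducing $P_{X_{<k}}$) rather than the local surrogates; the naive telescoping would instead weight the $k$-th error by the hybrid marginal, against which no direct Pinsker bound is available. The one bookkeeping subtlety I expect to verify carefully is that each suffix product $\prod_{i>k}P_{X_i|N'(X_i)}$ is genuinely a normalized conditional distribution over $X_{>k}$ given $X_{\le k}$—evaluating the factors in decreasing index order, each sums to one—so that marginalizing it away contributes a clean factor of one and does not reintroduce any dependence on the replaced block.
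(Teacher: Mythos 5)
Your proof is correct and follows essentially the same route as the paper's: both control the single-block replacement error via Pinsker's inequality applied to $I\big(X_j : X_{<j}\setminus N'(X_j)\,\big|\,N'(X_j)\big)\le\eta(\ell)$ and then accumulate the $|V|=O(n/\ell^D)$ per-block errors by telescoping. The only difference is bookkeeping---you telescope over full-support hybrids and marginalize out the suffix (which sums to one), whereas the paper telescopes over the growing prefix marginals $P_{X_{<j}}$ and pushes the accumulated error forward with the data-processing inequality---but the two are equivalent.
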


\section{Main results}

Our main result shows that if the output distribution of a noisy random circuit satisfies an exponentially decaying CMI (i.e., approximate Markov condition), then it can be classically simulated in polynomial time for one-dimensional systems and quasi-polynomial time for higher dimensions. This result relies on two key ingredients: (i) the approximate Markov condition implies that a noisy circuit of arbitrary depth can be approximated by one of depth $d^* = O(\log^D(n/\varepsilon))$; and (ii) the approximate Markov condition also ensures that such shallow circuits can be simulated classically. Together, these results establish that the output distribution of a noisy random circuit can be sampled classically with high accuracy.

We begin by establishing that the approximate Markov condition implies that a deep noisy circuit can be effectively approximated by a shallow one.

\begin{theorem}
    Let $\rho$ and $\sigma$ be arbitrary density matrices over the qubits in a $D$-dimensional grid $\Lambda$, and let $\mathcal{C}$ be a depth-$d$ noisy random circuit. Denote the output distributions of $\mathcal{C}(\rho)$ and $\mathcal{C}(\sigma)$ by $P$ and $Q$, respectively. Suppose both distributions satisfy the average $\mathrm{poly}(n)\exp(-\Omega(\ell))$-approximate Markov condition for all $\ell$. Then, for any $\varepsilon > 0$, there exists $d^* = O(\log^D(n/\varepsilon))$ such that
    \begin{equation}
        \mathbb{E}_{\mathcal{C}}\|P-Q\|_1  \le \varepsilon,
    \end{equation}
    for all $d \ge d^*$.
    \label{thm:effective_shallow_depth}
\end{theorem}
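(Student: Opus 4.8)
The plan is to first reduce the claim to the case $d = d^*$. Since the last $d^*$ layers of $\mathcal{C}$ themselves form a depth-$d^*$ noisy random circuit $\mathcal{C}^*$ drawn from the same distribution, and the first $d-d^*$ layers merely map $\rho,\sigma$ to two other density matrices $\rho' = \mathcal{C}_{\mathrm{bot}}(\rho)$ and $\sigma' = \mathcal{C}_{\mathrm{bot}}(\sigma)$, it suffices to prove the bound for the outputs of $\mathcal{C}^*(\rho')$ and $\mathcal{C}^*(\sigma')$ with $\rho',\sigma'$ \emph{arbitrary}. The statement for all $d \ge d^*$ then follows automatically, since prepending an extra layer is a channel and can only contract every distance that appears below.

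Next I would invoke the Markov hypothesis through Proposition~\ref{prop:patches}. Coarse-graining $\Lambda$ at scale $\ell = \Theta(\log(n/\varepsilon))$, Proposition~\ref{prop:patches} gives $\mathbb{E}_\mathcal{C}\|P - \prod_i P_{X_i|N'(X_i)}\|_1 \le O(n/\ell^D)\sqrt{\eta(\ell)}$ (using Jensen's inequality to pass the expectation through the square root, together with the average $\eta(\ell)$-approximate Markov condition), and likewise for $Q$; with $\eta(\ell)=\mathrm{poly}(n)e^{-\Omega(\ell)}$ this is at most $\varepsilon/3$ for a suitable constant in $\ell$. By the triangle inequality it then remains to bound the difference of the two products of local conditionals, which telescopes into a sum over patches of the distance between $P_{X_i|N'(X_i)}$ and $Q_{X_i|N'(X_i)}$. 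Each such conditional is a function only of the reduced state of the output on the region $R_i = X_i \cup N'(X_i)$, a ball of $O(\ell^D)$ qubits, so the problem reduces to showing that these \emph{local} reduced states forget the input to $\mathcal{C}^*$.

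The heart of the argument, and the step I expect to be hardest, is this local forgetting: I must show that $\mathbb{E}_\mathcal{C}\|\mathrm{Tr}_{R_i^c}\mathcal{C}^*(\rho') - \mathrm{Tr}_{R_i^c}\mathcal{C}^*(\sigma')\|_1$ is small once $d^* = O(\ell^D)$. Passing to the Heisenberg picture, this trace distance equals $\max_{O}\mathrm{Tr}[(\rho'-\sigma')(\mathcal{C}^{*\dagger}(O) - c_O I)]$ over observables $O$ supported on $R_i$, where the identity part $c_O I$ drops out because $\rho'-\sigma'$ is traceless. The key physical input is that each layer of $\mathcal{N}^{\otimes n}$ strictly contracts every non-identity Pauli component, whereas the $2$-design unitary layers preserve the Hilbert--Schmidt weight in expectation; standard operator-spreading estimates then yield an expected per-layer contraction $(1-\delta)$ of the non-identity weight of $\mathcal{C}^{*\dagger}(O)$, and the same contraction suppresses any weight supported more than $O(\ell)$ from $R_i$, so the evolved observable remains effectively localized to $O(\ell^D)$ qubits. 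Converting the resulting Hilbert--Schmidt estimate into a trace-distance bound costs a factor $2^{\Theta(\ell^D)}$ from the local Hilbert-space dimension; balancing this against $(1-\delta)^{d^*}$ forces $d^* = O(\ell^D) = O(\log^D(n/\varepsilon))$, which is exactly the claimed depth. The delicate points are precisely (i) controlling the operator spread so that the dimension factor is set by the local volume $\ell^D$ rather than by the full backward light-cone volume, and (ii) establishing a constant expected contraction rate for a general, possibly non-unital, channel $\mathcal{N}$.

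Finally I would reassemble the pieces. The per-patch reduced-state bound implies that the corresponding conditionals are close (a routine estimate, modulo handling conditionals with small denominators); summing over the $O(n/\ell^D)$ patches contributes at most $\varepsilon/3$, and combined with the two Proposition~\ref{prop:patches} errors this gives $\mathbb{E}_\mathcal{C}\|P-Q\|_1 \le \varepsilon$. Since measuring in the computational basis can only decrease trace distance, the density-matrix bounds transfer directly to the output marginals, completing the argument.
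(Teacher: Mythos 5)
Your proposal follows essentially the same route as the paper: coarse-grain at scale $\ell=\Theta(\log(n/\varepsilon))$, use the approximate Markov condition (Proposition~\ref{prop:patches}) to replace $P$ and $Q$ by products of local conditionals, reduce the remaining comparison to reduced states on $O(\ell^D)$-qubit regions, and show those forget the input at rate $c^{d}$ via Pauli contraction under non-unitary noise (the paper imports this as Proposition~\ref{prop:mele} from Ref.~\cite{mele2024noiseinducedshallowcircuitsabsence} rather than re-deriving it), then balance $2^{O(\ell^D)}c^{d/2}$ against $\varepsilon$ to get $d^*=O(\ell^D)$. Two calibration remarks. First, the step you dismiss as ``a routine estimate, modulo handling conditionals with small denominators'' is where the paper's Lemma~\ref{lem:bounding_from_marginals} does its real work: a uniform bound on $\|P_{X_i|N'(X_i)}-Q_{X_i|N'(X_i)}\|_1$ is false in general, and the paper never proves one --- instead the telescoping naturally produces the weighted quantity $\|P_{X_{<j}}(P_{X_j|N'(X_j)}-Q_{X_j|N'(X_j)})\|_1$, which Proposition~\ref{prop:reducing_marginals} collapses to a weighting by $P_{N'(X_j)}$ alone, and a triangle inequality then converts everything into differences of \emph{marginals} $\|P_{X_j\sqcup N'(X_j)}-Q_{X_j\sqcup N'(X_j)}\|_1+\|P_{N'(X_j)}-Q_{N'(X_j)}\|_1$, so small denominators never arise. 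Second, your worry (i) about controlling operator spreading so that the dimension factor is $2^{\Theta(\ell^D)}$ rather than the light-cone volume is unnecessary: the paper simply applies Cauchy--Schwarz to the reduced state on $A=R_i$, which automatically restricts the Pauli sum to operators supported on $A$ and yields the factor $2^{|A|}$ with no spreading analysis. Your reduction to $d=d^*$ at the outset is harmless but also not needed, since the contraction bound only improves with $d$.
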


This directly implies that the output distribution of a deep noisy circuit can be approximated by that of a shallow one. To see this, we decompose $\mathcal{C}$ as $\mathcal{C} = \mathcal{C}_2 \circ \mathcal{C}_1$, where $\mathcal{C}_1$ consists of the first $d - d^*$ layers and $\mathcal{C}_2$ consists of the final $d^*$ layers:
\begin{align}
    \mathcal{C}_1 &= \mathcal{N}^{\otimes n} \circ \mathcal{U}_{d-d^*} \circ \cdots \circ \mathcal{N}^{\otimes n} \circ \mathcal{U}_{1}, \\
    \mathcal{C}_2 &= \mathcal{N}^{\otimes n} \circ \mathcal{U}_{d} \circ \cdots \circ \mathcal{N}^{\otimes n} \circ \mathcal{U}_{d - d^* + 1}.
\end{align}
Let $\rho = \mathcal{C}_1(\ketbra{0^n}{0^n})$ be the intermediate state after applying $\mathcal{C}_1$, and define $P = \mathcal{C}_2(\rho)$ and $Q = \mathcal{C}_2(\ketbra{0^n}{0^n})$. If both $P$ and $Q$ satisfy ${\rm poly}(n)\exp(-\Omega(\ell))$-approximate Markov condition averaged over $\mathcal{C}_2$, then Theorem~\ref{thm:effective_shallow_depth} implies that $\mathbb{E}_{\mathcal{C}}\|P - Q\|_1 \le \varepsilon$, for $d^* = O(\log^D(n/\varepsilon))$ [Fig.~\ref{fig:circuits}(b)]. A sketch of the proof is given below, leaving the complete proof to Appendix~\ref{sec:proof-theorem1}.

\begin{proof}[Sketch of Proof]
    The proof proceeds by combining two key observations. First, under the ${\rm poly}(n)\exp(-\Omega(\ell))$-approximate Markov condition for both $P$ and $Q$, Proposition~\ref{prop:patches} implies that for $\ell = O(\log(n))$, both distributions are approximated as
    \begin{equation}
        P \approx \prod_{i=1}^{|V|} P_{X_i | N'(X_i)}, \quad Q \approx \prod_{i=1}^{|V|} Q_{X_i | N'(X_i)}.
    \end{equation}
    This structure allows us to bound the distance $\|P - Q\|_1$ in terms of local marginals:
    \begin{multline}
        \|P - Q\|_1 \lessapprox \sum_{i=1}^{|V|} \big(
            \|P_{X_i \cup N'(X_i)} - Q_{X_i \cup N'(X_i)}\|_1 \\
            + \|P_{N'(X_i)} - Q_{N'(X_i)}\|_1
    \big).
    \label{eq:bounding_from_marginals_main}
    \end{multline}

    Second, we show that the differences between corresponding marginals $\|P_X - Q_X\|_1$ decay rapidly with circuit depth. In particular, the difference between expectation values of local observables for $\mathcal{C}(\rho)$ and $\mathcal{C}(\ketbra{0^n}{0^n})$ vanishes with depth~\cite{mele2024noiseinducedshallowcircuitsabsence}. This implies that local marginals become indistinguishable after depth $O(|X|)$.

    Since both $|X_i \cup N'(X_i)|$ and $|N'(X_i)|$ are $O(\ell^D)$, we conclude that $\|P - Q\|_1 \le \varepsilon$ for depth $d^* = O(\log^D(n/\varepsilon))$.
\end{proof}

Although the approximate Markov condition implies that noisy circuits can be reduced to shallow depth, shallow circuits alone are not necessarily classically simulable. Indeed, certain families of shallow-depth circuits are known or conjectured to be classically hard to simulate (see, e.g., \cite{GaoQuantumSupremacySimulating2017,BermejoVegaArchitecturesQuantumSimulation2018,HaferkampClosingGapsQuantum2020}). Nevertheless, we show that shallow-depth circuits satisfying the approximate Markov condition can be simulated efficiently. This is achieved by generalizing the “Patching” algorithm of Ref.~\cite{nappEfficientClassicalSimulation2022}.

\begin{theorem}
    Let $P$ be the output distribution of a depth-$d$ quantum circuit on a $D$-dimensional grid $\Lambda$. If $P$ satisfies the $\mathrm{poly}(n)\exp(-\Omega(\ell))$-approximate Markov condition for all $\ell$, then there exists a classical algorithm that outputs samples from a distribution $P'$ satisfying $\|P - P'\|_1 \le \varepsilon$, in runtime $n\cdot \exp\left({O(d \cdot (d + \log (n/\varepsilon))^{D-1})}\right)$.
    \label{thm:shallow-depth_algorithm}
\end{theorem}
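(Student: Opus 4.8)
The plan is to sample from $P$ patch by patch using the factorized approximation of Proposition~\ref{prop:patches}, computing every required local conditional \emph{exactly} by a space-time tensor-network contraction of the circuit restricted to a backward light cone. This generalizes the ``Patching'' algorithm of Ref.~\cite{nappEfficientClassicalSimulation2022}; the essential new ingredient is a contraction order whose cost scales with the depth $d$ rather than with the full light-cone volume.

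First I would fix the coarse-graining length $\ell = \Theta(\log(n/\varepsilon))$. Since $\eta(\ell) = \mathrm{poly}(n)\exp(-\Omega(\ell))$, Proposition~\ref{prop:patches} then gives $\|P - \widetilde{P}\|_1 \le \varepsilon$ with $\widetilde{P} := \prod_{i=1}^{|V|} P_{X_i \mid N'(X_i)}$, so it suffices to sample exactly from $\widetilde{P}$. I process the hypercubes $X_1, X_2, \dots$ in order; having fixed the bits on $N'(X_i)$ from earlier patches, I sample the $\ell^D$ qubits of $X_i$ one at a time, each step needing a single conditional marginal probability on $R := X_i \cup N(X_i)$. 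Every such probability is, after projecting the conditioned legs and tracing the rest, a fixed entry of the reduced state $\rho_R = \mathrm{Tr}_{\bar{R}}[\mathcal{C}(\ketbra{0^n}{0^n})]$, so it incurs \emph{no} approximation beyond Proposition~\ref{prop:patches}: the sampler's total-variation error equals $\|P-\widetilde{P}\|_1 \le \varepsilon$.

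The crux is computing these marginals efficiently. Because every gate and single-qubit channel is trace preserving, $\rho_R$ depends only on the backward light cone of $R$, a $D$-dimensional box of side $L = O(\ell + d) = O(d + \log(n/\varepsilon))$. Rather than evolving the $\exp(O(L^D))$-dimensional density matrix layer by layer, I would contract the doubled (ket--bra) space-time tensor network of the restricted circuit by sweeping along one spatial axis. The interface at each sweep step is a $(D-1)$-dimensional spatial cross-section of $O(L^{D-1})$ sites extended over the $d$ time steps, and is crossed by $O(d\,L^{D-1})$ qubit legs; hence the boundary tensor has bond dimension $\exp(O(d\,L^{D-1}))$ and the entire contraction—producing one scalar marginal—costs $\exp(O(d\,L^{D-1}))$. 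Summing over $O(n/\ell^D)$ patches and the $O(\ell^D)$ per-qubit contractions within each patch, the $\ell^D$ factors cancel and the total runtime is $n\cdot\exp(O(d\,(d+\log(n/\varepsilon))^{D-1}))$, as claimed; crucially $\rho_R$ is never materialized in full, which matters precisely in the low-depth regime $d < \log(n/\varepsilon)$ where storing it would cost $\exp(O(\ell^D))$.

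I expect the main obstacle to be rigorously establishing the bond-dimension bound $\exp(O(d\,L^{D-1}))$—that the space-time cut meets only $O(d\,L^{D-1})$ legs—and verifying that this contraction order is valid for a density-matrix (rather than pure-state) network containing non-unital channels. This is exactly where the depth $d$, in place of the light-cone side $L$, enters the exponent and yields the improvement over naive full-volume simulation. A secondary point needing care is the light-cone truncation itself for non-unital noise: one must check, layer by layer, that channels outside the backward light cone of $R$ cancel under the partial trace, which follows from trace preservation but is worth confirming explicitly.
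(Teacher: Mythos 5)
Your proposal is correct and follows essentially the same route as the paper: fix $\ell = \Theta(\log(n/\varepsilon))$, invoke Proposition~\ref{prop:patches} to reduce to sampling from $\prod_i P_{X_i|N'(X_i)}$, and compute each local conditional by tensor-network contraction of the backward light cone (a box of side $O(\ell+d)$), which the paper handles via the treewidth-based contraction of Ref.~\cite{MarkovShiSimulatingQuantumComputation2008} with the same $\exp(O(d(\ell+2d)^{D-1}))$ cost; the light-cone truncation for non-unital channels is likewise justified there purely by trace preservation, exactly as you anticipate.
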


We begin by approximating the target distribution $P$ with $P' = \prod_{i=1}^{|V|} P_{X_i|N'(X_i)}$ by choosing $\ell = O(\log(n/\varepsilon))$ (as justified by Proposition~\ref{prop:patches}). Our goal is to sample from $P'$, which factorizes over local conditional distributions. The algorithm proceeds iteratively as follows.

Suppose we have already sampled $X_1, \dots, X_{j-1}$ from $\prod_{i=1}^{j-1}P_{X_i|N'(X_i)}$. We then sample qubits in $X_j$ conditioned on the outcomes in $N'(X_j)$, denoted $x \in \{0,1\}^{M}$, where $M$ is the number of qubits contained in $N'(X_j)$. The conditional distribution $P_{X_j|N'(X_j) = x}$ is then computed by simulating the region $X_j \cup N'(X_j)$ along with its lightcone [Appendix~\ref{sec:preliminaries}], which is contained within a $D$-dimensional hypercube containing at most $(3\ell + 2d)^D$ qubits. This simulation step can be implemented using standard tensor contraction techniques (e.g., Corollary 1.5 in~\cite{MarkovShiSimulatingQuantumComputation2008}) in time $\exp(O(d(\ell + 2d)^{D-1}))$. 

After obtaining $x_j \sim P_{X_j|N'(X_j)=x}$, we proceed to the next partition $X_{j+1}$. Once all $X_i$ have been sampled, we obtain a full sample from $P'$. Since there are $O(n/\ell^D)$ partitions, the total runtime is $n/\ell^D \cdot \exp(O(d(\ell + 2d)^{D-1}))$, which establishes Theorem~\ref{thm:shallow-depth_algorithm}.

Finally, we combine Theorems~\ref{thm:effective_shallow_depth} and~\ref{thm:shallow-depth_algorithm} to establish our main result. Theorem~\ref{thm:effective_shallow_depth} guarantees that, for $d^* = O(\log^D(n/\varepsilon))$, the output distribution $P$ of a noisy random circuit can be approximated by the output distribution $P'$ of a depth-$d^*$ circuit. Furthermore, since $P'$ also supposed to satisfy the approximate Markov property, Theorem~\ref{thm:shallow-depth_algorithm} ensures that $P'$ can be sampled classically in time $\exp\left(O\left(\log^{D^2} (n/\varepsilon)\right)\right)$. Deferring the formal proof to Appendix~\ref{sec:proof-theorem3}, we summarize our main result as follows:

\begin{theorem}[Main Theorem]
    Let $\mathcal{C}$ be a depth-$d$ noisy random circuit on a $D$-dimensional grid $\Lambda$, and $P$ be the output distribution of $\mathcal{C}(\ketbra{0^n}{0^n})$. Suppose $P$ satisfies the average ${\rm poly}(n)\exp(-\Omega(\ell))$-approximate Markov condition for all $\ell$ and $d$. Then there exists a classical algorithm that outputs a sample from $P'$ such that $\|P-P'\|_1 \le \varepsilon$ with probability at least $1-\delta$ over the choice of $\mathcal{C}$, in runtime ${\rm poly}(n, 1/\varepsilon, 1/\delta)$ for $D=1$ and ${\rm quasipoly}(n, 1/\varepsilon, 1/\delta)$ for $D \ge 2$.
    \label{thm:main_theorem}
\end{theorem}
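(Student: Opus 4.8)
The plan is to combine Theorem~\ref{thm:effective_shallow_depth} (depth reduction) with Theorem~\ref{thm:shallow-depth_algorithm} (shallow-circuit sampling); the only genuinely new work is to pass from the \emph{average} approximate Markov condition to the \emph{instance-wise} condition that Theorem~\ref{thm:shallow-depth_algorithm} demands, and then to substitute the depth bound into its runtime. Throughout, the hypothesis that $P$ obeys the average condition \emph{for all $d$} supplies, in particular, the average condition over $\mathcal{C}_2$ for the depth-$d^*$ outputs appearing below, exactly as in the discussion following Theorem~\ref{thm:effective_shallow_depth}.

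First I would split $\mathcal{C} = \mathcal{C}_2 \circ \mathcal{C}_1$, with $\mathcal{C}_2$ the final $d^* = O(\log^D(n/(\varepsilon\delta)))$ layers, set $\rho = \mathcal{C}_1(\ketbra{0^n}{0^n})$, and write $P = \mathcal{C}_2(\rho)$ for the full output and $Q = \mathcal{C}_2(\ketbra{0^n}{0^n})$ for the shallow output. Invoking Theorem~\ref{thm:effective_shallow_depth} with this $\rho$ and $\sigma = \ketbra{0^n}{0^n}$ gives $\mathbb{E}_{\mathcal{C}}\|P - Q\|_1 \le \varepsilon_1$, and Markov's inequality then yields $\|P - Q\|_1 \le \varepsilon/2$ with probability at least $1-\delta/2$ once $\varepsilon_1 = \varepsilon\delta/4$.

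Second I would show that, with high probability, the shallow output $Q$ satisfies the deterministic hypothesis of Theorem~\ref{thm:shallow-depth_algorithm}. At the scale $\ell = O(\log(n/(\varepsilon\delta)))$ used by the algorithm, only $T = O(n/\ell^D) = \mathrm{poly}(n)$ tripartitions enter the coarse-grained decomposition of Proposition~\ref{prop:patches}, and for each the average condition gives $\mathbb{E}_{\mathcal{C}_2} I(X:Z|Y) \le \eta(\ell)$. Applying Markov at level $\lambda = 2T/\delta$ and union-bounding over the $T$ tripartitions, with probability at least $1-\delta/2$ every relevant CMI is at most $(2T/\delta)\,\eta(\ell) = \mathrm{poly}(n,1/\delta)\exp(-\Omega(\ell))$. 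The extra $\mathrm{poly}(n,1/\delta)$ prefactor only enlarges the $\ell$ required in Proposition~\ref{prop:patches} by an additive $O(\log(n/\delta))$, keeping $\ell$ logarithmic; so on this event $Q$ obeys a condition of the form consumed by Theorem~\ref{thm:shallow-depth_algorithm}, and the patching algorithm samples a distribution $P'$ with $\|Q - P'\|_1 \le \varepsilon/2$.

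Finally I would assemble the budgets. On the intersection of the two good events (probability $\ge 1-\delta$ by a union bound), the triangle inequality gives $\|P - P'\|_1 \le \|P-Q\|_1 + \|Q - P'\|_1 \le \varepsilon$. Substituting $d = d^*$ into the runtime $n\cdot\exp\!\big(O(d(d+\log(n/(\varepsilon\delta)))^{D-1})\big)$ of Theorem~\ref{thm:shallow-depth_algorithm}: for $D=1$ the $(D-1)$ power annihilates the bracket and the runtime collapses to $n\cdot\exp(O(\log(n/(\varepsilon\delta)))) = \mathrm{poly}(n,1/\varepsilon,1/\delta)$; for $D\ge 2$ the term $d^*$ dominates $\log(n/(\varepsilon\delta))$, so $d+\log(n/(\varepsilon\delta)) = O(\log^D(n/(\varepsilon\delta)))$ and the exponent is $O(\log^D(n/(\varepsilon\delta))\cdot\log^{D(D-1)}(n/(\varepsilon\delta))) = O(\log^{D^2}(n/(\varepsilon\delta)))$, i.e.\ $\mathrm{quasipoly}(n,1/\varepsilon,1/\delta)$. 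I expect the main obstacle to be the second step: one must confirm that union-bounding the average CMI over the tripartitions actually used produces a prefactor polynomial (not exponential) in $n$, and that the resulting instance-wise bound---at the single scale $\ell$ fed to the algorithm---stays strong enough for Proposition~\ref{prop:patches} and Theorem~\ref{thm:shallow-depth_algorithm} without pushing $\ell$ past $O(\log(n/(\varepsilon\delta)))$ and thereby inflating the asymptotic runtime.
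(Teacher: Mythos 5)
Your proposal is correct and reaches the stated runtime, but the middle step differs from the paper's proof in a way worth noting. The paper never converts the average Markov condition into an instance-wise one. Instead it defines the algorithm's exact output distribution $P'' = \prod_{i} P'_{X_i|N'(X_i)}$ up front, bounds $\mathbb{E}_{\mathcal{C}}\|P'-P''\|_1 \le \delta\varepsilon/2$ directly via the \emph{expectation} version of Proposition~\ref{prop:patches} (Eq.~\eqref{eq:prop1-eq2}, which uses Jensen's inequality to push the expectation inside the square root), combines this with $\mathbb{E}_{\mathcal{C}_2}\|P-P'\|_1 \le \delta\varepsilon/2$ from Theorem~\ref{thm:effective_shallow_depth}, and applies Markov's inequality \emph{once} to the end-to-end quantity $\|P-P''\|_1$. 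Since the patching algorithm samples from $P''$ exactly, no deterministic hypothesis on the realized circuit is ever needed. Your route---Markov plus a union bound over the $T = O(n/\ell^D)$ tripartitions to certify an instance-wise $\mathrm{poly}(n,1/\delta)\exp(-\Omega(\ell))$ condition, then invoking Theorem~\ref{thm:shallow-depth_algorithm} as a black box---is valid, and you correctly identify and resolve the two places where it could fail (the union bound must be over polynomially many tripartitions, and the inflated prefactor must only shift $\ell$ by an additive $O(\log(n/\delta))$). What your approach buys is an explicit certificate that the realized circuit instance satisfies a Markov condition, at the cost of an extra union bound and a slight mismatch with Theorem~\ref{thm:shallow-depth_algorithm}'s literal hypothesis (a $\mathrm{poly}(n)$ versus $\mathrm{poly}(n,1/\delta)$ prefactor) that you must, and do, patch by hand; the paper's single end-to-end Markov inequality avoids both complications.
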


\begin{figure*}[t]
    \centering
    \includegraphics[width=\linewidth]{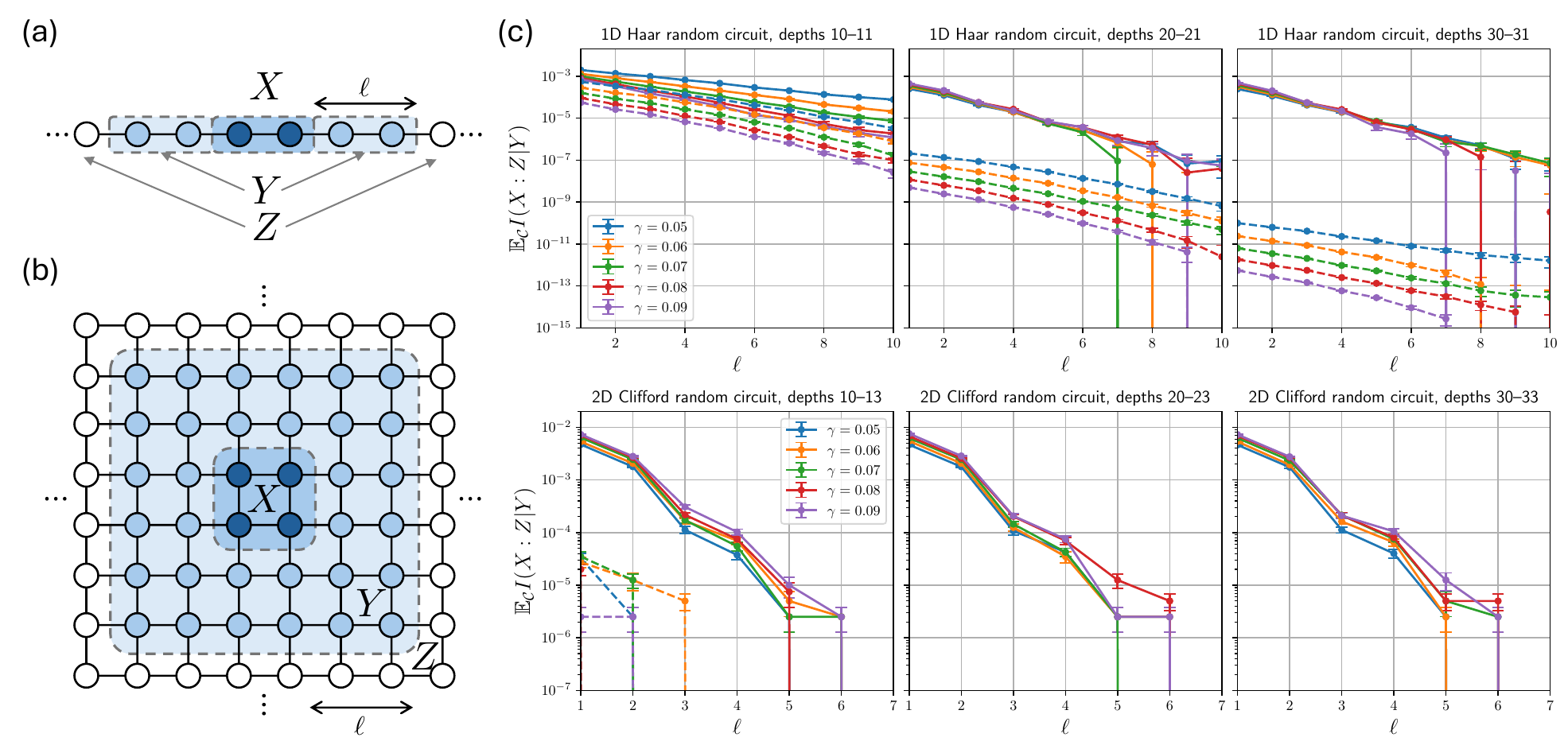}
    \caption{(a) Simulation setup of the 1D Haar random circuits. (b) Simulation setup of the 2D Clifford random circuits. (c) Simulation results of the CMI decay in 1D Haar random circuits (top row) and 2D Clifford random circuits (bottom row). For 1D Haar random circuits, CMI is averaged over $64$ circuit realizations, and CMI is estimated with $1{,}000$ MC samples. For 2D Clifford random circuits, CMI is averaged over $100{,}000$ circuit realizations, and CMI is exactly computed. To reduce fluctuations due to specific gate patterns, we average the CMI over pairs of consecutive depths in 1D (e.g., depths 10--11) and over groups of four consecutive depths in 2D (e.g., depths 10--13). For all plots, solid lines present the results with non-unital noise channels (amplitude damping and heralded reset channel), while dashed lines present the results with unital noise channels (depolarizing channel and heralded depolarizing channel).}
    \label{fig:numerical}
\end{figure*}

\section{Numerical evidence of decaying CMI}

While we have shown that exponential decay of CMI enables classical simulation of noisy random circuits, we provide extensive numerical evidence that such decay indeed occurs in the output distributions across a broad range of noise models. Specifically, we simulate 1D Haar-random circuits using the matrix product state (MPS) method~\cite{Vidal2003EfficientClassicalSimulation, VerstraeteMatrixProductDensityOperators2004, Zwolak2004MixedStateDynamics,Noh2020efficientclassical}, and 2D Clifford-random circuits using stabilizer simulation~\cite{gottesman1998heisenbergrepresentationquantumcomputers,AaronsonGottesmanImprovedSimulationStabilizer2004}.

In the 1D Haar-random circuit setup, $32$ qubits are arranged in a line. Two-qubit Haar-random gates are applied in parallel to even (resp. odd) neighboring pairs, followed by a single-qubit noise channel on each qubit. We examine two noise models: the non-unital amplitude damping channel,
\begin{equation}\label{eq:amplitude_damping}
    \mathcal{N}_{\rm amp}(\cdot) = K_0 (\cdot) K_0^\dagger + K_1 (\cdot) K_1^\dagger,
\end{equation}
with $K_0 = \begin{pmatrix} 1 & 0 \\ 0 & \sqrt{1-\gamma} \end{pmatrix}$ and $K_1 = \begin{pmatrix} 0 & \sqrt{\gamma} \\ 0 & 0 \end{pmatrix}$, and the unital depolarizing channel,
\begin{equation}
    \mathcal{N}_{\rm depo}(\rho) = (1-\gamma) \rho + \gamma \frac{I}{2},
\end{equation}
where $\gamma \in [0,1]$ denotes the noise rate.

The 2D Clifford random circuits consist of $32\times 32$ qubits arranged in a square grid where we apply two-qubit Clifford gates on disjoint pairs of neighboring qubits in parallel, followed by a single-qubit noise channel. We consider two noise models compatible with stabilizer simulation: the heralded reset channel and the heralded depolarizing channel,
\begin{align}
    \mathcal{N}_{\rm hreset}(\rho) &= \begin{cases}
    \ketbra{0}{0}, & \text{with probability } \gamma,\\
    \rho, & \text{with probability } 1-\gamma,
    \end{cases}\\
    \mathcal{N}_{\rm hdepo}(\rho) &= \begin{cases}
    I/2, & \text{with probability } \gamma,\\
    \rho, & \text{with probability } 1-\gamma,
    \end{cases}
\end{align}
where the heralded reset channel is non-unital, and the heralded depolarizing channel is unital.

In these simulations, we compute the CMI of the output distribution $P$ of the final state $\mathcal{C}(\ketbra{0^n}{0^n})$. For 1D Haar-random circuits (resp. 2D Clifford-random circuits), we define the region $X$ as two qubits in the middle (resp. a $2 \times 2$ square patch at the center), the region $Z$ as all qubits at least $\ell$-apart from $X$, and the remainder as $Y$ [Fig.~\ref{fig:numerical}(a) and (b)]. We then compute the conditional mutual information $I(X : Z | Y)$, averaging over many circuit realizations. Further simulation details are provided in Appendix~\ref{sec:simulation_details}.

Fig~\ref{fig:numerical}(c) shows the simulation results. For non-unital noise channels, we observe a clear exponential decay of CMI with distance $\ell$ in both 1D Haar-random and 2D Clifford-random circuits, across all circuit depths $d$ and noise rates $\gamma$. For unital noise channels, exponential decay is also evident in 1D Haar-random circuits; however, in 2D Clifford-random circuits, the CMI values are too small to be estimated reliably. These results support the assumption that noisy random circuits generically exhibit exponentially decaying CMI, thereby validating the classical simulation algorithm developed in this work~\footnote{We have further confirmed the exponential decay of CMI for various sizes of the region $X$; see Appendix~\ref{sec:additional_numerics}.}.

\section{Discussion}


While our analysis focuses on $D$-dimensional grid circuits, it naturally extends to more general architectures in which the underlying interaction graph has $\ell$-balls and $\ell$-local treewidth both growing at most polynomially with $\ell$ (see Appendix~\ref{sec:extended_results}). This includes a broad class of locally structured systems. However, our simulation algorithm does not apply to random circuits with all-to-all connectivity, such as those used in recent trapped-ion experiments~\cite{DeCrossComputationalPowerRandom2025}.

We also remark that while Theorem~\ref{thm:effective_shallow_depth} establishes that noisy random circuits become effectively shallow with depth $d^* = \mathrm{polylog}(n)$, we conjecture that this can be improved to $d^* = O(\log n)$. This is supported by known extremes: Ref.~\cite{shtanko2024complexitylocalquantumcircuits} shows that when each layer of the random circuit consists of global Haar-random gate, the effective depth is $O(1)$, while a simple analysis in Appendix~\ref{sec:effective_depth} shows that if the layers consist of single-qubit Haar-random gates, the effective depth is $O(\log n)$. Since our two-qubit gate model interpolates between these two limits, we expect a logarithmic depth to be sufficient.

Although we observe exponential CMI decay numerically across a variety of noisy random circuits, a general analytical proof remains as a challenge. One key difficulty is that CMI can grow under noise, contrary to the intuition that noise always suppresses correlations. While Ref.~\cite{yangWhenCanClassical2024} establishes CMI decay for shallow, noiseless 1D circuits, recent works~\cite{LeeUniversalSpreading2024, ZhangNonlocalCMI2024} demonstrate that noise can induce rapid spreading of CMI. These results highlight the subtle interplay between noise and correlation dynamics, underscoring the difficulty of establishing CMI decay analytically even in one dimension.

Finally, we remark that while our focus has been on random circuits, exponential decay of CMI may also emerge in more general, non-random circuit families, enabling efficient classical simulation. In particular, concurrent works~\cite{FrankSuun, JonJoel} study the conditional independence structure of arbitrary (worst-case) quantum circuits subject to depolarizing noise. These works prove that such circuits become classically tractable when the noise rate exceeds a certain threshold, regardless of circuit depth~\cite{FrankSuun}, and provide evidence of classical tractability when the circuit depth exceeds a certain threshold at any fixed constant noise rate~\cite{JonJoel}.

\begin{acknowledgments}
    We are grateful to Senrui Chen, Alexander Dalzell, Sarang Gopalakrishnan, Michael Gullans, Jon Nelson, Joel Rajakumar, Yihui Quek, and Yifan Zhang for their helpful discussions. S.L. especially thanks Gunhee Park for his help in numerical simulations.
    S.L. and L.J. acknowledge support from the ARO MURI (W911NF-21-1-0325), AFOSR MURI (FA9550-21-1-0209), NSF (ERC-1941583, OMA-2137642, OSI-2326767, CCF-2312755, OSI-2426975), and Packard Foundation (2020-71479). S.L. is partially supported by the Kwanjeong Educational Foundation.
    S.G. and B.F. acknowledge support from AFOSR (FA9550-21-1-0008). This material is based upon work partially supported by the National Science Foundation under Grant CCF-2044923 (CAREER), by the U.S. Department of Energy, Office of Science, National Quantum Information Science Research Centers (Q-NEXT).
    C.O. was supported by the National Research Foundation of Korea Grants (No. RS-2024-00431768 and No. RS-2025-00515456) funded by the Korean government (Ministry of Science and ICT~(MSIT)) and the Institute of Information \& Communications Technology Planning \& Evaluation (IITP) Grants funded by the Korea government (MSIT) (No. IITP-2025-RS-2025-02283189 and IITP-2025-RS-2025-02263264).
    Part of this research was performed while the author was visiting the Institute for Mathematical and Statistical Innovation (IMSI), which is supported by the National Science Foundation (Grant No. DMS-1929348).
    This work was completed with resources provided by the University of Chicago's Research Computing Center and NVIDIA academic grant program.
\end{acknowledgments}

\bibliography{references}

\appendix

\onecolumngrid

\section{\label{sec:preliminaries}Preliminaries}

We review preliminary concepts and results that will be useful for our main analysis.

\subsection{Basic properties of statistical distances}

The statistical distance (or, $\ell^1$ distance) between two probability distributions $P$ and $Q$ of a random variable $X$, which takes values in some finite set $\mathcal{X}$, is defined as
\begin{equation}
    \|P - Q\|_1 = \sum_{x \in \mathcal{X}} |P(x) - Q(x)|,
\end{equation}
which equals twice the total variation distance. Equivalently,
\begin{equation}
    \|P - Q\|_1 = 2 \cdot \max_{A \subset \mathcal{X}} |P(A) - Q(A)|,
\end{equation}
so $\|P-Q\|_1$ measures the maximal probability discrepancy that $P$ and $Q$ assign to the same event~\cite{coverElementsInformationTheory2006}.

A quantum analog of the statistical distance is the trace distance between two quantum states $\rho$ and $\sigma$, defined as
\begin{equation}
    \|\rho - \sigma\|_1 = \Tr\left(\sqrt{(\rho - \sigma)^\dagger (\rho - \sigma)}\right),
\end{equation}
which corresponds to the sum of the absolute values of the eigenvalues of $(\rho - \sigma)$. If the two states $\rho$ and $\sigma$ are diagonal in the same orthonormal basis $\{\ket{x}\}_{x \in \mathcal{X}}$, i.e.,
\begin{equation}
    \rho = \sum_{x \in \mathcal{X}} P(x) \ketbra{x}{x}, \quad \sigma = \sum_{x \in \mathcal{X}} Q(x) \ketbra{x}{x},
\end{equation}
then $\|\rho - \sigma\|_1 = \|P - Q\|_1$. Therefore, the trace distance generalizes the statistical distance to the quantum setting. Similarly to the statistical distance, the trace distance quantifies the distinguishability of states under arbitrary positive operator-valued measurement (POVM).

Here, we present several useful properties of the statistical distance and trace distance.

\begin{proposition}[Data processing inequality~\cite{Wilde_2013}]
    Let $\rho$ and $\sigma$ be two quantum states in a Hilbert space $\mathcal{H}$, and let $\mathcal{N}:\mathcal{H} \rightarrow \mathcal{H}'$ be a quantum channel. Then,
    \begin{equation}
        \|\mathcal{N}(\rho) - \mathcal{N}(\sigma)\|_1 \le \|\rho - \sigma\|_1.
    \end{equation}
    \label{prop:data_processing_inequality}
\end{proposition}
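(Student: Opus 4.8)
The plan is to prove the contraction via the variational (operational) characterization of the trace distance, which reduces the statement to the positivity and unitality of the adjoint channel. First I would establish the dual formula
\begin{equation}
    \|\rho - \sigma\|_1 = 2 \max_{0 \le M \le I} \Tr[M(\rho - \sigma)],
\end{equation}
where the maximum runs over all Hermitian operators $M$ with $0 \le M \le I$. This follows from the Jordan--Hahn decomposition $\rho - \sigma = \Delta_+ - \Delta_-$ into positive and negative parts of orthogonal support: since $\Tr[\rho - \sigma] = 0$ one has $\Tr[\Delta_+] = \Tr[\Delta_-]$, hence $\|\rho - \sigma\|_1 = \Tr[\Delta_+] + \Tr[\Delta_-] = 2\Tr[\Delta_+]$, and the projector onto the support of $\Delta_+$ attains the maximum while no feasible $M$ can exceed it.

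Next I would invoke the adjoint channel $\mathcal{N}^\dagger$ defined by $\Tr[M\,\mathcal{N}(A)] = \Tr[\mathcal{N}^\dagger(M)\,A]$ for all operators $A$. Because $\mathcal{N}$ is completely positive and trace-preserving, its adjoint is positive and unital: positivity gives $\mathcal{N}^\dagger(M) \ge 0$ whenever $M \ge 0$, while trace preservation of $\mathcal{N}$ is equivalent to $\mathcal{N}^\dagger(I) = I$, so that $0 \le M \le I$ implies $0 \le \mathcal{N}^\dagger(M) \le I$. Applying the dual formula to the pair $\mathcal{N}(\rho), \mathcal{N}(\sigma)$, using linearity $\mathcal{N}(\rho) - \mathcal{N}(\sigma) = \mathcal{N}(\rho - \sigma)$, and pushing the optimal measurement operator through the adjoint, I obtain
\begin{equation}
    \tfrac{1}{2}\|\mathcal{N}(\rho) - \mathcal{N}(\sigma)\|_1 = \max_{0 \le M \le I} \Tr\big[\mathcal{N}^\dagger(M)(\rho - \sigma)\big] \le \max_{0 \le M' \le I} \Tr[M'(\rho - \sigma)] = \tfrac{1}{2}\|\rho - \sigma\|_1,
\end{equation}
where the inequality holds because $\{\mathcal{N}^\dagger(M) : 0 \le M \le I\}$ is a subset of the full feasible set $\{M' : 0 \le M' \le I\}$, so the restricted maximum cannot be larger. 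This yields the claimed bound.

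The main obstacle is establishing the variational characterization rigorously, in particular verifying that the projector onto the positive part of $\rho - \sigma$ is genuinely optimal; this requires the spectral decomposition of the Hermitian operator $\rho - \sigma$ and the elementary estimate $\Tr[M(\rho - \sigma)] = \Tr[M\Delta_+] - \Tr[M\Delta_-] \le \Tr[\Delta_+]$ for every feasible $M$. An alternative route avoiding the dual formula is the Stinespring dilation $\mathcal{N}(\cdot) = \Tr_E[U(\cdot \otimes \ketbra{0}{0}_E)U^\dagger]$, combined with invariance of $\|\cdot\|_1$ under tensoring with a fixed state and under unitary conjugation, together with contractivity of the partial trace; however, proving partial-trace contractivity ultimately reduces to the same variational argument, so I expect the dual-formula approach to be the most direct and self-contained.
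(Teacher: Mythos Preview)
The paper does not actually prove this proposition: it is stated in the preliminaries as a standard fact with a citation to Wilde's textbook, and no argument is given. Your proof via the variational characterization $\|\rho-\sigma\|_1 = 2\max_{0\le M\le I}\Tr[M(\rho-\sigma)]$ together with positivity and unitality of the adjoint channel $\mathcal{N}^\dagger$ is correct and is precisely the standard textbook derivation, so there is nothing to compare against.
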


A useful consequence of the data processing inequality is that if $P$ and $Q$ are the output distributions of two $n$-qubit states $\rho$ and $\sigma$ under measurement in the computational basis, then the statistical distance between $P$ and $Q$ is upper bounded by the trace distance between $\rho$ and $\sigma$, i.e.,
\begin{equation}
    \|P - Q\|_1 \le \|\rho - \sigma\|_1.
\end{equation}
To see this, define $\rho'$ and $\sigma'$ as the dephased versions of $\rho$ and $\sigma$ in the computational basis, i.e., $\rho'= \sum_{x \in \{0,1\}^n}P(x)\ketbra{x}{x}$ and $\sigma' = \sum_{x \in \{0,1\}^n}Q(x)\ketbra{x}{x}$. Then, $\|P - Q\|_1 = \|\rho' - \sigma'\|_1$. Since $\rho'$ and $\sigma'$ are obtained from $\rho$ and $\sigma$ by applying complete dephasing channels on all qubits, the data processing inequality implies $\|\rho' - \sigma'\|_1 \le \|\rho - \sigma\|_1$.

The same reasoning also gives a data processing inequality for statistical distance: if $P$ and $Q$ are two probability distributions of a random variable $X$ taking values in $\mathcal{X}$, and $\mathcal{N}:\mathcal{X} \rightarrow \mathcal{Y}$ is a Markov kernel, then
\begin{equation}
     \|\mathcal{N}(P) - \mathcal{N}(Q)\|_1 \le \|P - Q\|_1.
\end{equation}
Note that conditional probabilities are special cases of Markov kernels, we will routinely use this property in our main analysis.

Along with the data processing inequality, we also remark the following useful property of the statistical distance.

\begin{proposition}\label{prop:reducing_marginals}
    Let $X$, $Y$, and $Z$ be random variables taking values in $\mathcal{X}$, $\mathcal{Y}$, and $\mathcal{Z}$, respectively. Let $R_{XY}$ be a joint distribution of $X$ and $Y$, and let $S_{Z|Y}$ and $T_{Z|Y}$ be Markov kernels of $Z$ given $Y$. Then
    \begin{equation}
        \|(S_{Z|Y} - T_{Z|Y})R_{XY}\|_1 
        = \|(S_{Z|Y} - T_{Z|Y})R_{Y}\|_1,
    \end{equation}
    where $R_Y$ is the marginal distribution of $R_{XY}$ on $Y$.
\end{proposition}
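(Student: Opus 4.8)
The plan is to prove the identity by a direct expansion of the $\ell^1$ norm, exploiting the single feature that distinguishes $Y$ from $X$: the kernels $S_{Z|Y}$ and $T_{Z|Y}$ depend only on the value of $Y$, not on $X$. Concretely, I first fix the convention that applying a difference of Markov kernels on $Y$ to the joint distribution $R_{XY}$ produces a signed measure on $\mathcal{X} \times \mathcal{Y} \times \mathcal{Z}$ whose mass at $(x,y,z)$ equals $\big(S_{Z|Y}(z\mid y) - T_{Z|Y}(z\mid y)\big)\, R_{XY}(x,y)$, and similarly for the action on the marginal $R_Y$, which gives mass $\big(S_{Z|Y}(z\mid y) - T_{Z|Y}(z\mid y)\big)\, R_Y(y)$ at $(y,z)$.

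With this convention in hand, I would expand the left-hand side and pull the nonnegative factor $R_{XY}(x,y) \ge 0$ out of the absolute value, obtaining
\begin{equation}
\|(S_{Z|Y} - T_{Z|Y})R_{XY}\|_1 = \sum_{x,y,z} \big| S_{Z|Y}(z\mid y) - T_{Z|Y}(z\mid y)\big| \, R_{XY}(x,y).
\end{equation}
The crucial observation is that the factor $\big| S_{Z|Y}(z\mid y) - T_{Z|Y}(z\mid y)\big|$ carries no dependence on $x$, so the sum over $x$ acts only on $R_{XY}(x,y)$ and collapses to the marginal $\sum_x R_{XY}(x,y) = R_Y(y)$.

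Carrying out that summation yields $\sum_{y,z} \big| S_{Z|Y}(z\mid y) - T_{Z|Y}(z\mid y)\big| \, R_Y(y)$, which is precisely the expansion of $\|(S_{Z|Y} - T_{Z|Y})R_Y\|_1$, completing the argument. There is no genuine obstacle here: the statement is essentially the observation that integrating out a variable on which the kernel difference does not depend leaves the total-variation mass unchanged. The only point requiring care is the bookkeeping — applying the absolute value only after factoring out the nonnegative joint probability, and invoking the kernel's independence from $X$ before summing over $x$ rather than after. I would therefore keep the proof to this short chain of equalities.
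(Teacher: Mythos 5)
Your proof is correct and follows essentially the same route as the paper's: expand the $\ell^1$ norm, factor the nonnegative weight $R_{XY}(x,y)$ out of the absolute value, and sum over $x$ to collapse to the marginal $R_Y(y)$. No further comparison is needed.
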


\begin{proof}
    \begin{equation}
    \begin{split}
        \left\|(S_{Z|Y} - T_{Z|Y})R_{XY}\right\|_1
        &= \sum_{x \in \mathcal{X}, y \in \mathcal{Y}, z \in \mathcal{Z}} R_{XY}(x,y) \left|S_{Z|Y}(z|y) - T_{Z|Y}(z|y)\right|\\
        &= \sum_{y \in \mathcal{Y}, z \in \mathcal{Z}} R_{Y}(y) \left|S_{Z|Y}(z|y) - T_{Z|Y}(z|y)\right|\\
        &= \left\|\left(S_{Z|Y} - T_{Z|Y}\right)R_{Y}\right\|_1,
    \end{split}
    \end{equation}
    since $R_Y(y) = \sum_{x \in \mathcal{X}} R_{XY}(x,y)$.
\end{proof}

Although this identity is simple to show, it is a crucial component of the proof of Lemma~\ref{lem:bounding_from_marginals}. Importantly, the quantum analogue does not hold in general. Specifically, for a bipartite state $\rho_{AB}$ and two quantum channels $\mathcal{N}_{B\rightarrow BC}$ and $\mathcal{M}_{B\rightarrow BC}$, one does not have
\begin{equation}
    \|\mathcal{N}_{B\rightarrow BC}(\rho_{AB}) - \mathcal{M}_{B\rightarrow BC}(\rho_{AB})\|_1
    = \|\mathcal{N}_{B\rightarrow BC}(\rho_B) - \mathcal{M}_{B\rightarrow BC}(\rho_B)\|_1,
\end{equation}
in general, where $\rho_B = \Tr_A(\rho_{AB})$ is the reduced density matrix on $B$. For example, take $\dim \mathcal{H}_A=\dim \mathcal{H}_B=\dim \mathcal{H}_C=2$, and let the input be the Bell pair
\begin{equation}
    \rho_{AB} = \frac{1}{2}\big(\ketbra{00}{00}+\ketbra{00}{11}+\ketbra{11}{00}+\ketbra{11}{11}\big).
\end{equation}
Define the channels
\begin{align}
    \mathcal{N}_{B\rightarrow BC}(\sigma) &= \sigma \otimes \ketbra{0}{0}_C,\\
    \mathcal{M}_{B\rightarrow BC}(\sigma) &= \frac{I_B}{2}{\rm Tr}(\sigma) \otimes \ketbra{0}{0}_C.
\end{align}
Since $\rho_B=I/2$, it follows that 
\begin{equation}
    \|\mathcal{N}_{B\rightarrow BC}(\rho_B) - \mathcal{M}_{B\rightarrow BC}(\rho_B)\|_1 = 0.
\end{equation}
However,
\begin{equation}
    \|\mathcal{N}_{B\rightarrow BC}(\rho_{AB}) - \mathcal{M}_{B\rightarrow BC}(\rho_{AB})\|_1 = \frac{3}{2}.
\end{equation}

\subsection{Effective shallow depth for Pauli observables}

To prove the effective shallow depth of noisy random circuits, we use the recent result in Ref.~\cite{mele2024noiseinducedshallowcircuitsabsence} which states that the expectation value of a Pauli observable of output states of a noisy random circuit with different input states converge to the same value.

To state this result, we first define the normal form of a single-qubit noise channel. A single-qubit quantum channel $\mathcal{N}$ can be succinctly described as a Pauli transfer matrix $\mathbf{T}(\mathcal{N})$, which is a $4\times 4$ matrix defined as
\begin{equation}
    \mathbf{T}(\mathcal{N}) =
    \begin{pmatrix}
        T_{II} & T_{IX} & T_{IY} & T_{IZ} \\
        T_{XI} & T_{XX} & T_{XY} & T_{XZ} \\
        T_{YI} & T_{YX} & T_{YY} & T_{YZ} \\
        T_{ZI} & T_{ZX} & T_{ZY} & T_{ZZ} \\
    \end{pmatrix},
\end{equation}
where $T_{PQ} = \Tr\left[P\mathcal{N}(Q)\right]/2$ for $P,Q \in \{I,X,Y,Z\}$. Given this representation, Ref.~\cite{KingRuskaiMinimalEntropyStates2001} showed that for any single-qubit quantum channel $\mathcal{N}$, there exists unitary matrices $U_1,U_2$ such that
\begin{equation}
    \mathbf{T}(\mathcal{U}_1 \circ \mathcal{N} \circ \mathcal{U}_2) =
    \begin{pmatrix}
        1 & 0 & 0 & 0 \\
        t_1 & \lambda_1 & 0 & 0 \\
        t_2 & 0 & \lambda_2 & 0 \\
        t_3 & 0 & 0 & \lambda_3 \\
    \end{pmatrix}
\end{equation}
for some real numbers $t_1,t_2,t_3$ and $\lambda_1,\lambda_2,\lambda_3 \in [0,1]$, where $\mathcal{U}_i(\cdot) = U_i(\cdot)U_i^\dagger$ for $i=1,2$. With this parametrization, Ref.~\cite{mele2024noiseinducedshallowcircuitsabsence} showed that the following:
\begin{proposition}
    For any single-qubit quantum channel $\mathcal{N}$, let $t_1,t_2,t_3$ and $\lambda_1,\lambda_2,\lambda_3$ be the parameters of the normal form of $\mathcal{N}$ as described above. Then,
    \begin{equation}
        \frac{1}{3}(t_1^2 + t_2^2 + t_3^2 + \lambda_1^2 + \lambda_2^2 + \lambda_3^2) \le 1,
    \end{equation}
    and the equality holds if and only if $\mathcal{N}$ is a unitary channel.
\end{proposition}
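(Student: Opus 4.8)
The plan is to reduce the claim to a statement about the purity of the channel's Choi state. First I would observe that the left-hand quantity is, up to an additive constant, the squared Frobenius norm of the full Pauli transfer matrix $\mathbf{T}(\mathcal{N})$. Bringing $\mathcal{N}$ to its normal form amounts to conjugating the Bloch-vector block of $\mathbf{T}$ by the orthogonal rotations $O_1, O_2 \in SO(3)$ induced by $\mathcal{U}_1, \mathcal{U}_2$; these act block-diagonally (with a fixed $1$ in the corner forced by trace preservation) and hence leave $\|\mathbf{T}\|_F$ invariant. Trace preservation also forces the top row of $\mathbf{T}$ to be $(1,0,0,0)$, so after the reduction $\|\mathbf{T}(\mathcal{N})\|_F^2 = 1 + \sum_i t_i^2 + \sum_i \lambda_i^2$. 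The proposition is therefore equivalent to the single inequality $\|\mathbf{T}(\mathcal{N})\|_F^2 \le 4$, saturated iff $\mathcal{N}$ is unitary.

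Next I would evaluate $\|\mathbf{T}\|_F^2$ in a Kraus representation $\mathcal{N}(\cdot) = \sum_a K_a (\cdot) K_a^\dagger$. Writing $T_{PQ} = \tfrac12 \mathrm{Tr}[P\,\mathcal{N}(Q)]$ and expanding $\mathcal{N}(Q) = \sum_P T_{PQ} P$, the orthogonality $\mathrm{Tr}[PP'] = 2\delta_{PP'}$ yields $\|\mathbf{T}\|_F^2 = \tfrac12 \sum_{Q} \|\mathcal{N}(Q)\|_F^2$, with the sum over the four Paulis. Contracting the completeness relation $\sum_Q Q_{ij}Q_{kl} = 2\delta_{il}\delta_{jk}$ against the Kraus expansion then collapses this to
\begin{equation}
\|\mathbf{T}(\mathcal{N})\|_F^2 = \sum_{a,b} \left|\mathrm{Tr}[K_a^\dagger K_b]\right|^2 = \mathrm{Tr}[G^2],
\end{equation}
where $G_{ab} = \mathrm{Tr}[K_a^\dagger K_b]$ is the Gram matrix of the Kraus operators under the Hilbert--Schmidt inner product. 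Equivalently, since $G$ is the Gram matrix of the vectorized Kraus operators, $\|\mathbf{T}\|_F^2$ is the unnormalized purity of the Choi state of $\mathcal{N}$, which makes the bound transparent.

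Finally I would exploit that $G$ is positive semidefinite with $\mathrm{Tr}[G] = \sum_a \mathrm{Tr}[K_a^\dagger K_a] = \mathrm{Tr}[\sum_a K_a^\dagger K_a] = \mathrm{Tr}[I] = 2$ by trace preservation. For any positive semidefinite matrix with eigenvalues $g_i \ge 0$ we have $\mathrm{Tr}[G^2] = \sum_i g_i^2 \le (\sum_i g_i)^2 = (\mathrm{Tr}\,G)^2 = 4$, with equality iff at most one eigenvalue is nonzero, i.e.\ $G$ has rank one. A rank-one Gram matrix means all Kraus operators are proportional to a single operator $K$; trace preservation then forces $K^\dagger K \propto I$, so $K$ is proportional to a unitary and $\mathcal{N}(\cdot) = U(\cdot)U^\dagger$. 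This delivers both the bound $\tfrac13\sum_i(t_i^2+\lambda_i^2) \le 1$ and the stated equality condition.

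I expect the main obstacle to be the bookkeeping in the middle step: verifying the identity $\|\mathbf{T}\|_F^2 = \mathrm{Tr}[G^2]$ through the index contraction of the Pauli completeness relation, and confirming that the normal-form reduction genuinely preserves $\|\mathbf{T}\|_F$ (which rests on the orthogonality of the Bloch rotations and the triviality of the top row under trace preservation). Once this identity is secured, the positivity bound and the characterization of the saturating case are routine.
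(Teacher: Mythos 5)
Your proof is correct, but note that the paper itself offers no proof of this proposition to compare against: it is imported verbatim from Ref.~\cite{mele2024noiseinducedshallowcircuitsabsence} (following the normal-form parametrization of Ref.~\cite{KingRuskaiMinimalEntropyStates2001}), so the only basis for evaluation is internal consistency, and on that score your argument holds up. Each step checks out: trace preservation forces the top row of $\mathbf{T}$ to be $(1,0,0,0)$ and makes $\mathbf{T}(\mathcal{U}_i)=1\oplus O_i$ with $O_i\in SO(3)$, so left and right multiplication by these preserves $\|\mathbf{T}\|_F$ and the claim reduces to $\|\mathbf{T}(\mathcal{N})\|_F^2\le 4$; the contraction $\sum_Q \Tr[AQBQ]=2\Tr[A]\Tr[B]$ applied with $A=K_b^\dagger K_a$, $B=K_a^\dagger K_b$ gives $\|\mathbf{T}\|_F^2=\sum_{a,b}|\Tr[K_a^\dagger K_b]|^2=\Tr[G^2]$; and $G\succeq 0$ with $\Tr[G]=2$ yields $\Tr[G^2]\le(\Tr G)^2=4$, with equality iff $G$ has rank one, which (since the Gram/Choi rank is representation-independent) forces a single Kraus operator $K$ with $K^\dagger K=I$, i.e.\ a unitary channel. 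The interpretation of $\Tr[G^2]$ as the unnormalized Choi-state purity is a nice touch that makes the bound and its saturation condition transparent. One cosmetic point: the normal form acts by independent left and right multiplication by $1\oplus O_1$ and $1\oplus O_2$, not by conjugation as you phrase it, but this does not affect the Frobenius-norm invariance you rely on.
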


Since we only consider single-qubit noise channels that are not unitary, we always have $\frac{1}{3}(t_1^2 + t_2^2 + t_3^2 + \lambda_1^2 + \lambda_2^2 + \lambda_3^2) < 1$. With these normal forms, Ref.~\cite{mele2024noiseinducedshallowcircuitsabsence} proved that the expectation value of a Pauli observable of output states of a noisy random circuit with different input states converge exponentially to the same value. Specifically, denoting the set of all $n$-qubit Pauli operators as $\mathcal{P}_n = \{I,X,Y,Z\}^{\otimes n}$, we have the following proposition.

\begin{proposition}[Adapted from Ref.~\cite{mele2024noiseinducedshallowcircuitsabsence}]\label{prop:mele}
    Let $\rho$ and $\sigma$ be arbitrary density matrices of the qubits on $\Lambda$, and $\mathcal{C}$ be a depth-$d$ noisy random circuit. Let the noise channel $\mathcal{N}$ be a single-qubit quantum channel that is not unitary, and let $c = \frac{1}{3}(t_1^2 + t_2^2 + t_3^2 + \lambda_1^2 + \lambda_2^2 + \lambda_3^2)$ be the parameter of the normal form of $\mathcal{N}$ as described above. Denoting the output state of $\mathcal{C}(\rho)$ as $\rho'$ and the output state of $\mathcal{C}(\sigma)$ as $\sigma'$, we have
    \begin{equation}
        \mathbb{E}_{\mathcal{C}}\left[\Tr\left(P(\rho'-\sigma')\right)^{2}\right] \le 4c^{|P|+d-1},
    \end{equation}
    for all $n$-qubit Pauli operators $P \in \mathcal{P}_n$. Here, $|P|$ denotes the number of qubits that $P$ acts on nontrivially.
\end{proposition}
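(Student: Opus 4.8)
The plan is to pass to the Heisenberg picture and exploit the tracelessness of $\delta := \rho - \sigma$. Writing $\Tr(P(\rho' - \sigma')) = \Tr(\mathcal{C}^\dagger(P)\,\delta)$ and expanding $O := \mathcal{C}^\dagger(P) = \sum_Q \hat o_Q\, \sigma_Q$ in the normalized Pauli basis $\sigma_Q = Q/\sqrt{2^n}$, the identity component drops out because $\Tr(\delta) = 0$, so $\Tr(O\delta) = \sum_{Q \ne I}\hat o_Q\,\hat\delta_Q$ with $\hat\delta_Q = \Tr(Q\delta)/\sqrt{2^n}$. Before anything else I would absorb the normal-form single-qubit unitaries $U_1,U_2$ of $\mathcal{N}$ into the adjacent random layers (legitimate since each noise channel is sandwiched by $2$-design gates), so that $\mathcal{N}$ may be assumed to be in normal form with parameters $t_j,\lambda_j$. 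The case $P = I$ is trivial, so assume $|P|\ge 1$.

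Second, I would set up the second-moment computation on two copies. Since $\Tr(O\delta)^2 = \Tr[(O\otimes O)(\delta\otimes\delta)]$, and each gate is at least a unitary $2$-design with $\mathbb{E}_U[U^\dagger Q U] = 0$ for $Q \ne I$, averaging over the circuit collapses $\mathbb{E}_{\mathcal{C}}[O\otimes O]$ onto the \emph{diagonal} Pauli sector: all cross terms $\sigma_Q\otimes\sigma_{Q'}$ with $Q\ne Q'$ vanish, leaving $\mathbb{E}_{\mathcal{C}}[O\otimes O] = \sum_{Q}\Psi(Q)\,\sigma_Q\otimes\sigma_Q$ with $\Psi(Q) = \mathbb{E}_{\mathcal{C}}[\hat o_Q^2]\ge 0$ depending only on $\mathrm{supp}(Q)$. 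Consequently $\mathbb{E}_{\mathcal{C}}[\Tr(O\delta)^2] = \sum_{Q\ne I}\Psi(Q)\,\hat\delta_Q^2 \le \big(\max_{Q\ne I}\Psi(Q)\big)\sum_{Q}\hat\delta_Q^2 = \big(\max_{Q\ne I}\Psi(Q)\big)\,\Tr(\delta^2)$, and since $\Tr(\delta^2)\le\|\delta\|_1^2\le 4$, it remains to prove the operator-side bound $\max_{Q\ne I}\Psi_d(Q)\le c^{|P|+d-1}$, where $\Psi_k$ denotes the profile after $k$ backward layers.

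Third, I would derive the layer-to-layer recursion for $\Psi$ and run an induction on depth. A unitary ($2$-design) layer acts on each gate's support by replacing the weights of the $15$ nontrivial two-qubit diagonal Paulis by their common average; this is a convex averaging that never increases $\max_Q\Psi(Q)$ while spreading support. A noise layer contributes, per surviving support site, the averaged factor $a := \tfrac13(\lambda_1^2+\lambda_2^2+\lambda_3^2)$, and per site whose Pauli is shed to the identity the factor $b := \tfrac13(t_1^2+t_2^2+t_3^2)$, so that the relevant per-site constant is exactly $c = a + b$ (equal to the parameter in the statement, and $\le 1$ by the preceding proposition). Tracking the $|P|$ initially occupied sites through the $d$ noise layers---with the first, pre-scrambling layer fixed by the initial condition $\Psi_0(P) = 2^n$---yields the exponent $|P| + d - 1$.

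The main obstacle is the non-unital shedding term. The exact recursion reads, for nonempty support $\vec s$, $\psi'(\vec s) = a^{|\vec s|}\sum_{T\subseteq\bar{\vec s}}(t_1^2+t_2^2+t_3^2)^{|T|}\,\psi(\vec s\cup T)$, in which the $t_j$ terms couple each support to all \emph{larger} supports. Bounding the inner sum by $\max\psi$ produces a spurious factor $(1+3b)^{n-|\vec s|}$ that grows with system size, so a naive max-tracking fails. The resolution---and the technical heart of the argument, following Ref.~\cite{mele2024noiseinducedshallowcircuitsabsence}---is that the unitary layers force $\psi(\vec s)$ to decay exponentially in $|\vec s|$ (the weight of a high-support Pauli is diluted over exponentially many Paulis), which exactly compensates the $(t_1^2+t_2^2+t_3^2)^{|T|}$ growth and closes the induction. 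Controlling this support-resolved profile, rather than a single scalar, is what makes the bound hold uniformly in $n$ and is the step I expect to require the most care.
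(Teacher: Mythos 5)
First, a point of orientation: the paper does not actually prove Proposition~\ref{prop:mele}---it is imported as a black box (``adapted from'' Ref.~\cite{mele2024noiseinducedshallowcircuitsabsence}) and never re-derived, so there is no in-paper proof to compare against. Judged on its own terms, your sketch has the right skeleton and correctly identifies most of the ingredients of the cited argument: the Heisenberg picture, the collapse of $\mathbb{E}_{\mathcal{C}}[O\otimes O]$ onto the diagonal Pauli sector under the $2$-design twirl, and the constant $c=a+b$ emerging as a per-site average over twirled Pauli types.

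There is, however, a concrete error in your decoupling step that makes the rest of your program unworkable. You pair $\max_{Q\ne I}\Psi(Q)$ with $\sum_Q \hat\delta_Q^2=\Tr(\delta^2)\le 4$ and then set out to prove $\max_{Q\ne I}\Psi(Q)\le c^{|P|+d-1}$. In your normalization the initial condition is $\Psi(P)=\Tr(\sigma_P P)^2=2^n$, and a maximum of size $2^n$ cannot be driven down to $c^{|P|+d-1}=O(1)$ by any local recursion in depth $d=o(n)$: each $2$-design gate dilutes the weight of a given Pauli over only $O(1)$ other Paulis, so $\max_Q\Psi(Q)$ remains exponentially large at the depths $d=O(\mathrm{polylog}\,n)$ for which the proposition is actually used, and the resulting bound is vacuous. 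The H\"older pairing must go the other way: use the pointwise bound $|\Tr(Q\delta)|\le\|Q\|_\infty\|\delta\|_1\le 2$ on the state side against the \emph{total} non-identity Pauli weight $\sum_{Q\ne I}\mathbb{E}_{\mathcal{C}}[\hat o_Q^{\,2}]$ on the operator side. That total starts at $\|P\|_2^2=2^n$ (i.e., at $1$ in the unnormalized convention) and contracts cleanly: after a twirl, each occupied site contributes $a=\tfrac13\sum_j\lambda_j^2$ to surviving and $b=\tfrac13\sum_j t_j^2$ to being shed, so the sum over shed subsets that worries you is exactly resummed as $\sum_{\vec u\subseteq\vec s}a^{|\vec u|}b^{|\vec s\setminus\vec u|}=(a+b)^{|\vec s|}=c^{|\vec s|}$, with only the all-identity term removed from the running total. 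This yields $c^{|P|}$ at the first backward noise layer and $c^{w}\le c$ at each of the remaining $d-1$ layers, i.e., $c^{|P|+d-1}$, with no need for the support-resolved exponential-decay estimate you describe as the technical heart---that difficulty is an artifact of tracking the $\ell^\infty$ rather than the $\ell^1$ norm of the weight profile. One genuine subtlety you gloss over does remain: the last noise layer is adjacent to the measurement, not to another $2$-design layer, so its normal-form rotation must be absorbed into $P$ rather than into a random gate, and the very first backward step sees a fixed (untwirled) Pauli type; handling that step so that the per-site factor is $c$ rather than $\max_j(\lambda_j^2+t_j^2)$ requires care beyond what either your sketch or the averaging heuristic provides.
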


\subsection{Lightcone argument}

\begin{figure}
    \centering
    \includegraphics[width=\linewidth]{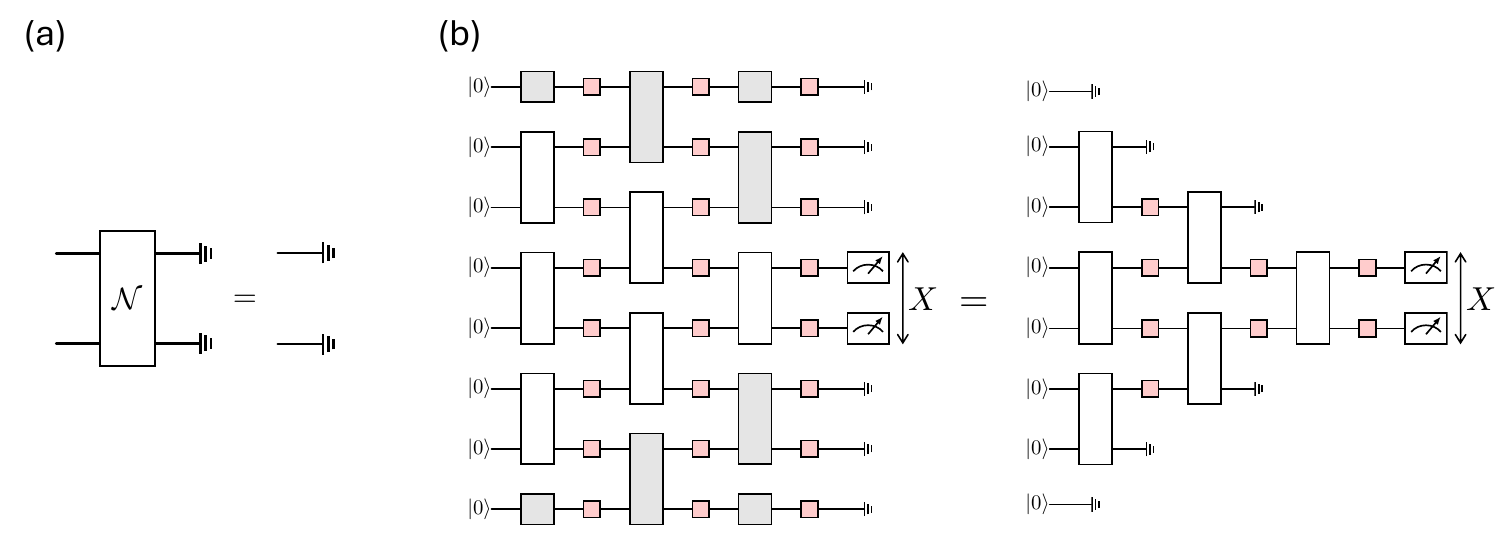}
    \caption{(a) The trace-preserving property of quantum channels. Here, the grounding symbol denotes applying partial trace. (b) The lightcone argument for computing marginal distributions. The white gates denote the ones inside the backward lightcone of $X$, and the gray gates are those outside the backward lightcone of $X$. As we trace out all qubits in $[n]\setminus X$, every gate outside the backward lightcone of $X$ vanishes.}
    \label{fig:lightcone}
\end{figure}

When analyzing the output distribution of a shallow-depth circuit, we use the lightcone argument to efficiently compute marginal distributions. Specifically, we can compute the marginal distribution of a subset of qubits $X$ by tracing out all qubits outside the lightcone of $X$ and obtaining the reduced density matrix $\rho_X$. Let $\mathcal{N}$ be an arbitrary quantum channel that acts only on $[n]\setminus X$. Writing the channel in Kraus form as $\mathcal{N}(\cdot) = \sum_k E_k (\cdot) E_k^\dagger$, we have
\begin{align}
    \Tr_{[n]\setminus X}[\mathcal{N}(\rho)]
    &= \Tr_{[n]\setminus X}\left[\sum_k (E_k \otimes I_X) \rho (E_k^\dagger \otimes I_{[n]\setminus X})\right]\\
    &= \Tr_{[n]\setminus X}\left[\sum_k (E_k^\dagger E_k \otimes I_X)\rho\right]\\
    &= \rho_X,
\end{align}
because of the trace-preserving property of quantum channels, $\sum_k E_k^\dagger E_k = I$ [Fig.~\ref{fig:lightcone}(a)]. Therefore, any quantum channel that acts only outside of $X$ does not affect $\rho_X$. This directly implies that the marginal distribution of $X$ can be computed only by the channels that has causal influence on $X$: By tracing out all qubits outside of $X$, all qubits and the channels inside the backward lightcone of $X$ vanish by repeatedly applying the above relation [Fig.~\ref{fig:lightcone}(b)]. Therefore, we can compute the marginal distribution of $X$ by simulating the quantum circuit on the qubits inside the backward lightcone of $X$.

\section{\label{sec:deferred_proofs}Deferred proofs}

In this section, we provide the deferred proofs of the main theorem in the main text. We begin with the proof of Proposition~\ref{prop:patches-restated}, which is a central component of the proofs of both Theorem~\ref{thm:effective_shallow_depth} and \ref{thm:shallow-depth_algorithm}.

\begingroup
  \renewcommand{\theproposition}{\ref{prop:patches}}
    \begin{proposition}[restated]
        Let $P$ be a probability distribution that satisfies the $\eta(\ell)$-approximate Markov condition. Then
        \begin{equation}
            \left\| P - \prod_{i=1}^{|V|} P_{X_i|N'(X_i)} \right\|_1 \le O(n/\ell^D) \sqrt{\eta(\ell)},\label{eq:prop1-eq1}
        \end{equation}
        where $N'(X_i) = N(X_i) \cap X_{<i}$. In addition, if $P$ is the output distribution of a noisy random circuit $\mathcal{C}$ that satisfies the average $\eta(\ell)$-approximate Markov condition, then
        \begin{equation}
            \mathbb{E}_{\mathcal{C}} \left\| P - \prod_{i=1}^{|V|} P_{X_i|N'(X_i)} \right\|_1 \le O(n/\ell^D) \sqrt{\eta(\ell)}.\label{eq:prop1-eq2}
        \end{equation}
        \label{prop:patches-restated}
    \end{proposition}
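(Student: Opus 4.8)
The plan is to prove both inequalities with a single hybrid (telescoping) argument in which the conditioning set of one coarse-grained block is replaced at a time, charging each replacement to a single conditional-mutual-information term via Pinsker's inequality. By the chain rule, $P = \prod_{i=1}^{|V|} P_{X_i\mid X_{<i}}$, so $P$ and the target $P' = \prod_{i=1}^{|V|} P_{X_i\mid N'(X_i)}$ differ only in that each full-history conditional is replaced by its neighbor-restricted version. I would interpolate between them through the hybrids
\begin{equation}
    P^{(k)} = \Big(\prod_{i=1}^{k} P_{X_i\mid X_{<i}}\Big)\Big(\prod_{i=k+1}^{|V|} P_{X_i\mid N'(X_i)}\Big),
\end{equation}
so that $P^{(0)} = P'$ and $P^{(|V|)} = P$, and then bound $\|P-P'\|_1 \le \sum_{k=1}^{|V|}\|P^{(k)}-P^{(k-1)}\|_1$. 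The deliberate choice of ordering is what makes the argument clean: the leading $k$ factors are genuine full-history conditionals, so their product is the \emph{exact} marginal $P_{X_{\le k}}$, while the trailing factors $\prod_{i>k}P_{X_i\mid N'(X_i)}$ form a legitimate Markov kernel (summing over $X_{>k}$ from $i=|V|$ downward telescopes to $1$).

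Since $P^{(k)}$ and $P^{(k-1)}$ agree on every factor except the $k$-th and are each obtained by feeding their $X_{\le k}$-marginals through this common trailing kernel, the data-processing inequality (Proposition~\ref{prop:data_processing_inequality}) collapses every term to a comparison of the two $k$-th conditionals against the true marginal:
\begin{equation}
    \|P^{(k)}-P^{(k-1)}\|_1 \le \big\|P_{X_{<k}}\big(P_{X_k\mid X_{<k}} - P_{X_k\mid N'(X_k)}\big)\big\|_1 .
\end{equation}
Writing $A = N'(X_k)$ and $B = X_{<k}\setminus A$, the coarse-graining construction guarantees $\dist(X_k,B)\ge \ell$, since any block in $X_{<k}$ that is not a neighbor of $X_k$ sits at distance at least $\ell$. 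A direct rearrangement then identifies the right-hand side with $\|P_{X_k A B} - P_{X_k A}P_{B\mid A}\|_1$, i.e.\ the distance of $P$ from the conditional-independence model $X_k - A - B$. Pinsker's inequality~\eqref{eq:pinsker} bounds this by $\sqrt{2\ln 2\, I(X_k:B\mid A)}$, and the $\eta(\ell)$-approximate Markov condition applied to the tripartition $(X_k,A,B)$ gives $I(X_k:B\mid A)\le \eta(\ell)$. Summing over the $|V| = O(n/\ell^D)$ blocks yields~\eqref{eq:prop1-eq1}. For the averaged statement~\eqref{eq:prop1-eq2}, I would take $\mathbb{E}_{\mathcal{C}}$ through the same chain and apply concavity of the square root (Jensen) at the last step, $\mathbb{E}_{\mathcal{C}}\sqrt{I(X_k:B\mid A)}\le \sqrt{\mathbb{E}_{\mathcal{C}}\,I(X_k:B\mid A)}\le \sqrt{\eta(\ell)}$, so the same $O(n/\ell^D)\sqrt{\eta(\ell)}$ bound survives in expectation.

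The main obstacle is conceptual rather than computational: one must telescope in the order above, so that at each step the kernel difference is integrated against the \emph{true} marginal $P_{X_{<k}}$ rather than the approximate product $\prod_{i<k}P_{X_i\mid N'(X_i)}$. Only then does the quantity reduce exactly to the conditional-independence defect $\|P_{X_kAB}-P_{X_kA}P_{B\mid A}\|_1$ to which Pinsker and the Markov condition apply; telescoping in the opposite order would leave the approximate distribution in place and require additional control to re-express it in terms of CMI. The remaining steps—verifying that the trailing product is a normalized kernel and that $\dist(X_k,B)\ge\ell$—are routine bookkeeping enabled by the graph $G=(V,E)$.
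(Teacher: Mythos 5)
Your proof is correct and is essentially the paper's own argument in a slightly reorganized form: the paper telescopes on the partial products $\prod_{i\le j}P_{X_i\mid N'(X_i)}$ versus the true marginals $P_{X_{<(j+1)}}$, whereas you telescope through full-length hybrids with a common trailing kernel, but both reduce each step via data processing to the identical quantity $\bigl\|P_{X_{<k}}\bigl(P_{X_k\mid X_{<k}}-P_{X_k\mid N'(X_k)}\bigr)\bigr\|_1$, bound it by Pinsker's inequality applied to $I\bigl(X_k : X_{<k}\setminus N'(X_k)\mid N'(X_k)\bigr)\le\eta(\ell)$, and handle the averaged version with Jensen. No gaps.
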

\endgroup

\begin{proof}
    We first show that the first inequality in Eq.~\eqref{eq:prop1-eq1}. We begin by showing the following relation:
    \begin{equation}
        \left\| P_{X_{<(j+1)}} - \prod_{i=1}^{j} P_{X_i|N'(X_i)} \right\|_1 \le \sqrt{2\ln2 \cdot \eta(\ell)} + \left\| P_{X_{<j}} - \prod_{i=1}^{j-1} P_{X_i|N'(X_i)} \right\|_1
        \label{eq:prop1-telescoping-step}
    \end{equation}
    for all $j=2,\dots, |V|$, where $P_{X_{<j}}$ denotes the marginal distribution of $P$ on $X_{<j} = X_1 \bigcup_{i=1}^{j-1} X_{i}$ and $P_{<(|V|+1)}=P$. To this end, note that $P_{X_{<(j+1)}} = P_{X_{<j}}P_{X_j|X_{<j}}$ by chain rule. Then,
    \begin{equation}
    \begin{split}
        \left\|P_{X_{<(j+1)}} - \prod_{i=1}^{j} P_{X_i|N'(X_i)}\right\|_1
        &= \left\|P_{X_{<j}}P_{X_j|X_{<j}} - \left(\prod_{i=1}^{j-1} P_{X_i|N'(X_i)}\right)P_{X_j|N'(X_j)}\right\|_1\\
        &\le \left\|P_{X_{<j}}P_{X_j|X_{<j}} - P_{X_{<j}}P_{X_j|N'(X_j)}\right\|_1
        + \left\|
            P_{X_{<j}}P_{X_{j}|N'(X_{j})} - \prod_{i=1}^{j-1} P_{X_i|N'(X_i)}P_{X_{j}|N'(X_{j})}
        \right\|_1\\
        &\le \sqrt{2\ln2I\left(X_{j}:\left(X_{<j}\setminus N'(X_{j})\right)|N'(X_{j})\right)} + \left\|
            \left(P_{X_{<j}} - \prod_{i=1}^{j-1} P_{X_i|N'(X_i)}\right)P_{X_{j}|N'(X_{j})}
        \right\|_1\\
        &\le \sqrt{2\ln2I\left(X_{j}:\left(X_{<j}\setminus N'(X_{j})\right)|N'(X_{j})\right)} + \left\|
            P_{X_{<j}} - \prod_{i=1}^{j-1} P_{X_i|N'(X_i)}
        \right\|_1
    \end{split}
    \end{equation}
    Here, the first inequality follows from the triangle inequality, the second inequality is from the Pinsker's inequality in Eq.~\eqref{eq:pinsker}, and we use data processing inequality in the last inequality. Since ${\rm dist}(X_{j}, X_{<j}\setminus N'(X_{j}))\ge \ell$, $\eta(\ell)$-approximate Markov condition leads to Eq.~\eqref{eq:prop1-telescoping-step}.

    Now, we can apply the telescoping sum to Eq.~\eqref{eq:prop1-telescoping-step} for $j=2,\dots, |V|$, which gives
    \begin{equation}
        \left\| P - \prod_{i=1}^{|V|} P_{X_i|N'(X_i)} \right\|_1 \le (|V|-1) \sqrt{2\ln2 \cdot \eta(\ell)} + \left\| P_{X_{1}} - P_{X_1|N'(X_1)} \right\|_1,
    \end{equation}
    where the second term is zero since $N'(X_1)=\emptyset$. Since $|V| = O(n/\ell^D)$, we obtain the first inequality Eq.~\eqref{eq:prop1-eq1}.

    The second inequality Eq.~\eqref{eq:prop1-eq2} can be proved in a similar way with a small modification. With the same procedure as above while taking the expectation over the noisy random circuit $\mathcal{C}$, we have
    \begin{equation}
        \mathbb{E}_{\mathcal{C}}\left\| P_{X_{<(j+1)}} - \prod_{i=1}^{j} P_{X_i|N'(X_i)} \right\|_1 \le \mathbb{E}_{\mathcal{C}}\left[\sqrt{2\ln2I\left(X_{j}:\left(X_{<j}\setminus N'(X_{j})\right)|N'(X_{j})\right)}\right] + \mathbb{E}_{\mathcal{C}}\left\| P_{X_{<j}} - \prod_{i=1}^{j-1} P_{X_i|N'(X_i)} \right\|_1.
    \end{equation}
    By Jensen's inequality, we have
    \begin{equation}
        \mathbb{E}_{\mathcal{C}}\left[\sqrt{2\ln2I\left(X_{j}:\left(X_{<j}\setminus N'(X_{j})\right)|N'(X_{j})\right)}\right] \le \sqrt{2\ln2 \cdot \mathbb{E}_{\mathcal{C}}\left[I\left(X_{j}:\left(X_{<j}\setminus N'(X_{j})\right)|N'(X_{j})\right)\right]},
    \end{equation}
    which is bounded by $\sqrt{2\ln2 \cdot \eta(\ell)}$ by the $\eta(\ell)$-approximate Markov condition. Therefore, we have the same relation as Eq.~\eqref{eq:prop1-telescoping-step} for the expectation value, and it leads to Eq.~\eqref{eq:prop1-eq2}.
\end{proof}

\subsection{\label{sec:proof-theorem1}Proof of Theorem~\ref{thm:effective_shallow_depth}}

We now prove Theorem~\ref{thm:effective_shallow_depth}. To this end, we introduce two lemmas that will be used in the proof. Given two arbitrary density matrices $\rho$ and $\sigma$, let $P$ and $Q$ be the output distributions of a noisy random circuit $\mathcal{C}$ with input states $\rho$ and $\sigma$, respectively. Lemma~\ref{lem:bounding_from_marginals} shows that the distance between $P$ and $Q$ can be bounded by the distances between their marginal distributions on small regions, while Lemma~\ref{lem:indistinguishable_marginals} shows that those distances between marginal distributions of $P$ and $Q$ quickly become indistinguishable.

\begin{lemma}
    For a depth-$d$ noisy random circuit $\mathcal{C}$, let $P$ and $Q$ be the output distributions of $\mathcal{C}(\rho)$ and $\mathcal{C}(\sigma)$, respectively. If both $P$ and $Q$ satisfy the $\eta(\ell)$-approximate Markov condition, we have
    \begin{equation}
        \mathbb{E}_{\mathcal{C}} \|P - Q\|_{1} 
        \le \sum_{i=1}^{|V|} \left(
            \mathbb{E}_{\mathcal{C}}\left\|P_{X_i \sqcup N'(X_i)} - Q_{X_i \sqcup N'(X_i)}\right\|_1
            + \mathbb{E}_{\mathcal{C}}\left\|P_{N'(X_i)} - Q_{N'(X_i)}\right\|_1
        \right)
        + O(n / \ell^D) \sqrt{\eta(\ell)}.
    \label{eq:bounding_from_marginals}
    \end{equation}
    \label{lem:bounding_from_marginals}
\end{lemma}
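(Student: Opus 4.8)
The plan is to bound $\|P-Q\|_1$ by interpolating between $P$ and the patched approximation of $Q$ through a carefully chosen hybrid chain, so that the approximate Markov condition only ever enters through a single local conditional at a time. Writing $\tilde Q := \prod_{i=1}^{|V|} Q_{X_i\mid N'(X_i)}$, I would first apply the averaged form of Proposition~\ref{prop:patches-restated}, using the average $\eta(\ell)$-approximate Markov condition for $Q$, to obtain $\mathbb{E}_{\mathcal C}\|Q-\tilde Q\|_1 \le O(n/\ell^D)\sqrt{\eta(\ell)}$. By the triangle inequality it then suffices to control $\mathbb{E}_{\mathcal C}\|P-\tilde Q\|_1$, and the residual $\|Q-\tilde Q\|_1$ term is absorbed into the stated additive error.

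To bound $\|P-\tilde Q\|_1$, I would introduce the hybrid distributions $G^{(k)} := P_{X_{<k}} \prod_{i=k}^{|V|} Q_{X_i\mid N'(X_i)}$ for $k=1,\dots,|V|+1$, whose endpoints are $G^{(1)}=\tilde Q$ and $G^{(|V|+1)}=P$. The crucial design choice is that the leading block is the \emph{true} marginal $P_{X_{<k}}$ rather than its patched product; this is what ultimately yields local marginals of $P$ (not of $\tilde P$) and keeps the error at $O(n/\ell^D)\sqrt{\eta(\ell)}$ rather than a larger polynomial. Telescoping gives $\|P-\tilde Q\|_1 \le \sum_{k=1}^{|V|}\|G^{(k+1)}-G^{(k)}\|_1$. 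Since $G^{(k+1)}$ and $G^{(k)}$ share the trailing Markov kernel $\prod_{i>k}Q_{X_i\mid N'(X_i)}$ acting on $X_{>k}$, the data processing inequality (Proposition~\ref{prop:data_processing_inequality}) collapses each term to $\|P_{X_{\le k}} - P_{X_{<k}}\,Q_{X_k\mid N'(X_k)}\|_1$, a distance over the single block $X_{\le k}$.

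I would then split each such term by inserting $P_{X_{<k}}P_{X_k\mid N'(X_k)}$. The first piece $\|P_{X_{\le k}}-P_{X_{<k}}P_{X_k\mid N'(X_k)}\|_1$ is exactly the Pinsker bound of Eq.~\eqref{eq:pinsker} applied to the tripartition $X_k$, $N'(X_k)$, $X_{<k}\setminus N'(X_k)$; since $\dist(X_k, X_{<k}\setminus N'(X_k))\ge\ell$, the (average) Markov condition for $P$ bounds it by $\sqrt{2\ln 2\,\eta(\ell)}$. For the second piece $\|P_{X_{<k}}(P_{X_k\mid N'(X_k)}-Q_{X_k\mid N'(X_k)})\|_1$, both conditionals depend only on $N'(X_k)\subseteq X_{<k}$, so Proposition~\ref{prop:reducing_marginals} reduces the prefix from $P_{X_{<k}}$ to its marginal $P_{N'(X_k)}$; adding and subtracting $Q_{X_k\sqcup N'(X_k)}$ and using $P_{N'(X_k)}Q_{X_k\mid N'(X_k)} = (P_{N'(X_k)}/Q_{N'(X_k)})\,Q_{X_k\sqcup N'(X_k)}$ then splits it exactly into $\|P_{X_k\sqcup N'(X_k)}-Q_{X_k\sqcup N'(X_k)}\|_1 + \|P_{N'(X_k)}-Q_{N'(X_k)}\|_1$. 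Summing over the $|V|=O(n/\ell^D)$ blocks, taking $\mathbb{E}_{\mathcal C}$, and using Jensen's inequality (as in the proof of Proposition~\ref{prop:patches-restated}) to move the expectation inside the square root for the Markov terms yields the claimed inequality.

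The main obstacle is precisely the choice of hybrid in the second paragraph: a naive interpolation between the two patched products $\tilde P$ and $\tilde Q$ would force the telescoping weight to be the patched marginal $\tilde P_{N'(X_k)}$, which differs from $P_{N'(X_k)}$ by $O(\|\tilde P - P\|_1)$ per block and would inflate the additive error to $O(n^2/\ell^{2D})\sqrt{\eta(\ell)}$. Anchoring the prefix to the exact marginal $P_{X_{<k}}$ avoids this, but then requires the trailing block to remain a genuine (normalized) Markov kernel so that data processing applies cleanly; verifying this, together with the mild support condition ensuring the ratio $P_{N'(X_k)}/Q_{N'(X_k)}$ is well defined, are the only technical points that need care.
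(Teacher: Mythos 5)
Your proof is correct and uses essentially the same ingredients as the paper's: per block you insert $P_{X_{<k}}P_{X_k\mid N'(X_k)}$, control the first piece by Pinsker plus the Markov condition, and reduce the second piece via Proposition~\ref{prop:reducing_marginals} and data processing to exactly the two marginal distances in the statement. The only difference is organizational --- you outsource the $Q$-side Markov error to Proposition~\ref{prop:patches-restated} and telescope over hybrids between $P$ and $\prod_i Q_{X_i\mid N'(X_i)}$, whereas the paper runs a single recursion on $\|P_{X_{<j}}-Q_{X_{<j}}\|_1$ that absorbs both CMI terms in each step --- and both yield the same $O(n/\ell^D)\sqrt{\eta(\ell)}$ additive error.
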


\begin{proof}
    We first show the following recursive relation: for $j=2,\dots, |V|$,
    \begin{equation}
        \mathbb{E}_{\mathcal{C}}\|P_{X_{<(j+1)}} - Q_{X_{<(j+1)}}\|_1
        \le 2\sqrt{\eta(\ell)}
        + \mathbb{E}_{\mathcal{C}}\left\|P_{X_j \sqcup N'(X_j)} - Q_{X_j \sqcup N'(X_j)}\right\|_1
        + \mathbb{E}_{\mathcal{C}}\left\|P_{N'(X_j)} - Q_{N'(X_j)}\right\|_1.
        \label{eq:lem1_step1}
    \end{equation}
    To this end, note that $P_{X_{<(j+1)}} = P_{X_{<j}}P_{X_j|X_{<j}}$ and $Q_{X_{<(j+1)}} = Q_{X_{<j}}Q_{X_j|X_{<j}}$ by the chain rule. Then, we have
    \begin{equation}
    \begin{split}
        \mathbb{E}_{\mathcal{C}}\|P_{X_{<(j+1)}} - Q_{X_{<(j+1)}}\|_1
        &= \mathbb{E}_{\mathcal{C}}\left\|P_{X_{<j}}P_{X_j|X_{<j}} - Q_{X_{<j}}Q_{X_j|X_{<j}}\right\|_1\\
        &\le \mathbb{E}_{\mathcal{C}}\left\|P_{X_{<j}}P_{X_{j}|X_{<j}} - P_{X_{<j}}P_{X_{j}|N'(X_{j})}\right\|_1 + \mathbb{E}_{\mathcal{C}}\left\| P_{X_{<j}}P_{X_{j}|N'(X_{j})} - Q_{X_{<j}}Q_{X_{j}|N'(X_{j})} \right\|_1 \\
        & \quad + \mathbb{E}_{\mathcal{C}}\left\|Q_{X_{<j}}Q_{X_{j}|N'(X_{j})} - Q_{X_{<j}}Q_{X_{j}|X_{<j}}\right\|_1\\
        &\le \mathbb{E}_{\mathcal{C}}\left\| P_{X_{<j}}P_{X_{j}|N'(X_{j})} - Q_{X_{<j}}Q_{X_{j}|N'(X_{j})} \right\|_1\\
        &\qquad + \mathbb{E}_{\mathcal{C}}\sqrt{2\ln2 I\left(X_{j}:\left(X_{<j}\setminus N'(X_{j})\right)|N'(X_{j})\right)_P}\\
        &\qquad\qquad + \mathbb{E}_{\mathcal{C}}\sqrt{2\ln2 I\left(X_{j}:\left(X_{<j}\setminus N'(X_{j})\right)|N'(X_{j})\right)_Q},
    \end{split}
    \end{equation}
    where the first inequality follows from the triangle inequality, the second inequality is from the Pinsker's inequality in Eq.~\eqref{eq:pinsker}. Here, we denote $I(X:Y|Z)_P$ and $I(X:Y|Z)_Q$ for CMIs for the distributions $P$ and $Q$, respectively. By Jensen's inequality, we have
    \begin{align}
        \mathbb{E}_{\mathcal{C}}\sqrt{2\ln2 I\left(X_{j}:\left(X_{<j}\setminus N'(X_{j})\right)|N'(X_{j})\right)_P}
        &\le \sqrt{2\ln2 \cdot \mathbb{E}_{\mathcal{C}}I\left(X_{j}:\left(X_{<j}\setminus N'(X_{j})\right)|N'(X_{j})\right)_P},\\
        \mathbb{E}_{\mathcal{C}}\sqrt{2\ln2 I\left(X_{j}:\left(X_{<j}\setminus N'(X_{j})\right)|N'(X_{j})\right)_Q}
        &\le \sqrt{2\ln2 \cdot \mathbb{E}_{\mathcal{C}}I\left(X_{j}:\left(X_{<j}\setminus N'(X_{j})\right)|N'(X_{j})\right)_Q},
    \end{align}
    which are both bounded by $\sqrt{2\ln2 \cdot \eta(\ell)}$ by the $\eta(\ell)$-approximate Markov condition. Therefore, we have
    \begin{equation}\label{eq:lem1_step1-1}
        \mathbb{E}_{\mathcal{C}}\|P_{X_{<(j+1)}} - Q_{X_{<(j+1)}}\|_1
        \le 2\sqrt{2\ln 2 \cdot \eta(\ell)} + \mathbb{E}_{\mathcal{C}}\left\| P_{X_{<j}}P_{X_{j}|N'(X_{j})} - Q_{X_{<j}}Q_{X_{j}|N'(X_{j})} \right\|_1
    \end{equation}
    We can further bound the second term as follows:
    \begin{equation}
    \begin{split}
        \mathbb{E}_{\mathcal{C}}\left\| P_{X_{<j}}P_{X_{j}|N'(X_{j})} - Q_{X_{<j}}Q_{X_{j}|N'(X_{j})} \right\|_1
        &\le \mathbb{E}_{\mathcal{C}}\left\| P_{X_{<j}}P_{X_{j}|N'(X_{j})} - P_{X_{<j}}Q_{X_{j}|N'(X_{j})} \right\|_1\\
        &\qquad + \mathbb{E}_{\mathcal{C}}\left\| P_{X_{<j}}Q_{X_{j}|N'(X_{j})} - Q_{X_{<j}}Q_{X_{j}|N'(X_{j})} \right\|_1\\
        &= \mathbb{E}_{\mathcal{C}}\left\| P_{X_{<j}}\left(P_{X_{j}|N'(X_{j})} - Q_{X_{j}|N'(X_{j})}\right) \right\|_1\\
        &\qquad + \mathbb{E}_{\mathcal{C}}\left\| \left(P_{X_{<j}} - Q_{X_{<j}}\right)Q_{X_{j}|N'(X_{j})} \right\|_1 \\
        &\le \mathbb{E}_{\mathcal{C}}\left\| P_{N'(X_{j})} \left(P_{X_{j}|N'(X_{j})} - Q_{X_{j}|N'(X_{j})}\right) \right\|_1 + \mathbb{E}_{\mathcal{C}}\left\|P_{X_{<j}} - Q_{X_{<j}} \right\|_1,
    \end{split}
    \end{equation}
    where the first inequality follows from the triangle inequality. For the second inequality, we use Proposition~\ref{prop:reducing_marginals} for the first term, and the data processing inequality for the second term. Using the triangle inequality and the data processing inequality once again, we have
    \begin{equation}
    \begin{split}
        \mathbb{E}_{\mathcal{C}}\left\| P_{X_{<j}}P_{X_{j}|N'(X_{j})} - Q_{X_{<j}}Q_{X_{j}|N'(X_{j})} \right\|_1
        &\le \mathbb{E}_{\mathcal{C}}\left\| P_{N'(X_{j})} \left(P_{X_{j}|N'(X_{j})} - Q_{X_{j}|N'(X_{j})}\right) \right\|_1 + \mathbb{E}_{\mathcal{C}}\left\|P_{X_{<j}} - Q_{X_{<j}} \right\|_1\\
        &\le \mathbb{E}_{\mathcal{C}}\left\| P_{N'(X_{j})} P_{X_{j}|N'(X_{j})} - Q_{N'(X_{j})}Q_{X_{j}|N'(X_{j})} \right\|_1\\
        &\qquad + \mathbb{E}_{\mathcal{C}}\left\|\left(Q_{N'(X_{j})} - P_{N'(X_{j})}\right)Q_{X_{j}|N'(X_{j})} \right\|_1 + \mathbb{E}_{\mathcal{C}}\left\|P_{X_{<j}} - Q_{X_{<j}} \right\|_1\\
        &\le \mathbb{E}_{\mathcal{C}}\left\| P_{X_{j} \sqcup N'(X_{j})} - Q_{X_{j} \sqcup N'(X_{j})} \right\|_1\\
        &\qquad + \mathbb{E}_{\mathcal{C}}\left\|Q_{N'(X_{j})} - P_{N'(X_{j})} \right\|_1 + \mathbb{E}_{\mathcal{C}}\left\|P_{X_{<j}} - Q_{X_{<j}} \right\|_1
    \end{split}
    \end{equation}
    Plugging this into Eq.~\eqref{eq:lem1_step1-1}, we have the desired recursion relation of Eq.~\eqref{eq:lem1_step1}.

    By applying the telescoping sum to Eq.~\eqref{eq:lem1_step1} for $j=2,\dots, |V|$, we have
    \begin{equation}
    \begin{split}
        \|P-Q\|_1 &\le \left\| P_{X_1} - Q_{X_1} \right\|_1\\
        &\qquad + \sum_{i=2}^{|V|} \left(\left\| P_{X_i \sqcup N'(X_i)} - Q_{X_i \sqcup N'(X_i)}\right\|_1 + \left\|P_{N'(X_i)} - Q_{N'(X_i)}\right\|_1\right) + O(n/\ell^D) \sqrt{\eta(\ell)}.
    \end{split}
    \end{equation}
    Since $N'(X_1) = \emptyset$, we have $\left\| P_{X_1} - Q_{X_1} \right\|_1 = \left\| P_{X_1\sqcup N'(X_1)} - Q_{X_1\sqcup N'(X_1)} \right\|_1$, which concludes the proof.
\end{proof}

\begin{lemma}
    For a depth-$d$ noisy random circuit $\mathcal{C}$, let $P$ and $Q$ be the output distributions of $\mathcal{C}(\rho)$ and $\mathcal{C}(\sigma)$, respectively, for arbitrary input states $\rho$ and $\sigma$. Then, the marginal distributions of $P$ and $Q$ on $X \subset \Lambda$, denoted by $P_X$ and $Q_X$, satisfy
    \begin{equation}
        \mathbb{E}_{\mathcal{C}}\|P_X - Q_X\|_1 \le 2^{|X|} \exp(-\Omega(d)).
    \end{equation}
    \label{lem:indistinguishable_marginals}
\end{lemma}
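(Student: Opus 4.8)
The plan is to lift the statistical-distance bound to a trace-distance bound on the reduced output states, and then control that trace distance through its Pauli expansion, where Proposition~\ref{prop:mele} can be applied term by term.

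First I would note that $P_X$ is exactly the computational-basis measurement distribution of the reduced output state $\rho'_X = \Tr_{[n]\setminus X}(\rho')$, with $\rho' = \mathcal{C}(\rho)$, and similarly $Q_X$ corresponds to $\sigma'_X = \Tr_{[n]\setminus X}(\sigma')$. Since complete dephasing in the computational basis is a quantum channel, the data processing inequality (Proposition~\ref{prop:data_processing_inequality}) gives, for every realization of $\mathcal{C}$,
\begin{equation}
  \|P_X - Q_X\|_1 \le \|\rho'_X - \sigma'_X\|_1,
\end{equation}
so it suffices to bound $\mathbb{E}_{\mathcal{C}}\|\rho'_X - \sigma'_X\|_1$.

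Next I would pass to the Hilbert--Schmidt norm. Because $\rho'_X - \sigma'_X$ acts on a space of dimension $2^{|X|}$, one has $\|\rho'_X - \sigma'_X\|_1 \le 2^{|X|/2}\|\rho'_X - \sigma'_X\|_2$. Expanding the traceless Hermitian operator $\rho'_X - \sigma'_X$ in the Pauli basis $\mathcal{P}_X$ on $X$ yields
\begin{equation}
  \|\rho'_X - \sigma'_X\|_2^2 = \frac{1}{2^{|X|}}\sum_{P\in\mathcal{P}_X}\Tr[P(\rho'_X - \sigma'_X)]^2 .
\end{equation}
The key point is that the identity term drops out, since $\Tr[\rho'_X] = \Tr[\sigma'_X] = 1$, so only Pauli operators of weight $|P|\ge 1$ survive. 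Taking expectations and invoking Proposition~\ref{prop:mele} (each $P$ on $X$ is regarded as the $n$-qubit Pauli $P\otimes I_{[n]\setminus X}$, whose expectation value equals $\Tr[P(\rho'_X-\sigma'_X)]$) bounds each of the at most $4^{|X|}$ surviving terms by $4c^{|P|+d-1}\le 4c^{d}$, using $c<1$ and $|P|\ge 1$. This gives $\mathbb{E}_{\mathcal{C}}\|\rho'_X - \sigma'_X\|_2^2 \le 4\cdot 2^{|X|}c^{d}$.

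Finally I would combine these estimates, using Jensen's inequality (concavity of the square root) to move the expectation inside:
\begin{equation}
  \mathbb{E}_{\mathcal{C}}\|\rho'_X - \sigma'_X\|_1 \le 2^{|X|/2}\sqrt{\mathbb{E}_{\mathcal{C}}\|\rho'_X - \sigma'_X\|_2^2} \le 2\cdot 2^{|X|}c^{d/2}.
\end{equation}
Since $\mathcal{N}$ is non-unitary we have $c<1$, hence $c^{d/2}=\exp(-\Omega(d))$, and absorbing the constant into the exponential yields the claimed bound $\mathbb{E}_{\mathcal{C}}\|P_X - Q_X\|_1\le 2^{|X|}\exp(-\Omega(d))$. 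The step I expect to require the most care is tracking the dimensional prefactors: the trace-to-Hilbert--Schmidt conversion contributes $2^{|X|/2}$, and the Pauli sum contributes another $2^{|X|/2}$ once the $4^{|X|}$ terms are weighed against the $2^{-|X|}$ normalization, and these must conspire to give exactly $2^{|X|}$ rather than a worse exponential in $|X|$. The other delicate point is ensuring the weight-$0$ contribution is removed, so that the surviving exponent is $c^{d}$ (equivalently $\exp(-\Omega(d))$) and does not retain an extra factor that would spoil the decay in $d$.
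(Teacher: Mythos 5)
Your proposal is correct and follows essentially the same route as the paper's proof: data processing to reduce to the trace distance of the reduced output states, Cauchy--Schwarz to pass to the Hilbert--Schmidt norm and its Pauli expansion, Proposition~\ref{prop:mele} term by term, and Jensen's inequality to handle the expectation. The only cosmetic differences are that you explicitly drop the identity term (the paper simply bounds it along with the rest, since it vanishes anyway) and you apply the data processing inequality at the start rather than the end; the resulting constants match up to immaterial factors.
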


\begin{proof}
    By Jensen's inequality, we have
    \begin{equation}
        \left(\mathbb{E}_{\mathcal{C}}\|\mathcal{C}(\rho)_A - \mathcal{C}(\sigma)_A\|_1\right)^{2}
        \le
        \mathbb{E}_{\mathcal{C}}\|\mathcal{C}(\rho)_A - \mathcal{C}(\sigma)_A\|_{1}^{2}.
    \end{equation}
    Meanwhile, by Cauchy-Schwarz inequality, we have
    \begin{equation}
    \begin{split}
        \left\|\mathcal{C}(\rho)_A - \mathcal{C}(\sigma)_A\right\|_{1}^{2}
        &\le 2^{|A|}\left\|\mathcal{C}(\rho)_A - \mathcal{C}(\sigma)_A\right\|_{2}^{2} \\
        &= \sum_{P\in\mathcal{P}_{|A|}} \Tr(P(\mathcal{C}(\rho)_A - \mathcal{C}(\sigma)_A))^{2} \\
        &= \sum_{P'\in\mathcal{P}_n: P_i = I \, \forall i \in [n]\setminus A}{\rm Tr}(P'(\mathcal{C}(\rho) - \mathcal{C}(\sigma)))^{2},
    \end{split}
    \end{equation}
    where $P_i$ denotes the Pauli matrix that $P$ acts on the $i$-th qubit. By Proposition~\ref{prop:mele}, we have
    \begin{equation}
    \begin{split}
        \left(\mathbb{E}_{\mathcal{C}}\left\|\mathcal{C}(\rho)_A - \mathcal{C}(\sigma)_A\right\|_{1}\right)^{2}
        &\le \sum_{P'\in\mathcal{P}_n: P_i = I \, \forall i \in [n]\setminus A} 4c^{|P|+d-1} \\
        &\le 4^{1+|A|} c^{d-1},
    \end{split}
    \end{equation}
    for some parameter $0<c<1$. Finally, by the data processing inequality, we have
    \begin{equation}
    \begin{split}
        \mathbb{E}_{\mathcal{C}}\left\|P_A - Q_A\right\|_1
        &\le \mathbb{E}_{\mathcal{C}}\left\|\mathcal{C}(\rho)_A - \mathcal{C}(\sigma)_A\right\|_{1}\\
        &\le 2^{|A|+1}c^{(d-1)/2}.
    \end{split}
    \end{equation}
\end{proof}

With Lemmas~\ref{lem:bounding_from_marginals} and \ref{lem:indistinguishable_marginals}, we provide the proof of Theorem~\ref{thm:effective_shallow_depth}.

\begingroup
  \renewcommand{\thetheorem}{\ref{thm:effective_shallow_depth}}
    \begin{theorem}[restated]
        Let $\rho$ and $\sigma$ be arbitrary density matrices over the qubits in a $D$-dimensional grid $\Lambda$, and let $\mathcal{C}$ be a depth-$d$ noisy random circuit. Denote the output distributions of $\mathcal{C}(\rho)$ and $\mathcal{C}(\sigma)$ by $P$ and $Q$, respectively. Suppose both distributions satisfy the average $\mathrm{poly}(n)\exp(-\Omega(\ell))$-approximate Markov condition for all $\ell$. Then, for any $\varepsilon > 0$, there exists $d^* = O(\log^D(n/\varepsilon))$ such that
        \begin{equation}
            \mathbb{E}_{\mathcal{C}}\|P-Q\|_1  \le \varepsilon,
        \end{equation}
        for all $d \ge d^*$.
        \label{thm:effective_shallow_depth-restated}
    \end{theorem}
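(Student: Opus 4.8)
The plan is to chain together the two lemmas and then optimize the coarse-graining scale $\ell$ against the circuit depth $d$. First I would invoke Lemma~\ref{lem:bounding_from_marginals}: since both $P$ and $Q$ satisfy the average $\eta(\ell)$-approximate Markov condition with $\eta(\ell) = \mathrm{poly}(n)\exp(-\Omega(\ell))$, this reduces the global distance $\mathbb{E}_{\mathcal{C}}\|P-Q\|_1$ to a sum over the $|V| = O(n/\ell^D)$ coarse-grained patches of the local marginal distances $\mathbb{E}_{\mathcal{C}}\|P_{X_i\sqcup N'(X_i)} - Q_{X_i\sqcup N'(X_i)}\|_1$ and $\mathbb{E}_{\mathcal{C}}\|P_{N'(X_i)} - Q_{N'(X_i)}\|_1$, plus a residual term $O(n/\ell^D)\sqrt{\eta(\ell)}$ coming from the approximate conditional independence.

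The key observation is that the regions entering these marginals are small: each patch $X_i$ together with its graph neighborhood $N'(X_i)$ spans at most $3^D$ hypercubes of side length $\ell$, so $|X_i \sqcup N'(X_i)|, |N'(X_i)| = O(\ell^D)$. I would then apply Lemma~\ref{lem:indistinguishable_marginals} to every term, bounding each marginal distance by $2^{O(\ell^D)}\exp(-\Omega(d))$, uniformly in the patch index $i$. Summing over the $O(n/\ell^D)$ patches gives
\begin{equation}
\mathbb{E}_{\mathcal{C}}\|P-Q\|_1 \le O\!\left(\frac{n}{\ell^D}\right) 2^{O(\ell^D)}\exp(-\Omega(d)) + O\!\left(\frac{n}{\ell^D}\right)\sqrt{\eta(\ell)}.
\end{equation}

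It remains to choose the two parameters. Using $\sqrt{\eta(\ell)} = \mathrm{poly}(n)\exp(-\Omega(\ell))$, the second (Markov) term is $\mathrm{poly}(n)\exp(-\Omega(\ell))$, which I force below $\varepsilon/2$ by taking $\ell = \Theta(\log(n/\varepsilon))$. Fixing this $\ell$, the first (marginal) term becomes $\mathrm{poly}(n)\,2^{O(\log^D(n/\varepsilon))}\exp(-\Omega(d))$; since $\exp(-\Omega(d))$ corresponds to $c^{\Omega(d)}$ for a fixed constant $c\in(0,1)$ inherited from the noise channel's normal form, this drops below $\varepsilon/2$ once $\Omega(d) \ge O(\ell^D) + O(\log(n/\varepsilon))$, i.e. once $d \ge d^* = O(\log^D(n/\varepsilon))$.

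The main obstacle---and the reason the effective depth is polylogarithmic rather than merely logarithmic---is the $2^{O(\ell^D)}$ prefactor generated by the Cauchy--Schwarz and norm-conversion steps inside Lemma~\ref{lem:indistinguishable_marginals}. This factor couples the two parameters: suppressing the Markov residual demands $\ell = \Theta(\log(n/\varepsilon))$, but then overcoming the quasi-polynomially large blowup $2^{O(\ell^D)} = 2^{O(\log^D(n/\varepsilon))}$ with the exponential-in-depth convergence forces $d^* = \Theta(\ell^D) = \Theta(\log^D(n/\varepsilon))$. The remaining care is bookkeeping: one must check that the telescoping bound of Lemma~\ref{lem:bounding_from_marginals} contributes no factor growing with the patch index, which holds because the per-patch bound from Lemma~\ref{lem:indistinguishable_marginals} is uniform over $i$ and only the patch count $O(n/\ell^D)$ appears.
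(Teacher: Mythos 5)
Your proposal is correct and follows essentially the same route as the paper's proof: it combines Lemma~\ref{lem:bounding_from_marginals} with the uniform per-patch bound $2^{O(\ell^D)}\exp(-\Omega(d))$ from Lemma~\ref{lem:indistinguishable_marginals}, then sets $\ell = \Theta(\log(n/\varepsilon))$ to kill the Markov residual and $d^* = O(\ell^D)$ to kill the marginal term. The additional commentary on why the $2^{O(\ell^D)}$ blowup forces $d^*$ to be $\mathrm{polylog}$ rather than $\log$ is accurate but not needed for the argument.
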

\endgroup

\begin{proof}
    For each partition $X \in V$, we have $|X| = \ell^D$ and $|N'(X)| \le (3^D-1)\ell^{D}$. Therefore, by Lemma~\ref{lem:indistinguishable_marginals}, we have
    \begin{equation}
        \mathbb{E}_{\mathcal{C}}\left(\|P_{X\sqcup N'(X)} - Q_{X \sqcup  N'(X)}\|_1 + \|P_{N'(X)} - Q_{N'(X)}\|_1\right) \le 2^{O(\ell^D)}\exp(-\Omega(d)).    
    \end{equation}
    Plugging this in Lemma~\ref{lem:bounding_from_marginals}, we obtain
    \begin{equation}
        \mathbb{E}_{\mathcal{C}}\|P-Q\|_1 \le \frac{n}{\ell^D} \cdot 2^{O(\ell^D)}\exp(-\Omega(d)) + {\rm poly}(n) \exp(-\Omega(\ell)).
    \end{equation}
    By the choice of $\ell = O(\log(n/\varepsilon))$, we have the second term less than $\varepsilon/2$. Finally, we can choose $d^* = O(\ell^D)$ such that the first term is also less than $\varepsilon/2$, which concludes the proof.
\end{proof}


\subsection{\label{sec:proof-theorem3}Proof of Theorem~\ref{thm:main_theorem}}

Finally, we combine Theorems~\ref{thm:effective_shallow_depth} and~\ref{thm:shallow-depth_algorithm} to establish our main result, Theorem~\ref{thm:main_theorem}.

\begingroup
  \renewcommand{\thetheorem}{\ref{thm:main_theorem}}
    \begin{theorem}[restated]
        Let $\mathcal{C}$ be a depth-$d$ noisy random circuit on a $D$-dimensional grid $\Lambda$, and $P$ be the output distribution of $\mathcal{C}(\ketbra{0^n}{0^n})$. Suppose $P$ satisfies the average ${\rm poly}(n)\exp(-\Omega(\ell))$-approximate Markov condition for all $\ell$ and $d$. Then there exists a classical algorithm that outputs a sample from $P'$ such that $\|P-P'\|_1 \le \varepsilon$ with probability at least $1-\delta$ over the choice of $\mathcal{C}$, in runtime ${\rm poly}(n, 1/\varepsilon, 1/\delta)$ for $D=1$ and ${\rm quasipoly}(n, 1/\varepsilon, 1/\delta)$ for $D \ge 2$.
        \label{thm:main_theorem-restated}
    \end{theorem}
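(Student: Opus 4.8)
The plan is to combine Theorem~\ref{thm:effective_shallow_depth} and Theorem~\ref{thm:shallow-depth_algorithm} through a two-stage reduction, and then convert the circuit-averaged accuracy guarantees into a per-instance high-probability statement via Markov's inequality. First I would split the depth-$d$ circuit as $\mathcal{C} = \mathcal{C}_2 \circ \mathcal{C}_1$, where $\mathcal{C}_1$ comprises the first $d - d^*$ layers and $\mathcal{C}_2$ the final $d^* = O(\log^D(n/(\varepsilon\delta)))$ layers. Writing $\rho = \mathcal{C}_1(\ketbra{0^n}{0^n})$ for the (random) intermediate state, I set $P = \mathcal{C}_2(\rho)$ — the true output distribution of the full circuit — and $Q = \mathcal{C}_2(\ketbra{0^n}{0^n})$ — the output of a genuine depth-$d^*$ noisy random circuit. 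The classical algorithm is simply the patching procedure of Theorem~\ref{thm:shallow-depth_algorithm} applied to $\mathcal{C}_2$ with input $\ketbra{0^n}{0^n}$; its output distribution is $P' = \prod_{i=1}^{|V|} Q_{X_i|N'(X_i)}$, and crucially its runtime is deterministic (independent of whether the Markov condition holds), so it is bounded directly by the expression in Theorem~\ref{thm:shallow-depth_algorithm} evaluated at depth $d^*$.

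For the accuracy I would use the triangle inequality to split $\mathbb{E}_{\mathcal{C}}\|P - P'\|_1 \le \mathbb{E}_{\mathcal{C}}\|P - Q\|_1 + \mathbb{E}_{\mathcal{C}}\|Q - P'\|_1$. The first term is controlled by Theorem~\ref{thm:effective_shallow_depth}: both $P$ and $Q$ satisfy the average approximate Markov condition — $P$ by hypothesis at depth $d$, and $Q$ because the hypothesis holds for all depths, in particular $d^*$, with input $\ketbra{0^n}{0^n}$ — so choosing $d^* = O(\log^D(n/(\varepsilon\delta)))$ makes this term at most $\varepsilon\delta/4$. The second term is exactly the quantity bounded by the circuit-averaged form of Proposition~\ref{prop:patches} (Eq.~\eqref{eq:prop1-eq2}) applied to $Q$, which with the patching scale $\ell = O(\log(n/(\varepsilon\delta)))$ is likewise at most $\varepsilon\delta/4$. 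Hence $\mathbb{E}_{\mathcal{C}}\|P - P'\|_1 \le \varepsilon\delta/2$. One mild subtlety is the nesting of expectations: Theorem~\ref{thm:effective_shallow_depth} is phrased for a fixed input with expectation over the shallow circuit, whereas $\rho$ is itself random through $\mathcal{C}_1$. This is harmless because every bound feeding into Theorem~\ref{thm:effective_shallow_depth} — namely Lemma~\ref{lem:bounding_from_marginals} and Lemma~\ref{lem:indistinguishable_marginals}, the latter via Proposition~\ref{prop:mele} — holds for each fixed $\rho$ and therefore survives the additional average over $\mathcal{C}_1$.

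With the expectation bound in hand, I would invoke Markov's inequality: since $\|P - P'\|_1 \ge 0$,
\begin{equation}
    \Pr_{\mathcal{C}}\!\left[\|P - P'\|_1 > \varepsilon\right] \le \frac{\mathbb{E}_{\mathcal{C}}\|P - P'\|_1}{\varepsilon} \le \frac{\delta}{2} \le \delta,
\end{equation}
so with probability at least $1 - \delta$ over the choice of $\mathcal{C}$ the algorithm samples from a $P'$ with $\|P - P'\|_1 \le \varepsilon$. It remains to read off the runtime. Substituting $d^* = O(\log^D(n/(\varepsilon\delta)))$ and $\ell = O(\log(n/(\varepsilon\delta)))$ into the runtime of Theorem~\ref{thm:shallow-depth_algorithm} gives $n\cdot\exp\!\big(O(d^*(\ell + 2d^*)^{D-1})\big) = \exp\!\big(O(\log^{D^2}(n/(\varepsilon\delta)))\big)$, since the $d^*$ term dominates $\ell$. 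For $D = 1$ the exponent is $O(\log(n/(\varepsilon\delta)))$, giving ${\rm poly}(n, 1/\varepsilon, 1/\delta)$, while for $D \ge 2$ it is polylogarithmic, giving ${\rm quasipoly}(n, 1/\varepsilon, 1/\delta)$.

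The conceptual content is already carried by the two component theorems, so I expect the main obstacle to be organizational rather than technical: correctly partitioning the error budget so that the extra $1/\delta$ factor demanded by Markov's inequality is absorbed into $d^*$ and $\ell$ only logarithmically — keeping the runtime within quasi-polynomial — and carefully verifying that the reference distribution $Q$ genuinely inherits the average Markov condition and that the nested $\mathcal{C}_1$/$\mathcal{C}_2$ averages compose without loss. I do not anticipate any step requiring a new idea beyond these bookkeeping checks.
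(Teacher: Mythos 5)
Your proposal is correct and follows essentially the same route as the paper: split off the last $d^* = O(\log^D(n/\varepsilon\delta))$ layers, bound $\|P-Q\|_1$ via Theorem~\ref{thm:effective_shallow_depth} and $\|Q-P'\|_1$ via the averaged form of Proposition~\ref{prop:patches}, combine with the triangle inequality, convert to a high-probability statement with Markov's inequality, and read the runtime off Theorem~\ref{thm:shallow-depth_algorithm}. The only differences are cosmetic (the paper writes $P', P''$ for your $Q, P'$, uses $d'=\min\{d,d^*\}$ to cover the case $d\le d^*$, and does not spell out the $\mathcal{C}_1$/$\mathcal{C}_2$ expectation-nesting point you rightly flag as harmless).
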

\endgroup

\begin{proof}
    By Theorem~\ref{thm:effective_shallow_depth}, we can choose $d^* = O(\log^D(n/\delta\varepsilon))$ such that a random circuit $\mathcal{C}'$ with depth greater than or equal to $d^*$ satisfies $\mathbb{E}_{\mathcal{C}'}\|Q-R\|_1 \le \delta\varepsilon/2$,
    where $Q$ and $R$ are the output distributions of $\mathcal{C}'$ with arbitrary input states, respectively. Then, for $d' = \min\{d, d^*\}$, we divide the given circuit $\mathcal{C}$ into two parts: $\mathcal{C} = \mathcal{C}_{2} \circ \mathcal{C}_{1}$, where $\mathcal{C}_{1}$ consists of the first $d - d'$ layers and $\mathcal{C}_{2}$ consists of the final $d'$ layers (here, note that if $d \le d^*$, we have $\mathcal{C}=\mathcal{C}_2$). Therefore, denoting $P'$ as the output distribution of $\mathcal{C}_{2}(\ketbra{0^n}{0^n})$, we have
    \begin{equation}
        \mathbb{E}_{\mathcal{C}_2}\|P - P'\| \le \delta\varepsilon/2.
    \end{equation}

    We now introduce another distribution $P''$ defined as
    \begin{equation}
        P'' = \prod_{i=1}^{|V|} P'_{X_i|N'(X_i)},
    \end{equation}
    with the coarse-graining with respect to $\ell$. Then, by Proposition~\ref{prop:patches-restated}, we can choose $\ell = O(\log(n/\delta\varepsilon))$ such that
    \begin{equation}
        \mathbb{E}_{\mathcal{C}}\|P' - P''\|_1 \le \delta\varepsilon/2.
    \end{equation}
    By the triangle inequality, we have
    \begin{equation}
        \mathbb{E}_{\mathcal{C}}\|P - P''\|_1 \le \delta\varepsilon,
    \end{equation}
    and Markov's inequality gives
    \begin{equation}
        \mathbb{P}\left[\|P-P''\|_1 \ge \varepsilon\right] \le \delta.
    \end{equation}
    In other words, with probability at least $1-\delta$, we have $\|P-P''\|_1 \le \varepsilon$.

    Finally, we can sample from $P''$ by the algorithm in Theorem~\ref{thm:shallow-depth_algorithm}, and the runtime of the algorithm is
    \begin{equation}
        \frac{n}{\ell^D} \cdot \exp\left(O(d'(\ell + 2d')^{D-1})\right) = \exp\left(O\left(\log^{D^2}(n/\delta\varepsilon)\right)\right),
    \end{equation}
    which is ${\rm poly}(n, 1/\varepsilon, 1/\delta)$ for $D=1$ and ${\rm quasipoly}(n, 1/\varepsilon, 1/\delta)$ for $D \ge 2$.
\end{proof}

\section{\label{sec:extended_results}Extended results beyond the grid geometry}

In this section, we present an algorithm that is applicable beyond the grid geometry. To this end, we consider a general graph $G=(V,E)$, where each vertex in $V$ corresponds to a qubit and edges represent the possible locations of gates between two qubits. Rather than using coarse-graining method and sampling partition-by-partition as in the main text, it is more convenient to sample it bit-by-bit. To this end, we prove the analog of Proposition~\ref{prop:patches-restated} for a general graph $G$. Enumerating each qubit with an arbitrary order, $V=\{v_1,\dots,v_n\}$, and denoting a ball with radius $\ell$ centered at $v_i$ as $B_\ell(v_i) = \{v_j \in V: d(v_i, v_j) \le \ell\}$, we have the following proposition.

\begin{proposition}
    Let $P$ be a probability distribution over $V$ that satisfies the $\eta(\ell)$-approximate Markov condition. Then
    \begin{equation}
        \left\| P - \prod_{i=1}^{n} P_{v_i|B'_\ell(v_i)} \right\|_1 \le O(n) \sqrt{\eta(\ell)},
    \end{equation}
    where $B'_\ell(v_i) = B_\ell(v_i) \cap v_{<i}$. In addition, if $P$ is the output distribution of a noisy random circuit $\mathcal{C}$ that satisfies the average $\eta(\ell)$-approximate Markov condition, then
    \begin{equation}
        \mathbb{E}_{\mathcal{C}} \left\| P - \prod_{i=1}^{n} P_{v_i|B'_\ell(v_i)} \right\|_1 \le O(n) \sqrt{\eta(\ell)}.
    \end{equation}
    \label{prop:patches-general}
\end{proposition}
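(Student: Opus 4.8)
The plan is to follow the proof of Proposition~\ref{prop:patches-restated} essentially verbatim, replacing the partition-by-partition decomposition with a qubit-by-qubit decomposition and the coarse-grained neighborhoods $N'(X_i)$ with the truncated balls $B'_\ell(v_i)$. The only structural input I need from the graph geometry is that $B_\ell(v_i)$ contains \emph{every} vertex within distance $\ell$ of $v_i$; consequently $v_{<i}\setminus B'_\ell(v_i)=v_{<i}\setminus B_\ell(v_i)$ consists exactly of the already-processed vertices at distance strictly greater than $\ell$ from $v_i$, so that $\dist\big(v_i,\,v_{<i}\setminus B'_\ell(v_i)\big)>\ell$ and the $\eta(\ell)$-approximate Markov condition applies to the tripartition $\big(v_i,\,B'_\ell(v_i),\,v_{<i}\setminus B'_\ell(v_i)\big)$.

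First I would establish the telescoping recursion: writing $v_{<j}=\{v_1,\dots,v_{j-1}\}$ and $v_{\le j}=\{v_1,\dots,v_j\}$, I claim that for $j=2,\dots,n$,
\begin{equation}
    \left\| P_{v_{\le j}} - \prod_{i=1}^{j} P_{v_i|B'_\ell(v_i)} \right\|_1 \le \sqrt{2\ln 2\cdot\eta(\ell)} + \left\| P_{v_{<j}} - \prod_{i=1}^{j-1} P_{v_i|B'_\ell(v_i)} \right\|_1.
\end{equation}
To prove this I would use the chain rule $P_{v_{\le j}}=P_{v_{<j}}P_{v_j|v_{<j}}$, insert the intermediate term $P_{v_{<j}}P_{v_j|B'_\ell(v_j)}$, and apply the triangle inequality exactly as in Proposition~\ref{prop:patches-restated}. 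The term comparing $P_{v_j|v_{<j}}$ with $P_{v_j|B'_\ell(v_j)}$ equals $\big\|P_{v_{\le j}}-P_{v_{<j}}P_{v_j|B'_\ell(v_j)}\big\|_1$, which is precisely the deviation of the true distribution from its conditionally-independent (``Markovized'') factorization on $v_{\le j}$; by Pinsker's inequality Eq.~\eqref{eq:pinsker} it is bounded by $\sqrt{2\ln 2\cdot I\big(v_j:(v_{<j}\setminus B'_\ell(v_j))\,|\,B'_\ell(v_j)\big)}\le\sqrt{2\ln 2\cdot\eta(\ell)}$, using the distance bound above. The remaining term factors as $\big\|\big(P_{v_{<j}}-\prod_{i<j}P_{v_i|B'_\ell(v_i)}\big)P_{v_j|B'_\ell(v_j)}\big\|_1$ and is bounded by $\big\|P_{v_{<j}}-\prod_{i<j}P_{v_i|B'_\ell(v_i)}\big\|_1$ via the data-processing inequality (appending $v_j$ through the conditional is a Markov kernel), yielding the recursion.

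Next I would sum the recursion over $j=2,\dots,n$. The telescoping collapses the right-hand side to $(n-1)\sqrt{2\ln 2\cdot\eta(\ell)}$ plus the base term $\big\|P_{v_1}-P_{v_1|B'_\ell(v_1)}\big\|_1$, which vanishes because $B'_\ell(v_1)=B_\ell(v_1)\cap v_{<1}=\emptyset$ forces $P_{v_1|B'_\ell(v_1)}=P_{v_1}$. This gives $\big\|P-\prod_{i=1}^n P_{v_i|B'_\ell(v_i)}\big\|_1\le (n-1)\sqrt{2\ln 2\cdot\eta(\ell)}=O(n)\sqrt{\eta(\ell)}$; the $n$ telescoping steps replace the $|V|=O(n/\ell^D)$ steps of the grid case, which is the sole source of the looser $O(n)$ (rather than $O(n/\ell^D)$) prefactor. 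For the averaged statement I would carry the expectation $\mathbb{E}_{\mathcal{C}}$ through the recursion and apply Jensen's inequality, $\mathbb{E}_{\mathcal{C}}\sqrt{I(\cdots)}\le\sqrt{\mathbb{E}_{\mathcal{C}}I(\cdots)}\le\sqrt{\eta(\ell)}$, exactly as in the averaged half of Proposition~\ref{prop:patches-restated}.

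I do not anticipate a genuine obstacle, since the argument is a mechanical transcription of Proposition~\ref{prop:patches-restated}; the only point requiring care is the geometric claim that the truncated ball $B'_\ell(v_i)$ screens $v_i$ from all previously-processed vertices beyond distance $\ell$, which is immediate from the metric definition of $B_\ell$ and is what licenses the single use of the Markov condition at each step. Note that the size of $B_\ell(v_i)$, which can grow with $\ell$ on a general graph, plays no role in this proposition, although it is what governs the runtime of the associated bit-by-bit sampling algorithm.
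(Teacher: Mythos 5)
Your proposal is correct and follows essentially the same route as the paper's own proof: the identical telescoping recursion via the chain rule, triangle inequality, Pinsker's inequality, and data processing, summed over $j=2,\dots,n$, with Jensen's inequality handling the averaged statement. Your added remark that $B_\ell(v_i)$ screens $v_i$ from all previously processed vertices at distance greater than $\ell$ is exactly the geometric fact the paper implicitly relies on when invoking the Markov condition.
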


\begin{proof}
    We begin by showing the following relation:
    \begin{equation}
        \left\| P_{v_{<(j+1)}} - \prod_{i=1}^{j} P_{v_i|B'_\ell(v_i)} \right\|_1 \le \sqrt{2\ln2 \cdot \eta(\ell)} + \left\| P_{v_{<j}} - \prod_{i=1}^{j-1} P_{v_i|B'_\ell(v_i)} \right\|_1
        \label{eq:prop7-telescoping-step}
    \end{equation}
    for all $j=2,\dots, n$. Then, as in Proposition~\ref{prop:patches-restated}, we can apply the triangle inequality, Pinsker's inequality, data processing inequality, and Markov condition to obtain it:
    \begin{equation}
    \begin{split}
        \left\|P_{v_{<(j+1)}} - \prod_{i=1}^{j} P_{v_i|B'_\ell(v_i)}\right\|_1
        &= \left\|P_{v_{<j}}P_{v_j|v_{<j}} - \left(\prod_{i=1}^{j-1} P_{v_i|B'_\ell(v_i)}\right)P_{v_j|B'_\ell(v_j)}\right\|_1\\
        &\le \left\|P_{v_{<j}}P_{v_{j}|v_{<j}} - P_{v_{<j}}P_{v_{j}|B'_\ell(v_{j})}\right\|_1
        + \left\|
            P_{X_{<j}}P_{X_{j}|B'_\ell(v_{j})} - \prod_{i=1}^{j-1} P_{X_i|N'(X_i)}P_{X_{j}|N'(X_{j})}
        \right\|_1\\
        &\le \sqrt{2\ln2I\left(v_{j}:\left(v_{<j}\setminus B'_\ell(v_{j})\right)|B'_\ell(v_{j})\right)} + \left\|
            \left(P_{v_{<j}} - \prod_{i=1}^{j-1} P_{v_i|B'_\ell(v_i)}\right)P_{v_{j}|B'_\ell(v_{j})}
        \right\|_1\\
        &\le \sqrt{2\ln2 \cdot \eta(\ell)} + \left\|
            P_{v_{<j}} - \prod_{i=1}^{j-1} P_{v_i|B'_\ell(v_i)}
        \right\|_1
    \end{split}
    \end{equation}
    Then we apply the telescoping sum to Eq.~\eqref{eq:prop7-telescoping-step} for $j=2,\dots, n$, which gives the first inequality in the proposition.

    The second inequality can be proved in a similar way with taking the expectation over the noisy random circuit $\mathcal{C}$ and Jensen's inequality, as in Proposition~\ref{prop:patches-restated}.
\end{proof}

We also have the analog of Theorem~\ref{thm:effective_shallow_depth} for a random circuit on a general graph $G$.

\begin{theorem}
    Let $\mathcal{C}$ be a depth-$d$ noisy random circuit on a graph $G=(V,E)$, and $P$ be the output distribution of $\mathcal{C}(\ketbra{0^n}{0^n})$. Suppose $P$ satisfies the average ${\rm poly}(n)\exp(-\Omega(\ell))$-approximate Markov condition and $\max_{v \in V}|B_\ell(v)|\le f(\ell)$ for all $\ell$, for some function $f$. Then, for any $\varepsilon > 0$, there exists $d^* = O(f(\log(n/\varepsilon)))$ such that
    \begin{equation}
        \mathbb{E}_{\mathcal{C}}\|P-Q\|_1  \le \varepsilon,
    \end{equation}
    for all $d \ge d^*$.
    \label{thm:effective_shallow_depth-general}
\end{theorem}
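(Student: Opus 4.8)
The plan is to mirror the proof of Theorem~\ref{thm:effective_shallow_depth}, replacing the hypercube coarse-graining by the bit-by-bit decomposition established in Proposition~\ref{prop:patches-general}. As in the grid case, we regard $P$ and $Q$ as the output distributions of $\mathcal{C}(\rho)$ and $\mathcal{C}(\sigma)$ for two arbitrary input states, and we assemble the bound from two ingredients: (i) a bit-by-bit analog of Lemma~\ref{lem:bounding_from_marginals} that controls $\mathbb{E}_{\mathcal{C}}\|P-Q\|_1$ through the distances between marginals supported on the balls $B_\ell(v_i)$, and (ii) Lemma~\ref{lem:indistinguishable_marginals}, which shows these marginals become indistinguishable with depth.

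First I would prove the following bit-by-bit version of Lemma~\ref{lem:bounding_from_marginals}:
\begin{equation}
\mathbb{E}_{\mathcal{C}}\|P-Q\|_1 \le \sum_{i=1}^{n}\Big(\mathbb{E}_{\mathcal{C}}\|P_{v_i \sqcup B'_\ell(v_i)} - Q_{v_i \sqcup B'_\ell(v_i)}\|_1 + \mathbb{E}_{\mathcal{C}}\|P_{B'_\ell(v_i)} - Q_{B'_\ell(v_i)}\|_1\Big) + O(n)\sqrt{\eta(\ell)}.
\end{equation}
The derivation is the same telescoping argument used for Lemma~\ref{lem:bounding_from_marginals}, now carried out vertex by vertex: one writes $P_{v_{<(j+1)}} = P_{v_{<j}}P_{v_j|v_{<j}}$, replaces the global conditional $P_{v_j|v_{<j}}$ by the local conditional $P_{v_j|B'_\ell(v_j)}$ via Pinsker's inequality, and bounds the resulting CMI by the Markov condition. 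The crucial geometric input is that $v_{<j}\setminus B'_\ell(v_j)$ contains only vertices at distance $>\ell$ from $v_j$, so that $\mathrm{dist}(v_j, v_{<j}\setminus B'_\ell(v_j)) \ge \ell$ and the $\eta(\ell)$-approximate Markov condition applies. The remaining manipulations---Proposition~\ref{prop:reducing_marginals} to reduce $P_{v_{<j}}$ to its marginal $P_{B'_\ell(v_j)}$, the data processing inequality, and Jensen's inequality to move the expectation inside the square root of each CMI term---carry over verbatim from the grid proof.

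Next I would invoke Lemma~\ref{lem:indistinguishable_marginals}. Since $v_i \sqcup B'_\ell(v_i) \subseteq B_\ell(v_i)$ and $|B_\ell(v_i)| \le f(\ell)$ by hypothesis, each marginal in the sum is supported on at most $f(\ell)$ qubits, so Lemma~\ref{lem:indistinguishable_marginals} yields the per-vertex bound $2^{O(f(\ell))}\exp(-\Omega(d))$. Summing over the $n$ vertices and using the assumed $\mathrm{poly}(n)\exp(-\Omega(\ell))$ Markov decay gives
\begin{equation}
\mathbb{E}_{\mathcal{C}}\|P-Q\|_1 \le n\cdot 2^{O(f(\ell))}\exp(-\Omega(d)) + \mathrm{poly}(n)\exp(-\Omega(\ell)).
\end{equation}
Choosing $\ell = O(\log(n/\varepsilon))$ drives the second term below $\varepsilon/2$, and then $d^* = O(f(\log(n/\varepsilon)))$ makes the first term below $\varepsilon/2$, which is exactly the claimed effective depth.

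I expect the genuinely new point---rather than a routine transcription of the grid proof---to be confirming that Lemma~\ref{lem:indistinguishable_marginals}, and through it Proposition~\ref{prop:mele} from Ref.~\cite{mele2024noiseinducedshallowcircuitsabsence}, remains valid on an arbitrary interaction graph $G$. Its proof is in fact geometry-independent: it relies only on the per-layer structure of a two-design of two-qubit gates followed by fixed non-unital single-qubit noise, so the Pauli-expectation contraction $\mathbb{E}_{\mathcal{C}}[\mathrm{Tr}(P(\rho'-\sigma'))^2]\le 4c^{|P|+d-1}$ holds for any such architecture. The only other delicacy is cosmetic: to state the depth cleanly as $O(f(\log(n/\varepsilon)))$ one uses that $f$ grows at least linearly, which holds for any connected $G$ since a radius-$\ell$ ball contains at least $\ell+1$ vertices, ensuring that $f(\log(n/\varepsilon))$ dominates the $\log(n/\varepsilon)$ contribution arising from the $n\exp(-\Omega(d))$ prefactor.
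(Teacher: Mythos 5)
Your proposal is correct and follows essentially the same route as the paper: the paper's own proof likewise runs the telescoping argument of Lemma~\ref{lem:bounding_from_marginals} vertex by vertex over the balls $B'_\ell(v_i)$, invokes Lemma~\ref{lem:indistinguishable_marginals} with the bound $|B_\ell(v_i)|\le f(\ell)$ to get the $2^{O(f(\ell))}\exp(-\Omega(d))$ per-vertex term, and then sets $\ell=O(\log(n/\varepsilon))$ and $d^*=O(f(\log(n/\varepsilon)))$. Your added remarks---that Proposition~\ref{prop:mele} is architecture-independent and that $f(\ell)\ge \ell+1$ for connected graphs---are sound elaborations of steps the paper leaves implicit, not a departure from its argument.
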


\begin{proof}
    First, we use the same procedure of the proof of Lemma~\ref{lem:bounding_from_marginals} to obtain
    \begin{equation}
        \mathbb{E}_{\mathcal{C}} \|P - Q\|_{1} 
        \le \sum_{i=1}^{n} \left(
            \mathbb{E}_{\mathcal{C}}\left\|P_{\{v_i\} \sqcup B'_\ell(v_i)} - Q_{\{v_i\} \sqcup B'_\ell(v_i)}\right\|_1
            + \mathbb{E}_{\mathcal{C}}\left\|P_{B'_\ell(v_i)} - Q_{B'_\ell(v_i)}\right\|_1
        \right)
        + O(n)\sqrt{\eta(\ell)}.
    \label{eq:bounding_from_marginals-general}
    \end{equation}
    for all $d \ge d^*$. Combining this with Lemma~\ref{lem:indistinguishable_marginals}, we have
    \begin{equation}
    \begin{split}
        \mathbb{E}_{\mathcal{C}} \|P - Q\|_{1} 
        &\le \sum_{i=1}^{n} \left(
            3 \cdot 2^{|B'_\ell(v_i)|} \exp(-\Omega(d))
        \right)
        + O(n)\sqrt{\eta(\ell)}\\
        &\le O(n) \cdot 2^{f(\ell)} \exp(-\Omega(d)) + O(n)\sqrt{\eta(\ell)}.
    \end{split}
    \end{equation}
    The second term becomes less than $\varepsilon/2$ by the choice of $\ell = O(\log(n/\varepsilon))$, and we can choose $d^* = O(f(\log(n/\varepsilon)))$ such that the first term is also less than $\varepsilon/2$, which concludes the proof.
\end{proof}

Therefore, the output distribution of a depth-$d$ noisy random circuit on a general graph $G$ can be approximated by the one of a depth-$d^*$ circuit with $d^* = O(\log(f(n/\varepsilon)))$. To establish the sampling algorithm for a general graph $G$, we introduce the local tree-width of a graph.

\begin{definition}[Local tree-width]
    Given a graph $G=(V,E)$, let $G[W]$ be the subgraph induced by a subset $W \subset V$. Then, the local tree-width of $G$ is defined as
    \begin{equation}
        \mathbf{ltw}(G, \ell) = \max_{v\in V} \mathbf{tw}\left(G[B_\ell(v)]\right),
    \end{equation}
    where $\mathbf{tw}(\cdot)$ is the tree-width of a graph.
    \label{def:local-tree-width}
\end{definition}

Ref.~\cite{MarkovShiSimulatingQuantumComputation2008} shows a tensor network simulation algorithm for a circuit on a graph $G$ whose tree-width is growing sublinearly with the number of qubits:

\begin{lemma}[Adapted from Ref.~\cite{MarkovShiSimulatingQuantumComputation2008}]
    Let $P$ be the output distribution of a quantum circuit on a graph $G=(V,E)$ with $T$ gates. Then, there exists a classical algorithm that samples from $P$ in time $T^{O(1)}\exp\left(O(\mathbf{tw}(G))\right)$.
    \label{lem:tree-width-simulation}
\end{lemma}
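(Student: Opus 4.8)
The plan is to prove the lemma by reducing the sampling problem to a polynomial number of marginal-probability evaluations, each carried out as a single tensor-network contraction whose cost is governed by the treewidth of $G$, exactly as in Ref.~\cite{MarkovShiSimulatingQuantumComputation2008}. First I would use the chain rule $P(x)=\prod_{i=1}^{n}P(x_i\mid x_{<i})$ to sample bit by bit: having fixed $x_{<i}$, one draws $x_i\in\{0,1\}$ from $P(x_i\mid x_{<i})=P_{\le i}(x_{\le i})/P_{<i}(x_{<i})$, so the whole task reduces to computing marginals $P_S(s)=\Pr[x_S=s]$ of the output distribution for the nested prefixes $S=\{v_1,\dots,v_i\}$.

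Second, I would write each marginal as a contraction of the \emph{doubled} circuit. Using $P_S(s)=\Tr[(\ketbra{s}{s}_S\otimes I_{\bar S})\,\mathcal{C}(\ketbra{0^n}{0^n})]$ and expanding $\mathcal{C}(\cdot)=C(\cdot)C^\dagger$ into gate tensors, this becomes a single closed tensor network: one copy of $C$ and one of $C^\dagger$, their input legs contracted against $\ket{0}$ states, the $S$-output legs projected onto $\ket{s}$, and the $\bar S$-output legs contracted directly between the two copies to realize the partial trace. Crucially, no explicit sum over the $2^{|\bar S|}$ untraced outcomes is ever performed; the trace is encoded purely in the network's wiring, so the object to be evaluated remains a single contraction.

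The core structural step is to bound the treewidth of the graph $H$ underlying this doubled network by $O(\mathbf{tw}(G))$: duplicating the circuit copies its graph and adds only local identifications at the input/output boundary, inflating the treewidth by at most a constant factor, while projecting or tracing output legs fixes or removes edges and cannot increase treewidth. Given a tree decomposition of $H$ of width $w=O(\mathbf{tw}(G))$, the contraction-along-the-decomposition algorithm of Ref.~\cite{MarkovShiSimulatingQuantumComputation2008} evaluates the network exactly in time $T^{O(1)}\exp(O(w))$, since the intermediate tensor at each of the $O(T)$ bags has at most $O(w)$ open legs of constant dimension and hence $\exp(O(w))$ entries.

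Finally, assembling these pieces, each of the $O(n)\le T^{O(1)}$ conditionals costs two such marginal contractions, giving a total runtime $T^{O(1)}\exp(O(\mathbf{tw}(G)))$. I expect the main obstacle to lie in the structural step: verifying carefully that the doubled, partially traced network inherits treewidth $O(\mathbf{tw}(G))$---with any circuit-depth dependence already subsumed into $\mathbf{tw}(G)$ rather than appearing as a separate exponential factor---and confirming that fixing an ever-growing prefix of output legs over the successive marginal computations never inflates the contraction width beyond this bound.
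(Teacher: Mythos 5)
The paper does not actually prove this lemma---it is imported as a black box from Markov and Shi---so there is no in-paper argument to compare against; your proposal is a correct reconstruction of the standard argument behind that citation (chain-rule sampling reduced to marginal evaluations, each marginal a single closed contraction of the doubled network, contraction cost controlled by a tree decomposition). The one point you rightly flag as the crux deserves emphasis: the lemma is only true if $\mathbf{tw}(G)$ denotes the treewidth of the circuit's time-expanded tensor-network graph (one vertex per gate), not of the qubit interaction graph---otherwise a depth-$n$ circuit on a line (interaction treewidth $1$) would be simulable in time $\mathrm{poly}(T)$, which is absurd. With that reading, your structural claims are sound: gluing two copies of the network along the output boundary at most doubles the width of any tree decomposition, and projecting or tracing legs only deletes edges, so the successive prefix-conditioned marginals never exceed width $O(\mathbf{tw}(G))$. (Incidentally, the paper's later application of this lemma with the \emph{local} treewidth of the interaction graph quietly drops a multiplicative depth factor in the exponent; your version, which keeps the depth subsumed in $\mathbf{tw}(G)$, is the defensible one.)
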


Using this lemma, we have the generalization of our main result, Theorem~\ref{thm:main_theorem}, for a general graph $G$.

\begin{theorem}
    Let $\mathcal{C}$ be a depth-$d$ noisy random circuit on a graph $G=(V,E)$, and $P$ be the output distribution of $\mathcal{C}(\ketbra{0^n}{0^n})$. Suppose $P$ satisfies the average ${\rm poly}(n)\exp(-\Omega(\ell))$-approximate Markov condition and $\max_{v \in V}|B_\ell(v)|\le f(\ell)$ for all $\ell$, for some function $f$. Then, there exists a classical algorithm that outputs a sample from $P'$ such that $\|P-P'\|_1 \le \varepsilon$ with probability at least $1-\delta$ over the choice of $\mathcal{C}$, in runtime $\exp\left\{O\left[\mathbf{ltw}(G, \log(n/\delta\varepsilon)+O(f(\log(n/\delta\varepsilon))))\right]\right\}$.
    \label{thm:shallow-depth_algorithm-general}
\end{theorem}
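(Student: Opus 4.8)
The plan is to follow the same three-step template as the proof of Theorem~\ref{thm:main_theorem}, but with the bit-by-bit factorization of Proposition~\ref{prop:patches-general} in place of the partition-by-partition one, and with the runtime controlled by local tree-width instead of a hypercube volume. First I would invoke Theorem~\ref{thm:effective_shallow_depth-general} to fix an effective depth $d^* = O(f(\log(n/\delta\varepsilon)))$ and decompose the circuit as $\mathcal{C} = \mathcal{C}_2 \circ \mathcal{C}_1$, where $\mathcal{C}_1$ comprises the first $d - d'$ layers and $\mathcal{C}_2$ the last $d' = \min\{d, d^*\}$ layers. Writing $P'$ for the output distribution of $\mathcal{C}_2(\ketbra{0^n}{0^n})$, the effective-shallow-depth bound gives $\mathbb{E}_{\mathcal{C}_2}\|P - P'\|_1 \le \delta\varepsilon/2$.

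Second, I would use Proposition~\ref{prop:patches-general} to approximate $P'$ by the product of local conditionals $P'' = \prod_{i=1}^{n} P'_{v_i|B'_\ell(v_i)}$. Choosing $\ell = O(\log(n/\delta\varepsilon))$ makes the $O(n)\sqrt{\eta(\ell)}$ error at most $\delta\varepsilon/2$. The triangle inequality then yields $\mathbb{E}_{\mathcal{C}}\|P - P''\|_1 \le \delta\varepsilon$, and Markov's inequality converts this expectation bound into the high-probability guarantee $\|P - P''\|_1 \le \varepsilon$ with probability at least $1 - \delta$ over the choice of $\mathcal{C}$.

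Third, I would exhibit the sampling algorithm for $P''$. Because $P''$ factorizes into local conditionals, I sample the qubits $v_1, \dots, v_n$ in order, drawing $v_i$ from $P'_{v_i|B'_\ell(v_i) = x}$ given the already-fixed values $x$ on $B'_\ell(v_i)$. Each such conditional is read off from the marginal of $P'$ on $\{v_i\} \cup B'_\ell(v_i) \subseteq B_\ell(v_i)$, which by the lightcone argument of Appendix~\ref{sec:preliminaries} depends only on the gates inside the backward lightcone of this region in the depth-$d'$ circuit $\mathcal{C}_2$. Since each layer enlarges the lightcone by one graph step, this lightcone is contained in $B_{\ell + d'}(v_i)$, so the marginal is computed by simulating the sub-circuit supported on the induced subgraph $G[B_{\ell+d'}(v_i)]$. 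Applying Lemma~\ref{lem:tree-width-simulation} to this sub-circuit, whose interaction graph has tree-width at most $\mathbf{tw}(G[B_{\ell+d'}(v_i)]) \le \mathbf{ltw}(G, \ell + d')$, gives a per-qubit cost $\exp(O(\mathbf{ltw}(G, \ell + d')))$. Multiplying by the $n$ qubits and substituting $\ell = O(\log(n/\delta\varepsilon))$ and $d' = O(f(\log(n/\delta\varepsilon)))$ yields the claimed runtime $\exp\{O[\mathbf{ltw}(G, \log(n/\delta\varepsilon) + O(f(\log(n/\delta\varepsilon))))]\}$, since the factor $n$ and the polynomial prefactors of Lemma~\ref{lem:tree-width-simulation} are absorbed into the exponential.

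The main obstacle I anticipate lies in the third step: establishing that the backward lightcone of $\{v_i\} \cup B'_\ell(v_i)$ sits inside a ball of radius $\ell + d'$, and that the contraction cost is governed by the \emph{local} tree-width $\mathbf{ltw}(G,\ell+d')$ rather than the global tree-width of $G$. This demands care in tracking how the depth-$d'$ circuit enlarges the support region and in checking that Lemma~\ref{lem:tree-width-simulation} applies to the restricted sub-circuit with its induced interaction graph. The first two steps, being probabilistic and distance-based, are direct transcriptions of the grid argument and should be routine.
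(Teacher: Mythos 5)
Your proposal matches the paper's proof essentially step for step: invoking Theorem~\ref{thm:effective_shallow_depth-general} to truncate to depth $d'=\min\{d,d^*\}$, applying Proposition~\ref{prop:patches-general} with $\ell=O(\log(n/\delta\varepsilon))$ plus Markov's inequality to get the high-probability guarantee, and then sampling bit-by-bit with each conditional computed by contracting the lightcone inside $G[B_{\ell+d'}(v_i)]$ via Lemma~\ref{lem:tree-width-simulation}. The lightcone containment and local-tree-width control you flag as the main obstacle are handled in the paper exactly as you describe, so no gap remains.
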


\begin{proof}
    By Theorem~\ref{thm:effective_shallow_depth-general}, we can choose $d^* = O(f(\log(n/\delta\varepsilon)))$ such that a random circuit $\mathcal{C}'$ with depth greater than or equal to $d^*$ satisfies $\mathbb{E}_{\mathcal{C}'}\|Q-R\|_1 \le \delta\varepsilon/2$,
    where $Q$ and $R$ are the output distributions of $\mathcal{C}'$ with arbitrary input states, respectively. Then, as in the proof of Theorem~\ref{thm:main_theorem}, we can show that for a distribution $P''$ defined as
    \begin{equation}
        P'' = \prod_{i=1}^{n} P'_{v_i|B'_\ell(v_i)},
    \end{equation}
    we have
    \begin{equation}
        \mathbb{E}_{\mathcal{C}}\|P - P''\|_1 \le \delta\varepsilon,
    \end{equation}
    with $\ell = O(\log(n/\delta\varepsilon))$, where $P'$ is an output distribution of a depth-$d'$ random circuit with $d' = \min\{d, d^*\}$. Further, Markov's inequality gives
    \begin{equation}
        \mathbb{P}\left[\|P-P''\|_1 \ge \varepsilon\right] \le \delta.
    \end{equation}
    In other words, with probability at least $1-\delta$, we have $\|P-P''\|_1 \le \varepsilon$.

    Finally, we give a classical algorithm that samples from $P''$. The algorithm proceeds as follows. Given qubits $v_1, \dots, v_{j-1}$ are sampled from $\prod_{i=1}^{j-1} P'_{v_i|B'_\ell(v_i)}$, let us denote the output in $B'_\ell(v_j)$ as $x\in \{0,1\}^{|B'_\ell(v_j)|}$. Then, the conditional distribution $P'_{v_j|B'_\ell(v_j)}$ can be computed by simulating $\{v_j\}\sqcup B'_\ell(v_j)$ and its lightcone. This corresponds to a depth-$d'$ circuit contained in a subgraph $G[B_{\ell+d'}(v_j)]$. By Lemma~\ref{lem:tree-width-simulation}, we can sample it in time $f(\ell+d')d' \cdot \exp\left(O\left(\mathbf{ltw}(G, \ell+d')\right)\right)$.

    Since $|B'_\ell(v_j)|$ never exceeds $n$, we can always choose $f \le n$. Then, the runtime for sampling $v_j$ is ${\rm poly}(n)\cdot \exp\left(O\left(\mathbf{ltw}(G, \ell+d')\right)\right)$. We repeat this procedure for $j=1, \dots, n$, we can sample from $P''$ in total runtime
    \begin{equation}
        {\rm poly}(n)\cdot \exp\left(O\left(\mathbf{ltw}(G, \ell+d')\right)\right)
        = \exp\left\{O\left[\mathbf{ltw}(G, \log(n/\delta\varepsilon)+O(f(\log(n/\delta\varepsilon))))\right]\right\}.
    \end{equation}
\end{proof}

In particular, if $G$ is a finite dimensional, i.e., $f(\ell) = {\rm poly}(\ell)$, then the runtime of the algorithm is quasipolynomial in $n$, $\varepsilon$, and $\delta$. Specifically, since $\mathbf{ltw}(G, \ell) \le f(\ell)$ for all $\ell$, the runtime of the algorithm is bounded by
\begin{equation}
    \exp\left(O\left(\mathbf{ltw}(G, {\rm polylog}(n/\delta\varepsilon))\right)\right) = \exp\left({\rm polylog}(n/\delta\varepsilon)\right).
\end{equation}

\section{\label{sec:simulation_details}Details of the numerical simulations}

In this section, we present details of the numerical simulations for the 1D Haar random circuits and 2D Clifford circuits. Further simulation results supplementing the main text are also provided.

\begin{figure}
    \centering
    \includegraphics[width=0.7\linewidth]{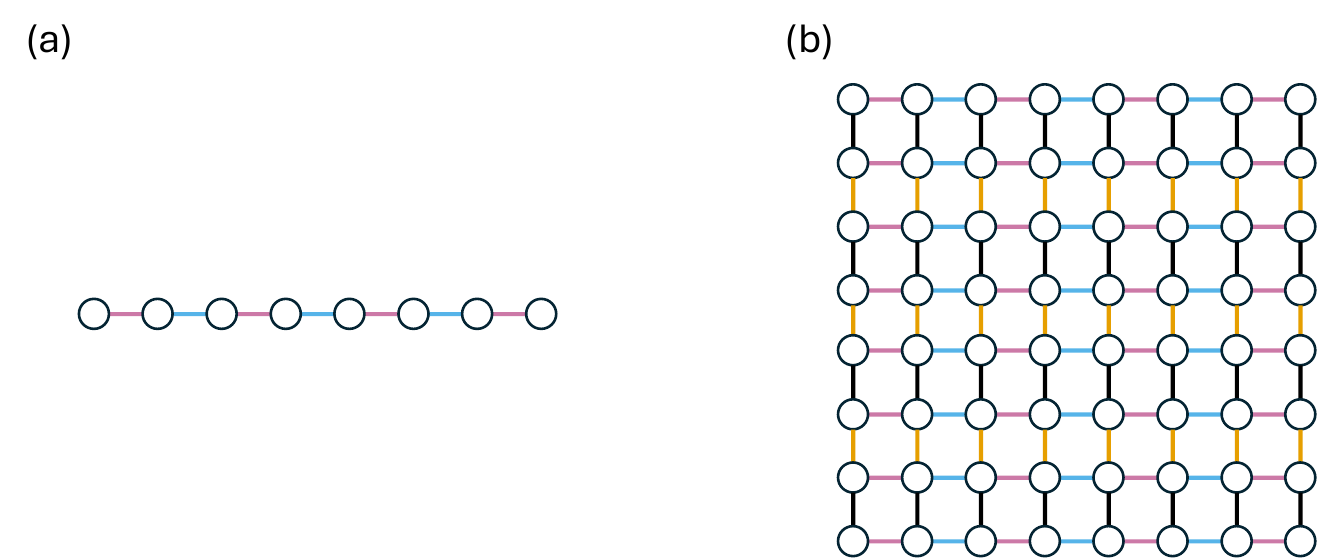}
    \caption{(a) For 1D Haar random circuits, each unitary layer $\mathcal{U}_i$ consists of two-qubit Haar random gates applied on $i$-th and $(i+1)$-th qubits for $i = 1, 3, \dots, n-1$ for even $t$ (qubits connected by purple lines) and $i = 2, 4, \dots, n-2$ for odd $t$ (qubits connected by blue lines). (b) For 2D Clifford random circuits, we apply alternately apply gates on qubits linked with purple lines (when $t = 1 \pmod 4$), orange lines (when $t = 2 \pmod 4$), blue lines (when $t = 3 \pmod 4$), and black lines (when $t = 0 \pmod 4$).}
    \label{fig:gate_patterns}
\end{figure}

\subsection{Methods for 1D Haar random circuits}

We consider 1D Haar random circuits with $n=32$ qubits arranged in a line. Given a depth-$d$ circuit $\mathcal{C} = \mathcal{N}^{\otimes n} \circ \mathcal{U}_d \circ \dots \circ \mathcal{N}^{\otimes n}\circ \mathcal{U}_{1}$, each unitary layer $\mathcal{U}_t$ consists of two-qubit Haar random gates applied on $i$-th and $(i+1)$-th qubits for $i = 1, 3, \dots, n-1$ (resp. $i = 2, 4, \dots, n-2$) when $t$ is even (resp. odd) [Fig.~\ref{fig:gate_patterns}(a)]. We choose the single-qubit noise channel $\mathcal{N}$ to be either an amplitude damping channel,
\begin{equation}
    \mathcal{N}_{\rm amp}(\rho) = K_0 (\rho) K_0^\dagger + K_1 (\rho) K_1^\dagger
\end{equation}
with $K_0 = \begin{pmatrix} 1 & 0 \\ 0 & \sqrt{1-\gamma} \end{pmatrix}$ and $K_1 = \begin{pmatrix} 0 & \sqrt{\gamma} \\ 0 & 0 \end{pmatrix}$, or a depolarizing channel,
\begin{equation}
    \mathcal{N}_{\rm depo}(\rho) = (1-\gamma) (\rho) + \gamma {\rm Tr}[\rho] I/2,
\end{equation}
which are representative noise models for non-unital and unital noise channels, respectively. Here, $0 < \gamma < 1$ denotes the noise rate in both examples.

We simulate the noisy random circuits using the matrix product state (MPS) method~\cite{Vidal2003EfficientClassicalSimulation}. Unlike a typical MPS simulation, which evolves the state vector, we need to keep track of the density matrix throughout the circuit. This is achieved by vectorizing the density matrix and represent it as a matrix product density operator (MPDO)~\cite{VerstraeteMatrixProductDensityOperators2004,Zwolak2004MixedStateDynamics, Noh2020efficientclassical}.

Specifically, we adapted the method in Refs.~\cite{Noh2020efficientclassical,LeeUniversalSpreading2024} with minor improvements. Consider a generic $n$-qubit density matrix $\rho$ written as
\begin{equation}
    \rho = \sum_{i_1,j_1, \dots, i_n,j_n} \rho_{i_1 j_1, \dots, i_n j_n} \ketbra{i_1, \dots, i_n}{j_1, \dots, j_n},
\end{equation}
where the indices $i_k$ and $j_k$ run over the computational basis $\{0,1\}$ for each qubit $k=1,\dots,n$. We can vectorize this density matrix by mapping each basis $\ketbra{i_k}{j_k}$ to a vector $\Ket{I_k}$ in a 4-dimensional space where $I_k = 2i_k + j_k$. Then, we can write the vectorized density matrix $\Ket{\rho}$ as
\begin{equation}
    \Ket{\rho} = \sum_{I_1, \dots, I_n} \rho_{I_1, \dots, I_n} \Ket{I_1, \dots, I_n}.
\end{equation}
Just like a state vector, MPDO represents a large-dimensional tensor $\rho_{I_1, \dots, I_n}$ as a product of small matrices $A^{[k]I_k}$ for each qubit $k=1,\dots,n$ and each index $I_k$:
\begin{equation}\label{eq:mpdo}
    \rho_{I_1, \dots, I_n} = \sum_{\alpha_1,\dots,\alpha_{n-1}} A^{[1]I_1}_{\alpha_1} A^{[2]I_2}_{\alpha_1\alpha_2} \dots A^{[n-1]I_{n-1}}_{\alpha_{n}\alpha_{n-1}} A^{[n], I_n}_{\alpha_{n-1}},
\end{equation}
where $\alpha_k$ are the bond indices connecting the different sites in the MPDO representation. For example, the initial state $\rho = \ketbra{0^n}{0^n}$ can be trivially represented with $1\times 1$ matrices
\begin{equation}
    A^{[k]I_k} = \begin{cases}
        \begin{pmatrix} 1 \end{pmatrix} & \text{if } I_k = 0,\\
        \begin{pmatrix} 0 \end{pmatrix} & \text{otherwise.}
    \end{cases}
\end{equation}
for $k=1,\dots,n$.

\subsubsection{MPDO update rules}

Beginning with $\ketbra{0^n}{0^n}$ state, the simulation proceeds by applying the unitary gates and the noise channels in alternating order. For simplicity, we denote a unitary channel $\mathcal{U}$ acting on $k$-th and $(k+1)$-th qubits combined with the noise channel as $\mathcal{M}$:
\begin{align}
    \mathcal{M} &= \mathcal{N}^{\otimes 2} \circ \mathcal{U}\\
    &= \sum_{I_k,I_{k+1},J_k,J_{k+1}} M_{I_k,I_{k+1},J_k,J_{k+1}} \Ket{I_k, I_{k+1}}\Bra{J_k, J_{k+1}}.
\end{align}
Here, denoting the vectorized indices as
\begin{align}
    I_k &= 2i_k + j_k,\\
    J_k &= 2i'_k + j'_k,\\
    I_{k+1} &= 2i_{k+1} + j_{k+1},\\
    J_{k+1} &= 2i'_{k+1} + j'_{k+1},
\end{align}
the matrix element $M_{I_k,I_{k+1},J_k,J_{k+1}}$ is given by
\begin{equation}
    M_{I_k,I_{k+1},J_k,J_{k+1}} = \bra{i_k, i_{k+1}}\mathcal{M}\left(\ketbra{i'_k i'_{k+1}}{j'_k j'_{k+1}}\right)\ket{j_k, j_{k+1}}.
\end{equation}

Before applying the channel $\mathcal{M}$, we convert MPDO in Eq.~\eqref{eq:mpdo} into the following canonical form for optimizing the numerical cost:
\begin{equation}\label{eq:canonical_form}
    \rho_{I_1, \dots, I_n} = \sum_{\alpha_1,\dots,\alpha_{n-1}} L^{[1]I_1}_{\alpha_1} L^{[2]I_2}_{\alpha_1\alpha_2} \dots L^{[k]I_k}_{\alpha_{k-1}, \alpha_{k}} \lambda^{[k]}_{\alpha_k} R^{[k+1]I_{k+1}}_{\alpha_{k}, \alpha_{k+1}} \dots R^{[n-1]I_{n-1}}_{\alpha_{n}\alpha_{n-1}} R^{[n], I_n}_{\alpha_{n-1}},
\end{equation}
where the matrices $L^{[1]I_1}, \dots, L^{[k]I_k}$ and $R^{[k+1]I_{k+1}}, \dots, R^{[n]I_n}$ satisfy
\begin{align}
    \sum_{I_1} L^{[1]I_1}_{\alpha_1} \left(L^{[1]I_1}_{\alpha'_1}\right)^* &= \delta_{\alpha_1, \alpha'_1},\\
    \sum_{\alpha_{l-1}, I_l} L^{[l]I_l}_{\alpha_{l-1}, \alpha_l}\left(L^{[l]I_l}_{\alpha_{l-1}, \alpha'_l}\right)^* &= \delta_{\alpha_l, \alpha'_l} \quad \text{for } l=2,\dots,k,\\
    \sum_{\alpha_{l-1}, I_l} R^{[l]I_l}_{\alpha_{l-1}, \alpha_l}\left(R^{[l]I_l}_{\alpha_{l-1}, \alpha'_l}\right)^* &= \delta_{\alpha_{l-1}, \alpha'_{l-1}} \quad \text{for } l=k+1,\dots,n-1,\\
    \sum_{I_1} R^{[n]I_n}_{\alpha_{n-1}} \left(R^{[n]I_n}_{\alpha'_{n-1}}\right)^* &= \delta_{\alpha_{n-1}, \alpha'_{n-1}},
\end{align}
and the elements of the vector $\lambda^{[k]}$ are non-negative real numbers. The conversion to this canonical form can be done by successive applications of singular value decompositions (SVDs). See, e.g., Sec.~4 in Ref.~\cite{SCHOLLWOCK201196} for the details of this canonical form.

After converting the MPDO to the canonical form, we apply the channel $\mathcal{M}$ to $\Ket{\rho}$. Denoting $\Ket{\rho'} = \sum_{I_1, \dots, I_n} \rho'_{I_1, \dots, I_n}\Ket{I_1\dots I_n}$ as the density matrix after applying $\mathcal{M}$, we have
\begin{equation}
    \rho'_{I_1, \dots, I_n} = \sum_{\alpha_1,\dots,\alpha_{k-1}, \alpha_{k+1},\dots,\alpha_{n-1}} L^{[1]I_1}_{\alpha_1} L^{[2]I_2}_{\alpha_1\alpha_2} \dots L^{[k-1]I_{k-1}}_{\alpha_{k-2}\alpha_{k-1}} A^{I_k, I_{k+1}}_{\alpha_{k-1}, \alpha_{k+1}} R^{[k+2]I_{k+2}}_{\alpha_{k+1}\alpha_{k+2}} \dots R^{[n-1]I_{n-1}}_{\alpha_{n}\alpha_{n-1}} R^{[n], I_n}_{\alpha_{n-1}},
\end{equation}
where the matrix $A^{I_k, I_{k+1}}_{\alpha_{k-1}, \alpha_{k+1}}$ is given by
\begin{equation}
    A^{I_k, I_{k+1}}_{\alpha_{k-1}, \alpha_{k+1}} = \sum_{J_k, \alpha_k ,J_{k+1}} M_{I_k,I_{k+1},J_k,J_{k+1}} L^{[k]J_{k}}_{\alpha_{k-1}, \alpha_{k}} \lambda^{[k]}_{\alpha_{k}} R^{[k+1]J_{k+1}}_{\alpha_{k}, \alpha_{k+1}}.
\end{equation}
We further perform SVDs on the matrices $A^{I_k, I_{k+1}}_{\alpha_{k-1}, \alpha_{k+1}}$, i.e.,
\begin{equation}\label{eq:svd_update}
    A^{I_k, I_{k+1}}_{\alpha_{k-1}, \alpha_{k+1}} = \sum_{\alpha_k} L'^{[k]I_k}_{\alpha_{k-1}, \alpha_k} \lambda'^{[k]}_{\alpha_k} R'^{[k+1]I_{k+1}}_{\alpha_{k}, \alpha_{k+1}}.
\end{equation}
to convert the MPDO back to the canonical form in Eq.~\eqref{eq:canonical_form}.

If the bond index $\alpha_k$ runs over $0, \dots, d_k-1$, the updated bond index $\alpha'_k$ can run over $0, \dots, 4d_k-1$. Therefore, for an exact simulation, we need to keep track of matrices with exponentially growing number of elements. To avoid this, we truncate the bond indices to a fixed size $\chi$ for $\alpha_1,\dots,\alpha_{n-1}$. This is done by keeping only the largest $\chi$ singular values $\lambda'^{[k]}_{\alpha_k}$ in Eq.~\eqref{eq:svd_update} and discarding the rest. The truncation is performed after every application of the channel $\mathcal{M}$, and the bond indices run only over $0, \dots, \chi -1$. Here, $\chi$ is called the bond dimension. With this truncation scheme, the computational cost of the simulation is $O(nd\chi^3)$.

For implementation, we used a Python tensor network package \texttt{quimb}~\cite{quimb2018}, and the source code for our simulation is available upon request.

\subsubsection{Benchmarking MPDO simulation}

\begin{figure}
    \centering
    \includegraphics[width=\linewidth]{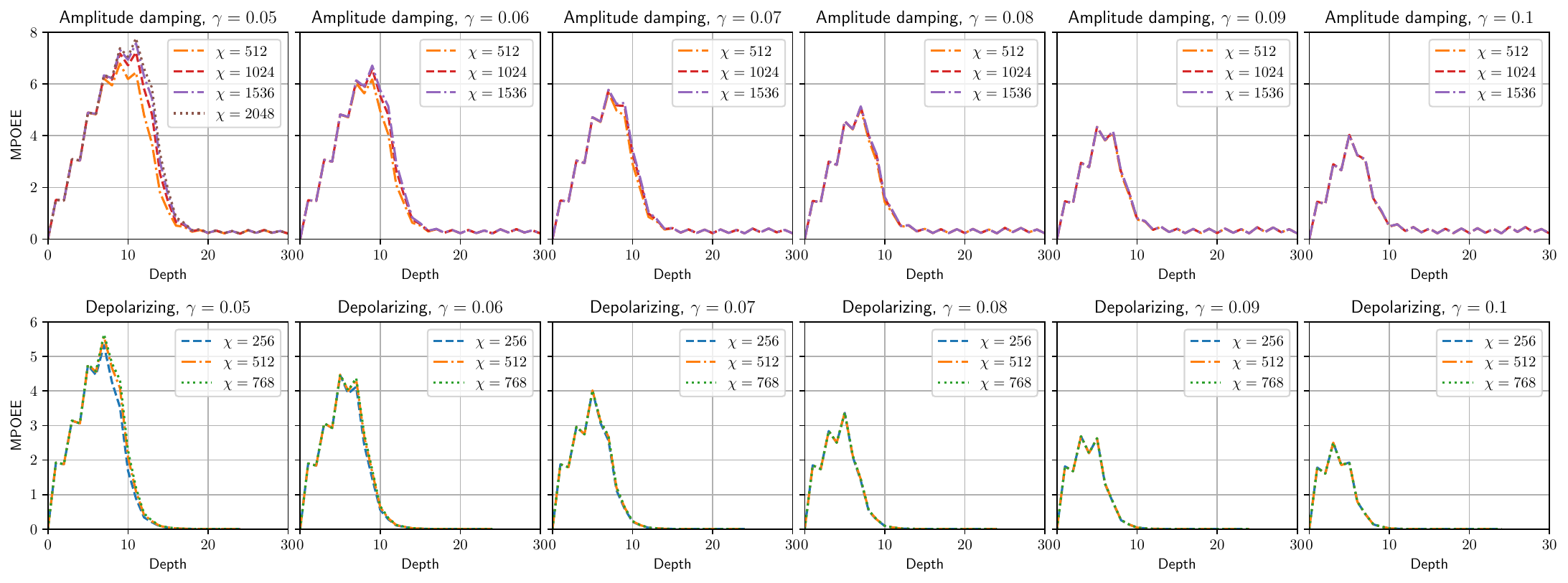}
    \caption{Benchmarking the performance of the classical simulation methods for 1D Haar random circuits. Plots in the upper panel present the MPOEE with various choices of the bond dimension $\chi$ for different noise rates $\gamma$, as a function of the circuit depth. The lower panel presents the same quantities but for the circuits with depolarizing channel. The MPOEE is calculated across the cut between the $16$-th and $17$-th qubits, and averaged over $4$ different circuit realizations.}
    \label{fig:numerical_benchmark}
\end{figure}

For MPS simulations for a state vector, the above truncation scheme is optimal in the sense that it minimizes the error in $\ell^2$ distance~\cite{SCHOLLWOCK201196}. However, much less is known for the MPDO simulation for evolving density matrices---while one needs to bound the error in trace distance, only a few measures of the error are known (e.g., \cite{guth2020efficientdescriptionMPDO}) that are not directly computable in practice.

To benchmark the performance of the MPDO simulation, therefore, we rely on the following heuristic argument. In a nutshell, the bond dimension $\chi$ controls how much correlation we can capture in the MPDO representation, and one need to choose exponentially large $\chi$ to capture the correlations to simulate an arbitrary state. To quantify the amount of correlation captured in the MPDO, we consider matrix product operator entanglement entropy (MPOEE)~\cite{Zanardi2001EntanglementQuantumEvolutions,Noh2020efficientclassical}. MPOEE across the cut between the $k$-th and $(k+1)$-th qubits is defined as
\begin{equation}
    S_k = -\sum_{\alpha_k=0}^{\chi-1} \frac{(\lambda^{[k]}_{\alpha_k})^2}{\sum_{\alpha'_k}(\lambda^{[k]}_{\alpha'_k})^2} \log_2 \left(\frac{(\lambda^{[k]}_{\alpha_k})^2}{\sum_{\alpha'_k}(\lambda^{[k]}_{\alpha'_k})^2}\right),
\end{equation}
where $\lambda^{[k]}_{\alpha_k}$ are the singular values obtained from the mixed canonical form in Eq.~\eqref{eq:canonical_form}. The MPOEE $S_k$ quantifies the amount of correlation between the qubits in $\{1,\dots, k\}$ and those in $\{k+1,\dots, n\}$. Therefore, if the bond dimension $\chi$ is too small, the MPDO representation cannot capture the correlation and thus increasing the bond dimension would result in a larger MPOEE. As we keep increasing $\chi$, the MPOEE will eventually saturate at a certain value, which is determined by the amount of correlation in the original state. In other words, if we choose the MPOEE to be large enough to represent the original state, further increasing $\chi$ would not change the MPOEE. Using this insight, we use the convergence of MPOEE as a proxy for the performance of the MPDO simulation.

For both amplitude damping and depolarizing noise channels, we benchmark the performance of the MPDO simulation by computing the MPOEE for various bond dimensions $\chi$ and noise rates $\gamma$. While MPOEE can be calculated across different cuts in the system, we focus on the representative cut between the equal bipartitions of the qubits, i.e., between the $16$-th and $17$-th qubits. We repeat the simulation for $4$ different circuit realizations and take the average the MPOEE over them. The results are shown in Fig.~\ref{fig:numerical_benchmark}. For the amplitude damping channel, we observe that the the bond dimension of $\chi = 1536$ is sufficient for $\gamma = 0.05$ and $\chi = 1024$ is sufficient for $\gamma = 0.06\text{--}0.1$, as further increasing $\chi$ does not change the MPOEE. For the depolarizing channel, we observe that the bond dimension of $\chi = 512$ is sufficient for $\gamma = 0.05\text{--}0.1$. Based on these results, we choose $\chi = 1536$ (resp. $\chi = 1024$) for the amplitude damping channel with $\gamma = 0.05$ (resp. $\gamma = 0.06\text{--}0.1$), and $\chi = 512$ for the depolarizing channel with $\gamma = 0.05\text{--}0.1$ in the numerical results presented in the main text.

\subsubsection{Estimating conditional mutual information}

Having obtained MPDO of Eq.~\eqref{eq:mpdo} that well-approximates the output density matrix $\rho$, we estimate CMI $I(X:Z|Y)$ of the corresponding output distribution $P$. To this end, we first obtain the output distribution $P$ by getting rid of the off-diagonal elements of the density matrix, i.e., plugging $I_k = 2i_k + i_k$ into Eq.~\eqref{eq:mpdo}:
\begin{align}
    P_{i_1,\dots,i_n} &= \rho_{3i_1,\dots,3i_n}\\
    &= \sum_{\alpha_1,\dots,\alpha_{n-1}} A^{[1]3i_1}_{\alpha_1} A^{[2]3i_2}_{\alpha_1\alpha_2} \dots A^{[n-1]3i_{n-1}}_{\alpha_{n}\alpha_{n-1}} A^{[n], 3i_n}_{\alpha_{n-1}},\label{eq:TT}
\end{align}
where $i_k \in \{0,1\}$ for $k=1,\dots,n$. We can compute CMI,
\begin{equation}
    I(X:Z|Y) = H(XY) + H(YZ) - H(XYZ) - H(Y),
\end{equation}
by estimating the Shannon entropies $H(XY)$, $H(YZ)$, $H(XYZ)$, and $H(Y)$ from the output distribution $P$. However, directly computing Shannon entropies from a multi-dimensional distribution can be challenging, even when the distribution is represented as a matrix product form of Eq.~\eqref{eq:TT}.

To address this issue, we develop a Monte Carlo (MC) method that leverages the matrix product representation of the output distribution. The key idea is that while the full distribution $P$ is high-dimensional, we can decompose the entropies into sums of lower-dimensional marginals. Specifically, if $X$ contains $k$ qubits with $X_1,\dots, X_k$ denoting their measurement outcomes, we can express the Shannon entropy $H(X)$ as
\begin{align}
    H(X) &= H(X_1,\dots,X_k)\\
    &= H(X_1) + H(X_2|X_1) + H(X_3|X_{<3})\dots + H(X_k|X_{<k})
\end{align}
where $X_{<i} = X_1 \dots X_{i-1}$. Since each conditional entropy is written as $H(X_i|X_{<i}) = \underset{x_{<i} \sim P_{X_{<i}}}{\mathbb{E}}\left[H(X_i|X_{<i}=x_{<i})\right]$, we have
\begin{equation}
    H(X) = \underset{(x_1,\dots,x_k) \sim P_X}{\mathbb{E}}\left[\sum_{i=1}^{k}H(X_i|X_{<i}=x_{<i})\right].
\end{equation}
Thus, we can estimate $H(X)$ with $\hat{H}(X)$ with $N$ MC samples $x^{(1)}, \dots, x^{(N)} \in \{0,1\}^k$ from the output distribution $P$:
\begin{equation}
    \hat{H}(X) = \frac{1}{N}\sum_{(x_1,\dots,x_k) \sim P_X} \sum_{i=1}^{k} H(X_i|X_{<i}=x_{<i}),
\end{equation}
where $H(X_i|X_{<i}=x_{<i})$ is computed from the conditional distribution $P_{X_i|X_{<i}=x_{<i}}$. Note that calculating $H(X_i|X_{<i}=x_{<i})$ only contains two probability masses,
\begin{equation}
    H(X_i|X_{<i}=x_{<i}) = -\sum_{x_i \in \{0,1\}} P_{X}(X_i = x_i|X_{<i}=x_{<i}) \log_2 P_{X}(X_i = x_i|X_{<i}=x_{<i}).
\end{equation}
Since the matrix product form of the output distribution $P$ allows us to efficiently compute the marginal and conditional probabilities, we can obtain $\hat{H}(X)$ efficiently.

With this MC method, we estimated the CMI $I(X:Z|Y)$ for the 1D Haar random circuits in the main text. For each circuit realization and depth, we used $N=1{,}000$ MC samples from the output distribution $P$. The final CMI is obtained by averaging over $64$ different circuit realizations.

\subsection{Methods for 2D Clifford circuits}

We consider 2D Clifford circuits with $n=32 \times 32$ qubits arranged in a square lattice. Labeling the qubits as $(i,j)$ for $i,j = 1, \dots, 32$, we consider a depth-$d$ circuit $\mathcal{C} = \mathcal{N}^{\otimes n} \circ \mathcal{U}_d \circ \dots \circ \mathcal{N}^{\otimes n}\circ \mathcal{U}_{1}$, where each unitary layer $\mathcal{U}_t$ consists of two-qubit random Clifford gates applied on the nearest-neighbor pairs of qubits. Specifically, for $t$-th layer, we apply the two-qubit Clifford gates according to the following patterns [Fig.~\ref{fig:gate_patterns}(b)]:
\begin{enumerate}
    \item[(i)] If $t = 1 \pmod 4$, apply gates on horizontal pairs $(i,j)$ and $(i+1,j)$ for $i = 1, 3, \dots, 31$ and $j=1, 2, \dots, 32$.
    \item[(ii)] If $t = 2 \pmod 4$, apply gates on vertical pairs $(i,j)$ and $(i,j+1)$ for $i = 1, 2, \dots, 32$ and $j=2, 4, \dots, 30$.
    \item[(iii)] If $t = 3 \pmod 4$, apply gates on horizontal pairs $(i,j)$ and $(i+1,j)$ for $i = 2, 4, \dots, 30$ and $j=1, 2, \dots, 32$.
    \item[(iv)] If $t = 0 \pmod 4$, apply gates on vertical pairs $(i,j)$ and $(i,j+1)$ for $i = 1, 2, \dots, 32$ and $j=1, 3, \dots, 31$.
\end{enumerate}
The single-qubit noise channel $\mathcal{N}$ is chosen to be a heralded reset channel or heralded depolarizing channel with noise rate $\gamma$, which are defined as follows:
\begin{align}
    \mathcal{N}_{\rm hreset}(\rho) &= \begin{cases}
        \ketbra{0}{0} & \text{with probability } \gamma,\\
        \rho & \text{with probability } 1-\gamma,
    \end{cases}\\
    \mathcal{N}_{\rm hdepo}(\rho) &= \begin{cases}
        I/2 & \text{with probability } \gamma,\\
        \rho & \text{with probability } 1-\gamma.
    \end{cases}
\end{align}

We simulate the noisy random circuits using the stabilizer formalism~\cite{gottesman1998heisenbergrepresentationquantumcomputers,AaronsonGottesmanImprovedSimulationStabilizer2004}, which is efficient for Clifford circuits with the noise channels described above. The stabilizer formalism allows us to succinctly represent the state of the circuit with an associated stabilizer group $\mathcal{S}$, which is an abelian subgroup of the $n$-qubit Pauli group with the constraint of $-I \notin \mathcal{S}$. Then, the corresponding state $\rho$ is given by
\begin{equation}
    \rho = \frac{1}{2^n} \sum_{s \in \mathcal{S}} s.
\end{equation}
In practice, we only keep track of the generating set $\mathcal{G}$ of the stabilizer group $\mathcal{S}$. Since $\mathcal{S}$ is an abelian subgroup of $n$-qubit Pauli group, the generating set contains at most $n$ elements, which is efficient to store and manipulate. With these generators, we $\rho$ can be alternately written as
\begin{equation}
    \rho = \prod_{g \in \mathcal{G}} \left(\frac{1+g}{2}\right).
\end{equation}

For implementation, we used a Julia package \texttt{QuantumClifford.jl}~\cite{QuantumCliffordjl}, and the source code for our simulation is available upon request.

\subsubsection{Stabilizer update rules}

Starting with the initial state $\rho = \ketbra{0^n}{0^n}$ that corresponds to the stabilizer group generated by $\mathcal{G} = \{Z_1, Z_2, \dots, Z_n\}$, we simulate the noisy Clifford random circuits by updating the stabilizer group accordingly. Given a state $\rho$ with the stabilizer group $\mathcal{S}$, applying a Clifford gate $U$ results in
\begin{equation}
    \rho' = U \rho U^\dagger = \frac{1}{2^n} \sum_{s \in \mathcal{S}} U s U^\dagger,
\end{equation}
which corresponds to the new stabilizer group $\mathcal{S}' = \{UsU^\dagger : s \in \mathcal{S}\}$. Since a Clifford gate maps Pauli operators to Pauli operators, $\mathcal{S}'$ also forms a stabilizer group.

The update rule for the heralded depolarizing channel is also efficiently described in stabilizer formalism. Suppose that the $i$-th qubit is heralded to be depolarized, i.e., $\rho' = {\rm Tr}_i[\rho] \otimes I_i/2$, where $\rho_i$ is the reduced density matrix of the $i$-th qubit. Then, the resulting $\rho'$ is given by
\begin{align}
    \rho' &= \frac{1}{2^n} \sum_{s \in \mathcal{S}} {\rm Tr}_i [s] \otimes I_i/2\\
    &= \frac{1}{2^n} \left(\sum_{s \in \mathcal{S}: s_i = I} {\rm Tr}_i [s] + \sum_{s \in \mathcal{S}: s_i \ne I} {\rm Tr}_i [s] \right)\otimes I_i/2 \\
    &= \frac{1}{2^n} \sum_{s \in \mathcal{S}: s_i =I} s,
\end{align}
where $s_i$ is the Pauli operator that $s$ acts on the $i$-th qubit. Therefore, the resulting state $\rho'$ corresponds to
\begin{equation}
    \mathcal{S}' = \{s \in S: s_i = I\},
\end{equation}
which also forms a stabilizer group. The effect of the heralded reset channel can be described similarly. Suppose that the $i$-th qubit is heralded to be reset, i.e., $\rho' = {\rm Tr}_i[\rho] \otimes \ketbra{0}{0}_i/2$, where $\rho_i$ is the reduced density matrix of the $i$-th qubit. Then, $\rho'$ is given by
\begin{align}
    \rho' &= \frac{1}{2^n} \sum_{s \in \mathcal{S}} {\rm Tr}_i [s] \otimes \ketbra{0}{0}_i\\
    &= \frac{1}{2^n} \left(\sum_{s \in \mathcal{S}: s_i = I} {\rm Tr}_i [s] + \sum_{s \in \mathcal{S}: s_i \ne I} {\rm Tr}_i [s] \right)\otimes \frac{I_i + Z_i}{2} \\
    &= \frac{1}{2^n} \sum_{s \in \mathcal{S}: s_i =I} (s + s Z_i),
\end{align}
which corresponds to the stabilizer group
\begin{equation}
    \mathcal{S}' = \{s \in S: s_i = I\} \cup \{s Z_i: s \in S, s_i = I \}.
\end{equation}
In this way, we obtain the the stabilizer group $\mathcal{S}$ corresponding to the output state of the noisy Clifford circuits.

\subsubsection{Entropy of the output distribution}

Once we obtain the stabilizer group $\mathcal{S}$ corresponding to the output state $\rho$, we calculate the conditional mutual information $I(X:Z|Y)$ of the output distribution $P$. First, the output distribution $P$ is expressed in terms of the density matrix $\rho'$ which is obtained by getting rid of the off-diagonal elements of $\rho$. Specifically, given that $\rho$ corresponds to the stabilizer group $\mathcal{S}$, we can write $\rho'$ as $\rho' = \frac{1}{2^n} \sum_{s \in \mathcal{S}'} s$ where
\begin{equation}
    \mathcal{S}' = \{s \in \mathcal{S}: s_i = I \text{ or } s_i = Z \text{ for } i=1,\dots,n\},
\end{equation}
i.e., the stabilizer group $\mathcal{S}'$ contains only the Pauli operators that are diagonal in the computational basis.

Since $\rho'$ is a diagonal matrix and each diagonal element corresponds to a probability $P$, Shannon entropy of $P$ is equal to the von Neumann entropy of $\rho'$. Furthermore, the von Neumann entropy $S(\rho')$ of $\rho'$ can be computed from the number of elements in the stabilizer group $\mathcal{S}'$~\cite{Audenaert_2005mixedstab}, given by
\begin{equation}\label{eq:vn_entropy}
    S(\rho') = n - \log_2 |\mathcal{S}'|.
\end{equation}
Incorporating the above results, we can compute the Shannon entropy $H(X)$ of the output distribution $P$. First, we obtain the reduced density matrix $\rho'_{X}$ by
\begin{align}
    \rho'_{X} &= \frac{1}{2^n} \sum_{s \in \mathcal{S}'} {\rm Tr}_{[n]\setminus X}[s]\\
    &= \frac{1}{2^n} \sum_{s \in \mathcal{S}': s_i = I \,\forall i \in [n] \setminus X} {\rm Tr}_{[n]\setminus X}[s]\\
    &= \frac{1}{2^{n-|X|}} \sum_{s \in \mathcal{S}': s_i = I \,\forall i \in [n] \setminus X} s_{X},
\end{align}
where $s_{X}$ is the restriction of $s$ to the qubits in $X$. Then, $\rho'_X$ corresponds to the stabilizer group
\begin{equation}
    \mathcal{S}'_X = \{s_X: s \in \mathcal{S}',\ s_i = I\ \forall i \in [n] \setminus X\}.
\end{equation}
By using Eq.~\eqref{eq:vn_entropy} and the relation $H(X) = S(\rho'_X)$, we can compute the CMI as
\begin{align}
    I(X:Z|Y) &= H(XY) + H(YZ) - H(XYZ) - H(Y)\\
    &= -\log_2 |\mathcal{S}'_{XY}| - \log_2 |\mathcal{S}'_{YZ}| + \log_2 |\mathcal{S}'_{XYZ}| + \log_2 |\mathcal{S}'_{Y}|.
\end{align}

\subsection{\label{sec:additional_numerics}Additional numerical results}

In the main text, we presented the numerical results of decaying CMI $I(X:Z|Y)$ for particular choices of $X$, $Y$, and $Z$ in both 1D Haar random circuits and 2D Clifford circuits. Here, we present additional numerical results for various sizes of $X$, $Y$, and $Z$ in both cases. In the main text, we presented the numerical results averaged over two (resp. four) consecutive depths for 1D Haar random circuits (resp. 2D Clifford circuits) thereby smoothing out the fluctuations in the CMI. Here, we present the numerical results for individual depths for both cases.

\begin{figure}
    \centering
    \includegraphics[width=\textwidth]{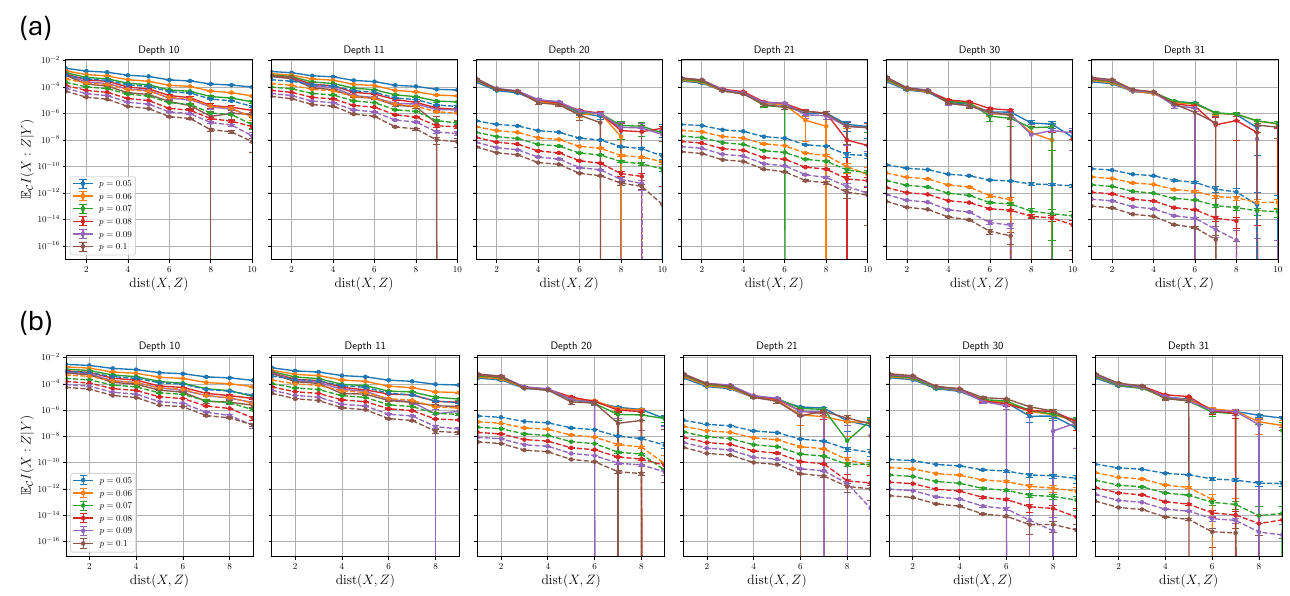}
    \caption{\label{fig:additional_numerics_1d}Additional numerical results for decaying CMI in 1D Haar random circuits where (a) $X$ is two qubits in the middle, and (b) $X$ is four qubits in the middle. For both plots, CMI is estimated with $1{,}000$ MC samples, and further averaged over $64$ circuit realizations.}
\end{figure}

\begin{figure}
    \centering
    \includegraphics[width=\textwidth]{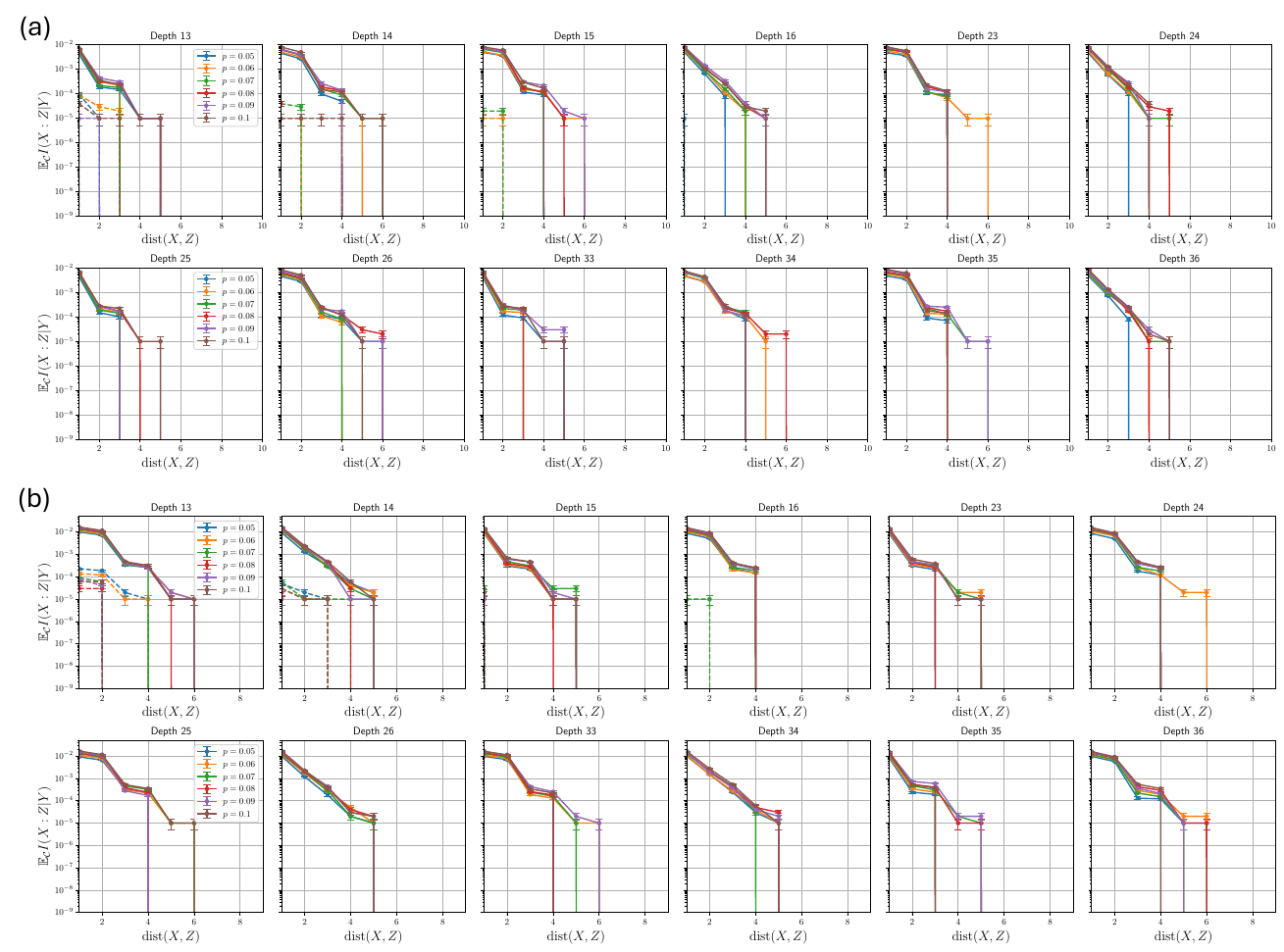}
    \caption{\label{fig:additional_numerics_2d}Additional numerical results for decaying CMI in 2D Clifford circuits where (a) $X$ is $2\times 2$ qubits in the center, and (b) $X$ is $4 \times 4$ qubits in the middle. For both plots, CMI is averaged over $100{,}000$ circuit realizations.}
\end{figure}

Fig.~\ref{fig:additional_numerics_1d} presents the additional numerical results for decaying CMI in 1D Haar random circuits, choosing $X$ as the two qubits in the middle [Fig.~\ref{fig:additional_numerics_1d}(a)] or four qubits in the middle [Fig.~\ref{fig:additional_numerics_1d}(b)]. All the results show a clear decay of CMI with the distance between $X$ and $Z$.

Fig.~\ref{fig:additional_numerics_2d} presents the results for decaying CMI in 2D Clifford circuits, choosing $X$ as the $2\times 2$ qubits in the center [Fig.~\ref{fig:additional_numerics_2d}(a)] or $4 \times 4$ qubits in the middle [Fig.~\ref{fig:additional_numerics_2d}(b)]. While the results for the heralded depolarizing channel is too small to be reliably estimated, the results for the heralded reset channel show a clear decay of CMI with the distance between $X$ and $Z$.

\section{\label{sec:effective_depth}Further discussion on the effective depth of noisy random circuits}

\begin{figure}
    \centering
    \includegraphics[width=\textwidth]{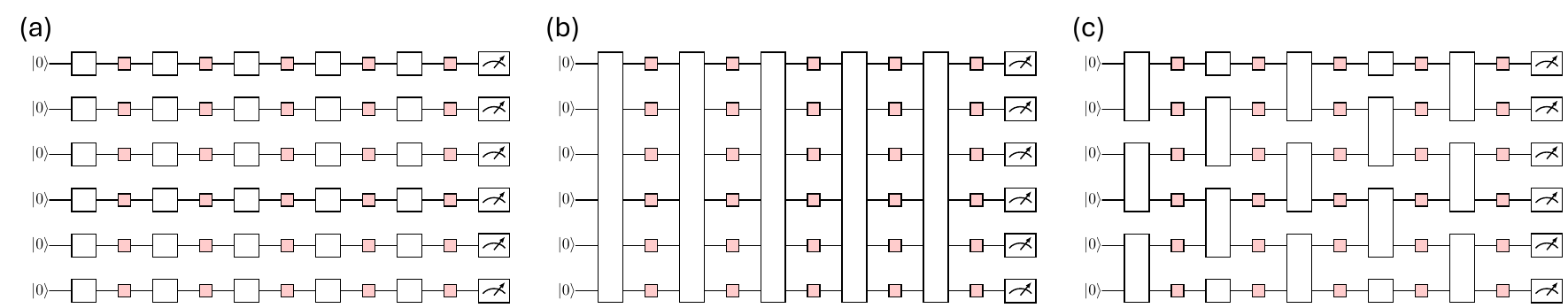}
    \caption{\label{fig:apdx-effective-depth}Illustration of different noisy random circuit models with alternating layers of unitary gates (white boxes) and noise channels (red boxes). (a) Random circuit with single-qubit gates. (b) Random circuit with global gates. (c) Random circuit with two-qubit gates applied on nearest-neighbor pairs of qubits.}
\end{figure}

In this section, we remark that while we establish that noisy random circuits become effectively shallow with $d^* = \mathrm{polylog}(n)$, it is expected that this can be improved to $d^* = O(\log n)$. To support this conjecture, we analyze two additional models of random circuits---single-qubit gate circuits and global gate circuits---in addition to our two-qubit gate model.

Specifically, consider a $n$-qubit depth-$d$ random circuit $\mathcal{C}$ consisting of alternating layers of unitary gates and noise channels,
\begin{equation}
    \mathcal{C} = \mathcal{N}^{\otimes n} \circ \mathcal{U}_d \circ \cdots \circ \mathcal{N}^{\otimes n} \circ \mathcal{U}_1,
\end{equation}
where in the single-qubit gate model, each $\mathcal{U}_t$ consists of independent Haar-random single-qubit unitaries [Fig.~\ref{fig:apdx-effective-depth}(a)], and in the global gate model, each $\mathcal{U}_t$ is an $n$-qubit Haar-random unitary acting on all qubits [Fig.~\ref{fig:apdx-effective-depth}(b)]. Our two-qubit gate model, in which each $\mathcal{U}_t$ consists of Haar-random two-qubit gates applied to nearest-neighbor pairs, interpolates between these two extremes [Fig.~\ref{fig:apdx-effective-depth}(c)].

First, we note that the effective depth of the random circuit with single-qubit gates is $d^* = O(\log n)$. To see this, consider a depth-$d$ circuit $\mathcal{C}$ and a truncated depth-$d^*$ circuit $\mathcal{C}'$ consisting of the last $d^*$ layers of $\mathcal{C}$. Let $P$ and $Q$ denote the output distributions of $\mathcal{C}(\ketbra{0^n}{0^n})$ and $\mathcal{C}'(\ketbra{0^n}{0^n})$, respectively. Since all gates are single-qubit ones, we have
\begin{equation}
    P = \prod_{j=1}^n P_j, \qquad Q = \prod_{j=1}^n Q_j,
\end{equation}
where $P_j$ and $Q_j$ are the output distributions for qubit $j$ in $\mathcal{C}(\ketbra{0^n}{0^n})$ and $\mathcal{C}'(\ketbra{0^n}{0^n})$, respectively.

We now remark that Proposition~\ref{prop:mele} that is from Ref.~\cite{mele2024noiseinducedshallowcircuitsabsence} is applicable to this single-qubit random circuit model, and thus Lemma~\ref{lem:indistinguishable_marginals} holds. Therefore, for each $j=1,\ldots,n$, we have
\begin{equation}
    \mathbb{E}_{\mathcal{C}}\|P_j - Q_j\|_{1} \le 2\exp(-\Omega(d^*)).
\end{equation}

Moreover,
\begin{align}
    \|P - Q\|_1 &= \left\|\prod_{j=1}^n P_j - \prod_{j=1}^n Q_j\right\|_1\\
    &\le \left\|P_1\prod_{j=2}^{n} P_j - Q_1\prod_{j=2}^n P_j\right\|_1 + \left\|Q_1\prod_{j=2}^{n} P_j - Q_1\prod_{j=2}^n Q_j\right\|_1\\
    &= \left\|P_1 - Q_1\right\|_1 + \left\|\prod_{j=2}^{n} P_j - \prod_{j=2}^n Q_j\right\|_1,
\end{align}
and by repeating this procedure for the remaining qubits ($j=2,\ldots,n$), we have $\|P - Q\|_1 \le \sum_{j=1}^n \|P_j - Q_j\|_1$. This leads to
\begin{equation}
    \mathbb{E}_{\mathcal{C}}\|P - Q\|_1 \le 2n\exp(-\Omega(d^*)),
\end{equation}
and therefore we can make $\|P - Q\|_1$ arbitrarily small by taking $d^*=O(\log n)$.

Second, for the random circuit with global gates, Ref.~\cite{shtanko2024complexitylocalquantumcircuits} shows that the effective depth is $d^* = O(1)$ at least when the noise channel is a generalized replacement channel,
\begin{equation}
    \mathcal{N}_{\mathrm{rep}; \sigma}(\rho) = (1-\gamma)\rho + \gamma\sigma,
\end{equation}
where $\sigma$ is an arbitrary single-qubit state and $0<\gamma\le 1$ is the noise rate. This channel may be unital ($\sigma=I/2$) or non-unital ($\sigma\neq I/2$). In particular:

\begin{theorem}[Adapted from Ref.~\cite{shtanko2024complexitylocalquantumcircuits}]
    Let $\mathcal{C}$ be a depth-$d$ random circuit with global Haar-random gates and generalized replacement noise with the noise rate $\gamma$. Then for arbitrary $n$-qubit states $\rho$ and $\sigma$,
    \begin{equation}
        \mathbb{E}_{\mathcal{C}}\|\mathcal{C}(\rho) - \mathcal{C}(\sigma)\|_1 =
        O\left(2^{n/2}\left(1-\tfrac{\gamma}{2}\right)^{n(d-1)/2}\right).
    \end{equation}
\end{theorem}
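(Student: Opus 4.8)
The plan is to bound the trace norm by the dimension-weighted Hilbert--Schmidt norm, pass to the second moment over the Haar-random gates, and exploit the fact that both the two-copy noise channel and the swap operator factorize across qubits. Write $\Delta = \rho - \sigma$ and $D = 2^n$; then $\Delta$ is traceless and Hermitian with $\mathrm{Tr}[\Delta^2]\le 2$ (since $\mathrm{Tr}[\rho\sigma]\ge 0$). Because $\Delta$ has rank at most $D$, Cauchy--Schwarz on its singular values gives $\|\mathcal{C}(\Delta)\|_1 \le D^{1/2}\|\mathcal{C}(\Delta)\|_2$, and Jensen's inequality yields
\begin{equation}
  \mathbb{E}_{\mathcal{C}}\|\mathcal{C}(\Delta)\|_1 \le 2^{n/2}\sqrt{\mathbb{E}_{\mathcal{C}}\,\mathrm{Tr}\!\left[\mathcal{C}(\Delta)^2\right]}.
\end{equation}
It therefore suffices to show that $\mathbb{E}_{\mathcal{C}}\,\mathrm{Tr}[\mathcal{C}(\Delta)^2]$ decays like $(1-\gamma/2)^{n(d-1)}$, after which the stated bound follows immediately.

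Next I would compute the second moment layer by layer. Let $\Delta_t$ be the operator after $t$ unitary-plus-noise layers; it remains traceless because $\mathcal{N}^{\otimes n}$ is trace preserving, which is the crucial structural fact. Conditioning on $\Delta_{t-1}$ and averaging the global Haar gate via the two-copy twirl,
\begin{equation}
  \mathbb{E}_{U}\!\left[(U\Delta_{t-1}U^\dagger)^{\otimes 2}\right] = \frac{\mathrm{Tr}[\Delta_{t-1}^2]}{D^2-1}\left(\mathrm{SWAP} - \tfrac{1}{D}I\right),
\end{equation}
where the identity-component coefficient vanishes precisely because $\Delta_{t-1}$ is traceless. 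Using $\mathrm{Tr}[B^2] = \mathrm{Tr}[(B\otimes B)\,\mathrm{SWAP}]$, this gives $\mathbb{E}_U\mathrm{Tr}[\Delta_t^2\mid\Delta_{t-1}] = \mathrm{Tr}[\Delta_{t-1}^2]\,(Q_{\mathrm{SWAP}} - \tfrac1D Q_I)/(D^2-1)$, with $Q_{\mathrm{SWAP}} = \mathrm{Tr}[(\mathcal{N}^{\otimes n}\otimes\mathcal{N}^{\otimes n})(\mathrm{SWAP})\,\mathrm{SWAP}]$ and $Q_I = \mathrm{Tr}[\mathcal{N}^{\otimes n}(I)^2]$. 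Since the $n$-qubit $\mathrm{SWAP}$ and the two-copy channel both factorize over qubits, $Q_{\mathrm{SWAP}} = q^n$ with the single-qubit value $q = \tfrac12\sum_{P\in\{I,X,Y,Z\}}\mathrm{Tr}[\mathcal{N}(P)^2]$. Writing $\sigma = (I + s_1X + s_2Y + s_3Z)/2$, the actions $\mathcal{N}(I) = I + \gamma(s_1X+s_2Y+s_3Z)$ and $\mathcal{N}(X),\mathcal{N}(Y),\mathcal{N}(Z) = (1-\gamma)(X,Y,Z)$ give $q = 1 + \gamma^2|\vec s|^2 + 3(1-\gamma)^2$.

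I would then close the recursion and bound $q$. Because the bracket depends on $\Delta_{t-1}$ only through $\mathrm{Tr}[\Delta_{t-1}^2]$, the tower property gives $\mathbb{E}_{\mathcal{C}}\,\mathrm{Tr}[\mathcal{C}(\Delta)^2] = r^{d}\,\mathrm{Tr}[\Delta^2]$ with $r = (Q_{\mathrm{SWAP}} - \tfrac1D Q_I)/(D^2-1) \le q^n/(D^2-1)$, using $Q_I\ge 0$. The one-line inequality $1 + \gamma^2|\vec s|^2 + 3(1-\gamma)^2 \le 4(1-\gamma/2)$, valid for all $\gamma\in(0,1]$ and $|\vec s|\le 1$ since its slack equals $4\gamma(1-\gamma)\ge 0$, yields $q^n \le 4^n(1-\gamma/2)^n$ and hence $r \le \tfrac{4^n}{4^n-1}(1-\gamma/2)^n$. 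Collecting,
\begin{equation}
  \mathbb{E}_{\mathcal{C}}\|\mathcal{C}(\Delta)\|_1 \le 2^{n/2}\sqrt{2}\left(\tfrac{4^n}{4^n-1}\right)^{d/2}(1-\gamma/2)^{nd/2},
\end{equation}
and since $(\tfrac{4^n}{4^n-1})^{d/2}\le e^{d/(2(4^n-1))} = O(1)$ in the relevant regime $d=\mathrm{poly}(n)$, this matches the claimed $O\!\left(2^{n/2}(1-\gamma/2)^{n(d-1)/2}\right)$, with the conservative exponent $(d-1)$ absorbing the harmless per-layer prefactors and the reference's noise-layer counting convention.

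The main obstacle is the second-moment computation under non-unital noise: because $\mathcal{N}(I) = I + \gamma(s_1X+s_2Y+s_3Z)$ mixes Pauli weights, the naive weight-counting argument that works cleanly for unital (depolarizing) noise breaks down. Routing the calculation through the $\mathrm{SWAP}$ twirl rather than a Pauli-weight expansion is what makes the non-unital contribution tractable, collapsing everything to the single-qubit scalar $q$ and the elementary bound $q\le 4(1-\gamma/2)$. The remaining care is bookkeeping: confirming that tracelessness is preserved across layers so the identity term in the twirl drops, and verifying that the $\tfrac{4^n}{4^n-1}$ and $\sqrt2$ prefactors are subsumed by the stated big-O.
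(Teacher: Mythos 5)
Your proof is correct, and it is worth noting up front that the paper itself does not prove this statement at all: it is imported as a black-box citation of Ref.~\cite{shtanko2024complexitylocalquantumcircuits}, so your derivation is a genuinely self-contained alternative rather than a reconstruction of an in-paper argument. The chain of steps all checks out: $\|\cdot\|_1 \le 2^{n/2}\|\cdot\|_2$ plus Jensen reduces the problem to the second moment; tracelessness of $\Delta_t$ is preserved because $\mathcal{N}^{\otimes n}$ is trace-preserving, which correctly kills the identity coefficient in the two-copy twirl $\mathbb{E}_U[(U A U^\dagger)^{\otimes 2}] = \frac{\Tr[A^2]}{D^2-1}(\mathrm{SWAP}-\tfrac1D I)$; the factorization $Q_{\mathrm{SWAP}}=q^n$ with $q=\tfrac12\sum_P\Tr[\mathcal{N}(P)^2]=1+\gamma^2|\vec s|^2+3(1-\gamma)^2$ is right for the generalized replacement channel; and $q\le 4(1-\gamma/2)$ holds since $4(1-\gamma/2)-q=\gamma\big(4-(3+|\vec s|^2)\gamma\big)\ge 0$ for $\gamma\le 1$, $|\vec s|\le 1$ (your phrase ``equals $4\gamma(1-\gamma)$'' is the worst case $|\vec s|=1$, but the inequality is what matters). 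Your exponent $nd/2$ is in fact slightly stronger than the stated $n(d-1)/2$, and the residual prefactor $(4^n/(4^n-1))^{d/2}$ is harmless for any $d$ up to $\Theta(\gamma n\, 4^n)$, so the big-$O$ claim follows in every regime the paper cares about. Two cosmetic points: you overload $\sigma$ for both the arbitrary input state and the replacement state of the channel, which should be disambiguated, and it would be cleaner to state explicitly that the per-layer contraction constant $r=(Q_{\mathrm{SWAP}}-Q_I/D)/(D^2-1)$ is independent of the conditioning, which is what licenses the exact geometric recursion $\mathbb{E}\,\Tr[\Delta_d^2]=r^d\,\Tr[\Delta_0^2]$.
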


Consequently, considering a depth $d (>d^*)$ random circuit $\mathcal{C}$ and another depth $d^*$ random circuit $\mathcal{C}'$ which consists of the last $d^*$ layers of $\mathcal{C}$, we have
\begin{equation}
    \mathbb{E}_{\mathcal{C}}\|\mathcal{C}(\ketbra{0^n}{0^n}) - \mathcal{C}'(\ketbra{0^n}{0^n})\|_1 =
    O\left(2^{n/2}\left(1-\frac{\gamma}{2}\right)^{n(d^*-1)/2}\right).
\end{equation}
Denoting $P$ and $Q$ as the output distributions of $\mathcal{C}(\ketbra{0^n}{0^n})$ and $\mathcal{C}'(\ketbra{0^n}{0^n})$, respectively, $\|P-Q\|_1$ is upper bounded by the above expression, and thus $\|P-Q\|_1$ can be made arbitrarily small by taking $d^* = O(1)$.

Finally, note that our two-qubit gate model interpolates between these extremes: single-qubit random circuits with $d^* = O(\log n)$ and global gate circuits with $d^* = O(1)$. It is therefore natural to expect that the effective depth of the two-qubit gate model lies between $O(1)$ and $O(\log n)$, and in particular is at most $O(\log n)$. We leave a rigorous proof of this claim for future work.

\end{document}